\documentclass[12pt,twoside]{article}

\usepackage[
letterpaper,
left=2.5cm,
right=2.5cm,
top=2cm,
bottom=2cm
]{geometry}

\usepackage{setspace}
\usepackage{graphicx,amssymb,amsmath,color}
\usepackage{mathrsfs}
\usepackage{bbm}
\usepackage{mathptmx}
\usepackage{multirow}
\usepackage{booktabs}
\usepackage{tikz}
\usepackage{bm}
\usepackage{color}
\usepackage{eurosym}
\usepackage{textcomp}
\usepackage{fancyhdr}
\usepackage{epsfig}
\usepackage{times}
\usepackage[latin1]{inputenc}
\usepackage[english]{babel}
\usepackage{amsmath}

\usepackage{amssymb}
\usepackage{amsmath}
\usepackage{graphicx}
\usepackage{amsthm}
\usepackage{natbib}
\usepackage{amssymb}
\usepackage{bbding}

\usepackage{array}
\usepackage{tabularx}

\newcolumntype{Y}{>{\raggedright\arraybackslash}X}

\bibpunct[, ]{[}{]}{,}{a}{}{,}

\newtheorem{remark}{Remark}
\newtheorem{theorem}{Theorem}
\newtheorem{lemma}{Lemma}
\newtheorem{proposition}{Proposition}
\newtheorem{assumption}{Assumption}
\newtheorem{corollary}{Corollary}
\newtheorem{example}{Example}
\newtheorem{definition}{Definition}

\newenvironment{proof-theorem}[1][\sc \textbf{Proof of Theorem 4}]{\textbf{#1.} }{\ \rule{0.5em}{0.5em}}

\begin{document}



\title{Endogenous Information Design and Informational Evolution in Dynamic Economic Systems with Incentives and Learning\thanks{I am grateful to Bo Chen, Zhen Huang, Qingmin Liu, Alessandro Pavan, Yingyi Qian, Ning Sun, Guofu Tan, Neng Wang, Yijiang Wang, Mingjun Xiao, Jun Zhang, and Sijia Zhang for helpful comments and suggestions.}}

\author{Guoqiang Tian\footnote{Department of Economics, Texas A\&M University, USA. Email: gtian@tamu.edu.}\\
}

\date{September 2026}
\maketitle

\begin{abstract}

This paper studies the endogenous informational evolution of dynamic economic systems. Building on Hurwicz's conceptual framework, such a system is defined as a dynamic joint adjustment process of resource and information allocation linked by recursive feedback. Economic theory gains precision by isolating particular margins, but when information production, disclosure, learning, incentives, actions, and future feasibility interact across these two processes, conclusions obtained by holding other margins fixed need not survive their joint determination. The paper develops a system-level dynamic equilibrium framework for this interaction. Holding particular links fixed yields familiar environments in dynamic mechanism design, information design, experimentation and learning, feedback and disclosure, and information acquisition. The analysis establishes three central failures relative to fixed or partially exogenous informational environments: informational efficiency under exogenous feasibility is generally incompatible with endogenous dynamic feasibility; incentive and disclosure design cannot generally be separated; and informational monotonicity can fail without implementation replicability. It also identifies the corresponding positive boundaries. Optimal disclosure may be interior or extreme; bang--bang disclosure arises under additional conditions, and a further specialization generates an endogenous transition from opacity to transparency. Information may therefore be optimally withheld even when its direct decision value is strictly positive. Extensions to private payoff-relevant information and limited commitment preserve the recursive structure.

\medskip

\noindent
\textbf{JEL Classification:} D82, D83, D86, C73, D81.

\noindent
\textbf{Keywords:} endogenous informational evolution, system-level economic analysis, endogenous information production, information design, dynamic mechanism design, incentive--disclosure nonseparability, informational monotonicity, dynamic feasibility.

\end{abstract}

\section{Introduction}

\par
Economic theory often studies choice, incentives, and institutional design within an informational environment that is treated as given. This is analytically indispensable: a theory becomes useful by isolating a particular margin and holding other margins fixed. But an economic system does not merely operate \emph{within} an informational environment. Economic activity may determine what information is produced; institutional rules determine what generated information becomes available and usable; learning changes beliefs; beliefs change incentives and decisions; and the resulting actions alter both future information and future economic opportunities. When these feedbacks are economically important, the informational environment is itself an endogenous outcome of the economic system.

\par
This paper studies economic systems at this level, defining an economic system as a dynamic joint adjustment process of resource and information allocation linked by recursive feedback. Markets, organizations, contracts, and institutions shape both processes: they govern resource allocation and how information is produced, disclosed, incorporated into beliefs, and used. Economic actions therefore change both informational conditions and future opportunities, which in turn shape subsequent incentives and actions. The central object is their joint dynamic evolution rather than optimization within a fixed informational environment.

\par
This distinction implies a corresponding methodological distinction between \emph{partial endogenization} and a \emph{system-level approach}. Modern economic theory has progressively endogenized many components that earlier models treated as fixed. Dynamic mechanism design endogenizes intertemporal incentives and continuation relationships. Information design endogenizes signal or disclosure structures. Experimentation and learning-by-doing make information production depend on actions. Information-acquisition and rational-inattention models endogenize precision, search, or attention. Dynamic disclosure and feedback design make the timing and content of information transmission strategic. Each of these theories appropriately endogenizes the margin needed for the question it studies.

\par
Partial endogenization, however, is not yet a system-level analysis when the endogenized component interacts materially with components that remain fixed. The distinction is not the number of endogenous variables, but whether relevant margins of resource and information allocation are studied jointly through recursive feedback. Information production affects what can be learned; disclosure affects the return to producing information; incentives affect the actions that generate information; those actions change posterior beliefs and future feasibility; and future beliefs and feasibility in turn change continuation incentives and future actions. When these interactions change the feasible set, the implementable set, or the welfare comparison, conclusions derived by considering the margins separately need not survive their joint determination.

\par
This logic also shows why integrating interacting margins can be essential. A useful illustration is \citet{BergemannHeumannMorris2026}, who jointly optimize mechanism and information design and show that solving the two problems separately can miss the joint optimum. Closely related, \citet{ElyGeorgiadisRayo2025} study dynamic incentives and feedback when effort affects the production of performance information, while future feasibility remains exogenous. Under the system-level approach developed here, resource feasibility and informational evolution are instead jointly endogenous: the same information-producing action may also change future economic opportunities. As a result, an informational frontier attainable under exogenous feasibility may become unattainable once feasibility evolves endogenously.

This perspective is rooted in Hurwicz's treatment of resource allocation and informational decentralization within a common adjustment-process framework. Since \citet{Hurwicz1960}, realization theory has asked whether social objectives can be supported through informationally decentralized processes and how the informational requirements of such processes can be compared \citep{Hurwicz1972a,Hurwicz1972b,MountReiter1974,Reiter1977,HurwiczReiter2006,Walker1977,Jordan1982,Tian2004,Tian2006}. Implementation theory adds the complementary incentive question of whether desired outcomes can be sustained when participants behave strategically \citep{Hurwicz1973,Maskin1977,GrovesLedyard1977,DasguptaHammondMaskin1979,Myerson1981,PostlewaiteSchmeidler1986,PalfreySrivastava1987,PostlewaiteWettstein1989,Tian1989,MooreRepullo1990,AbreuSen1990,Jackson1991}, while \citet{ReichelsteinReiter1988} show that incentive requirements can themselves alter the informational complexity required for decentralized implementation. Hurwicz did not develop the modern stochastic-control, dynamic-contracting, persuasion, or experimentation models used below. The connection is instead methodological: resource allocation and its informational requirements are treated as parts of a common adjustment process, making it natural to ask how information itself is generated and how the resulting informational state feeds back into subsequent allocation.

\par
The present paper takes a further step by studying resource allocation and informational evolution jointly within a recursive dynamic system. Specialized theories remain indispensable because they isolate particular mechanisms and generate sharp conclusions; the claim is not that every economic problem should be modeled at this level. A system-level treatment becomes relevant when interactions among those mechanisms determine feasibility, implementation, or welfare. The paper identifies such interactions formally and shows that their consequences cannot in general be recovered by treating information production, information use, incentives, and feasibility as separate blocks.

A minimal example makes the distinction transparent. Consider a firm that delegates an innovation or experimentation project to a manager. The manager chooses research, testing, or implementation intensity. This action generates evidence about an initially unknown environment, is costly, and may consume scarce testing capacity or other future opportunities. The firm chooses compensation and the rules governing how generated evidence is validated, disclosed, and incorporated into future decisions. More experimentation generates more information but may reduce future capacity. More disclosure makes generated information more useful for learning but also changes the continuation return to experimentation. Compensation affects the manager's willingness to undertake the information-producing action. The resulting posterior belief changes future decisions and continuation incentives, which in turn change subsequent experimentation.

The resulting system can be represented schematically as the feedback loop
\[
\text{action}
\longrightarrow
\left.
\begin{aligned}
&\text{information production}\to\text{disclosure}\to\text{beliefs}\\
&\text{future feasibility}
\end{aligned}
\right\}
\longrightarrow
\text{incentives}
\longrightarrow
\text{future action}.
\]
Thus action jointly determines what can be learned and what remains feasible; when the action represents research, experimentation, monitoring, search, or attention, it also determines information-acquisition intensity. Posterior beliefs and feasibility then feed through continuation incentives into future action.

\par
This example also clarifies three objects that must be kept analytically distinct. The first is the \emph{primitive economic and informational environment}: preferences, technologies, initial uncertainty, action sets, the information-production technology, and physical feasibility. The second is the \emph{mechanism}: transfers, disclosure rules, recommended or target actions, communication rules, and other institutional choices. The third is the \emph{equilibrium informational evolution}: the path of actions, signals, disclosure, posterior beliefs, transfers, and feasibility induced when participants respond strategically to the mechanism. The mechanism is designed; the informational evolution is induced. Confusing these two objects obscures what is actually endogenous.

\par
This distinction also changes the relevant meaning of implementation. In a fixed informational environment, one can ask whether a desired allocation or action rule can be supported subject to incentive constraints defined on that environment. Here the informational path itself must also be generated by the economic system. The designer cannot in general select an arbitrary desired posterior path and then attach incentives to it. A posterior path must be generated by an admissible information-production technology, by actions that participants can be induced to take, and by disclosure rules available to the institution; at the same time, those actions must preserve the feasibility required for continuation. An informational path is therefore economically relevant only if it is technologically attainable, informationally admissible, dynamically feasible, and strategically implementable.

\par
Continuation values provide the natural recursive connection among these components. Current action changes future information and feasibility; these changes alter continuation values; and continuation values determine current incentives. The recursive formulation therefore links current behavior to the endogenous informational and economic states that the behavior itself helps create. The dynamic-programming and filtering methods used to characterize this recursion are analytical tools rather than the source of the paper's novelty. Their role is to make the feedback structure sufficiently precise that feasibility, implementation, and optimal information use can be studied jointly.

\par
Seen from this perspective, the paper contributes at three related levels.

\par
First, and most fundamentally, the paper develops a system-level conceptual framework in which resource and information allocation evolve jointly through recursive feedback. Information is produced through economic activity, allocated and transformed through institutional arrangements, incorporated into posterior beliefs through learning, and used in decisions and continuation incentives. The relevant economic object is consequently their joint evolution rather than either process considered independently.

\par
Second, and as a consequence, the paper develops an analytical framework in which information production, disclosure, learning, incentives, actions, and feasibility are jointly determined within an implementable dynamic system. The framework separates the primitive environment, the mechanism, and the induced informational evolution and connects them through continuation values and endogenous state dynamics. Familiar benchmark environments arise when particular links in this feedback structure are held fixed: automatic disclosure moves the model toward experimentation and learning; exogenous information production toward information or disclosure design; fixed informational dynamics toward dynamic contracting; and the absence of strategic incentives toward controlled learning or information acquisition. The purpose is not to relabel these literatures as special cases, but to make explicit which feedback is suppressed when a particular component is treated as exogenous.

\par
Third, and most importantly, the paper shows that these system-level interactions change substantive economic conclusions. Three results identify dimensions along which conclusions obtained under fixed or partially exogenous informational environments need not survive once the relevant feedback is endogenous.

\par
The first concerns informational attainment. Relative to environments such as \citet{ElyGeorgiadisRayo2025}, in which information-producing activity affects learning and incentives while future feasibility remains exogenous, the system-level formulation here makes resource feasibility endogenous together with informational evolution. When information-producing activity consumes scarce future opportunities, informational efficiency under exogenous feasibility is generally incompatible with endogenous dynamic feasibility. The paper derives a general informational-feasibility bound linking attainable information to the remaining feasibility stock and time horizon. Under linear--Gaussian learning, it obtains the exact attainable terminal-precision set and the sharp boundary at which the exogenous-feasibility informational frontier remains attainable. The obstruction is technological rather than contractual: transfers, disclosure rules, and continuation promises cannot generate information that the remaining economic opportunities do not permit the system to produce.

\par
The second concerns the organization of design. Incentive design and disclosure design cannot generally be separated because disclosure changes the continuation return to information-producing action. A transfer and continuation-value plan that implements an action under one disclosure rule need not implement the same action under another. The paper establishes the failure of generally lossless sequential incentive--disclosure design, provides an exact pointwise separation criterion, and identifies sufficient conditions under which sequential design is valid. The result provides a system-level counterpart to the joint-design insight emphasized by \citet{BergemannHeumannMorris2026}: here nonseparability arises because the use of information changes the incentives governing the behavior that produces the information itself.

\par
The third concerns the value of a more informative technology. Blackwell-more-informative information production need not increase the principal's maximal implementable value. The conventional monotonicity argument relies on the ability to garble or ignore additional information and thereby reproduce outcomes attainable under a less informative technology. That argument can fail when institutional restrictions prevent the more informative environment from reproducing the informational and implementation path generated under the less informative one. The paper identifies \emph{implementation replicability} as the relevant positive condition: when implementable outcomes under the less informative technology can be reproduced under the more informative technology, informational monotonicity is restored.

\par
These three results share a common economic logic. Improving one component of the system can alter another component of the jointly feasible and implementable set. More experimentation may generate more information but consume future opportunities; more disclosure may improve learning but change the incentives supporting information-producing action; and a more informative technology may create additional information while changing the implementation requirements needed to use it. The relevant object is therefore the performance of the endogenous system as a whole, not the isolated quality of any one component.

\par
The disclosure analysis provides a complementary positive characterization. In the general model, optimal disclosure is determined by its contribution to optimized implementable continuation value and may be interior or extreme. Bang--bang disclosure can arise under nonlinear reduced disclosure objectives, while an affine reduced implementable disclosure Hamiltonian yields a general bang--bang characterization. Under the additional conditions stated in the paper, optimal disclosure has an endogenous two-phase structure in which an initial period of opacity is followed by transparency. The economic comparison is between the direct benefit created by additional public information and the implementation cost required to preserve incentives and participation. Thus information may have positive direct decision value and nevertheless be optimally withheld when its implementation cost is sufficiently large.

\par
A closed-form delegated-experimentation application illustrates the mechanism. Experimentation produces evidence about an unknown technology, is costly, and consumes scarce testing capacity. Disclosure determines whether generated evidence becomes publicly usable and changes the incentive to experiment. Under the stated conditions, an optimal path preserves capacity and remains inactive initially, then switches to maximal experimentation and full disclosure as the terminal decision approaches. The cutoff is generated jointly by informational benefits, experimentation costs, implementation requirements, and the value of remaining capacity rather than imposed as an exogenous disclosure pattern.

\par
The same logic applies whenever information-producing activity also changes future opportunities. Examples include delegated innovation and organizational learning, regulatory monitoring and reporting, lending and borrower monitoring, product experimentation and quality learning, and digital platforms in which behavior simultaneously generates data and changes future recommendations or interactions. In these environments, institutions shape not only resource allocation or information disclosure, but also the processes through which information is produced, incorporated into beliefs, and fed back into future behavior.

\par
The benchmark model uses a common but initially unobserved informational-environment state that governs information production rather than entering primitive payoffs directly. This differs from standard Bayesian mechanism-design models in which privately observed types directly affect preferences, valuations, costs, or feasible allocations. The benchmark isolates the endogenous information-production channel. Section~\ref{subsec:payoff_relevant_information_states} extends the framework to a composite state space in which private payoff-relevant information and informational-environment uncertainty coexist, so that reporting and action incentive constraints interact within the same dynamic economic system. The paper also studies limited commitment and the resulting continuation participation constraints.

\par
The model is formulated in continuous time. The agent chooses actions that affect signal production and a publicly observable feasibility state, while the principal chooses transfer and disclosure rules and seeks to implement a target action process. Posterior beliefs evolve through Bayesian updating based on disclosed information, and feasibility evolves through the accumulated consequences of past actions. The benchmark uses an action-revealing specification to isolate the endogenous informational-evolution mechanism. Genuinely hidden information-producing actions can be incorporated by introducing noisy observable performance and the associated inference and moral-hazard problem.

\par
The remainder of the paper proceeds as follows. Section~2 develops the system-level perspective more explicitly. It distinguishes partial endogenization from system-level analysis, uses a minimal economic environment to display the feedback among action, information production, disclosure, beliefs, continuation incentives, and feasibility, and then compares this structure with realization and implementation theory, dynamic mechanism design, information design, experimentation and learning, feedback and disclosure design, information acquisition, and joint mechanism--information design. Section~3 introduces the formal dynamic economic environment with endogenous information production, distinguishes resource allocation from informational allocation, and defines the mechanism and induced state dynamics. Section~4 characterizes implementable informational evolution through continuation values, pointwise incentive conditions, and induced state dynamics. Section~5 formulates the principal's recursive contracting problem, establishes the three main impossibility results and their positive boundaries, and develops the marginal-value decomposition. Section~6 characterizes optimal disclosure under general and affine reduced disclosure technologies, establishes the bang--bang and opacity-to-transparency results under the stated conditions, and derives comparative statics. Section~7 provides the closed-form delegated-experimentation application with scarce testing capacity. Section~8 studies private payoff-relevant information and limited commitment. Section~9 concludes by drawing out the broader implications of the system-level perspective and identifying open questions toward a more general theory of dynamic economic systems with endogenous informational evolution.

\section{Related Literature and System-Level Comparison}
\label{sec:related_literature}

The Introduction has already stated the paper's system-level perspective. The purpose of this section is therefore narrower: to locate the framework relative to the main literatures that endogenize different components of an informational economic system. The useful comparison is not simply whether ``information'' is endogenous, but which objects are primitive, which are designed, and which are induced through equilibrium behavior. The literatures below deliberately isolate different margins. The present paper studies an environment in which the interactions among those margins---information production, disclosure, posterior evolution, incentives, actions, and feasibility---are themselves economically consequential.

\subsection{Informational Systems, Realization, and Implementation}
\label{subsec:literature_realization_implementation}

The conceptual foundation is the informational perspective initiated by \citet{Hurwicz1960}, in which an economic system is studied as an informationally decentralized adjustment process. General equilibrium theory characterizes the joint determination of allocations and prices under specified preferences, technologies, and resources \citep{ArrowDebreu1954,McKenzie1954,ArrowHurwicz1958,Debreu1959,McKenzie1959,HurwiczUzawa1971}. Hurwicz's realization approach asks whether desired allocations can be generated through informationally decentralized processes and how the informational requirements of alternative processes can be compared \citep{Hurwicz1972a,Hurwicz1972b,MountReiter1974,Reiter1977,HurwiczReiter2006,Walker1977,Jordan1982,Tian2004,Tian2006}.

Implementation theory adds the incentive requirement that desired outcomes remain attainable when participants behave strategically \citep{Hurwicz1973,Maskin1977,DasguptaHammondMaskin1979,Myerson1979,Postlewaite1979,Myerson1981,Thomson1984,PostlewaiteSchmeidler1986,PalfreySrivastava1987,PalfreySrivastava1989,PostlewaiteWettstein1989,Tian1989,MooreRepullo1990,AbreuSen1990,Jackson1991}. \citet{ReichelsteinReiter1988} show that incentive requirements can raise the informational complexity needed for decentralized implementation relative to decentralized realization, while important public-good environments attain implementation with minimal message spaces \citep{GrovesLedyard1977,Walker1981,Tian1990,Tian1991}. Recent work also studies mechanism design when the informational environment itself is imperfectly known \citep{McLeanPostlewaite2025}.

These theories make information and incentives properties of the economic mechanism rather than invisible background assumptions. The present paper changes a different margin: the informational state on which adjustment and implementation operate is itself generated through economic activity. Actions produce evidence and alter future feasibility, while disclosure determines how evidence enters public beliefs. The one-principal--one-agent benchmark therefore abstracts from decentralized communication and minimal-message-space questions and focuses instead on the endogenous evolution of the informational environment.

The term \emph{informational efficiency} used later in this paper should accordingly be distinguished from informational efficiency in realization theory. Here it concerns attainment of an informational frontier under a specified information-production technology; in realization theory it concerns the informational or dimensional requirements of decentralized allocation processes. The common methodological point is that informational performance is part of the economic performance of a mechanism.

\subsection{Dynamic Mechanism Design and Recursive Contracting}
\label{subsec:literature_dynamic_mechanism}

The paper is closely related to dynamic mechanism design and recursive contracting. This literature studies continuing regulatory relationships, sequential screening, long-term contracting with evolving private information, dynamic moral hazard, and intertemporal implementation \citep{BaronBesanko1984,CourtyLi2000,Battaglini2005,Sannikov2008,AtheySegal2013}. \citet{PavanSegalToikka2014} develop a general Myersonian approach to dynamic mechanism design with evolving private information, while \citet{GarrettPavan2015} provide a variational approach to dynamic managerial compensation. Related work studies dynamic institutional and network effects \citep{MengTian2021,MengSunTian2022}.

These models endogenize transfers, allocations, reports, actions, continuation utilities, and dynamic incentive constraints. Current actions or reports may also affect future states and information. The information-production and observation technologies, however, are generally specified as part of the contracting environment. The present framework separates an additional institutional margin: action generates evidence, while disclosure determines how that evidence becomes part of the public informational state. The same action may also alter future feasibility. Disclosure can therefore affect the continuation return to information-producing action and hence implementation itself.

Continuation values play the familiar recursive role of summarizing future incentive consequences, while posterior beliefs and feasibility states summarize what the system has learned and what economic opportunities remain. Methodologically, the paper draws on recursive contracting and dynamic agency \citep{SpearSrivastava1987,Sannikov2008,Williams2011}, continuous-time agency \citep{HolmstromMilgrom1987,SchattlerSung1993,He2011,CvitanicPossamaiTouzi2018}, and filtering and stochastic control \citep{LiptserShiryaev2001,BainCrisan2009,YongZhou1999,FlemingSoner2006}. These methods provide the analytical infrastructure for characterizing an implementable informational process generated jointly by action, disclosure, incentives, and feasibility.

The marginal-value analysis is similarly related to continuation-value and envelope methods in dynamic mechanism design \citep{PavanSegalToikka2014,GarrettPavan2015}. The additional object here is the informational state itself: a change in public information can improve decisions while simultaneously changing the adjustments required to preserve implementation.

\subsection{Information Design, Disclosure, and Joint Design}
\label{subsec:literature_information_design}

The paper is also related to information design and Bayesian persuasion. Foundational work on statistical experiments and information structures includes \citet{Blackwell1953} and \citet{AumannMaschler1995}. Modern information-design approaches study how a designer selects signals, experiments, posterior distributions, or informational environments subject to Bayes plausibility and obedience constraints \citep{KamenicaGentzkow2011,BergemannMorris2016,BergemannMorris2019}.

A related dynamic literature studies persuasion, strategic information provision, disclosure, feedback, ratings, and informational intermediation \citep{Ely2017,RenaultSolanVieille2017,HornerSkrzypacz2017,ElySzydlowski2020,HornerLambert2021,Ball2023,BergemannBonattiSmolin2018,AliEtAl2022,ChopraEly2025,ChopraEly2026,ElyGeorgiadisRayo2025,HornerSamuelson2026}. These models show that the timing and content of information can affect strategic behavior, continuation incentives, and welfare. Particularly related, \citet{HornerSamuelson2026} show that providing information can distort incentives and that withholding information may therefore be beneficial.

\par
Within this literature, \citet{ElyGeorgiadisRayo2025} provide a particularly close comparison on the dynamic information--incentive margin. In their model, hidden effort endogenously affects the arrival of a coarse, at-most-once performance signal under a primitive information-production technology, while the principal jointly designs incentives and feedback. Information production and feedback are therefore endogenous, but the agent does not separately choose an experiment, signal precision, or attention, and future feasibility is taken as given. The present framework shares this action-dependent information-production channel but also allows the action itself to represent information-acquisition intensity, permits the informational state to evolve continuously, and lets the same action alter future economic opportunities. This broader coupling among information production or acquisition, disclosure, incentives, learning, and feasibility can make an informational frontier attainable under exogenous feasibility unattainable once feasibility evolves endogenously.

The main distinction concerns how information becomes available for design. In canonical information-design models, an experiment, signal structure, disclosure policy, or posterior distribution is a direct design object, while the underlying environment is specified. In the present framework, raw evidence must first be generated through economic action. Disclosure is designed, but the public posterior process is induced jointly by information production, disclosure, incentives, and Bayesian updating, and the action producing information may also change future feasibility.

A complementary comparison is the joint mechanism--information design framework of \citet{BergemannHeumannMorris2026}. They show that mechanism and information design can be nonseparable and should in general be optimized jointly rather than sequentially. The present paper shares this nonseparability logic but studies a different architecture. The principal chooses transfers and disclosure, whereas information production is generated by the agent's action and may alter future feasibility. The informational state is therefore not simply another design variable; it is an equilibrium process generated by the economic system.

This difference also affects comparisons across information technologies. In standard settings with free informational disposal, a more informative technology can reproduce a less informative one through garbling or nonuse. Here the relevant question is whether the less informative technology's \emph{implementable} path can be reproduced. Institutional restrictions may prevent such replication because greater informativeness changes public learning and hence continuation incentives. The positive condition identified later is therefore implementation replicability rather than Blackwell dominance alone.

The disclosure results arise from the same interaction. Optimal disclosure is determined by its contribution to the implementable continuation value rather than by the direct decision value of information alone. General disclosure may be interior or extreme; additional structure yields bang--bang disclosure and, under further conditions, an endogenous transition from opacity to transparency.

\subsection{Experimentation, Learning, and Information Acquisition}
\label{subsec:literature_experimentation_acquisition}

The paper is related to learning-by-doing, experimentation, bandit problems, and endogenous informational accumulation. Early contributions include \citet{Arrow1962} and \citet{Rothschild1974}. Subsequent work studies strategic experimentation, learning-by-doing, bandit allocation, and learning in industries and markets \citep{BoltonHarris1999,KellerRadyCripps2005,BergemannValimaki2008,JovanovicNyarko1996,
EricsonPakes1995,BesankoWu2013,GuoHeLiu2023}. A central feature of these models is that economic activity changes what can be learned. A related contracting literature studies how incentives shape experimentation and innovation, including \citet{Manso2011} on incentives for exploration and innovation and \citet{HalacKartikLiu2016} on optimal contracting for experimentation.

\par
The present framework shares this endogenous information-production channel and can also
incorporate information acquisition when the action represents search, monitoring,
experimentation, attention, or precision investment. In that case, information-acquisition
intensity is endogenous through action, while the technology mapping action into evidence
remains primitive. Generated evidence need not enter the public posterior automatically:
disclosure governs its informational use, transfers determine whether the information-producing action is implementable, and the same action may also change future feasibility. The attainable informational path therefore depends jointly on information production or acquisition, informational usage, incentives, and remaining economic opportunities.

The paper is also related to information acquisition, endogenous precision choice, costly search, and rational inattention. Early work emphasizes costly search and the value of information \citep{Stigler1961,McCall1970,Arrow1974,GrossmanStiglitz1980}. Subsequent contributions study endogenous information acquisition and precision in markets and organizations \citep{Verrecchia1982,Persico2000,Veldkamp2006,Veldkamp2011}, while rational-inattention models make attention and information processing endogenous \citep{Sims2003,Woodford2009,Moscarini2004,MackowiakWiederholt2009,MatejkaMcKay2015,CaplinDeanLeahy2022,HebertWoodford2023}.

Related contracting models study strategic or covert information acquisition
\citep{CremerKhalil1992,LewisSappington1994,BergemannValimaki2002,
JohnsonMyatt2006,Szalay2009}. \citet{BergemannValimaki2002} study the interaction
between costly information acquisition and efficient mechanism design, while in
\citet{Szalay2009}, for example, an agent privately selects an information structure
before reporting to the principal, so information acquisition and incentive provision
interact. The present framework instead considers information that accumulates
continuously through economic activity, with disclosure separately controlled by the
principal and feasibility potentially affected by the same action.

Experimentation and information-acquisition models therefore provide the main foundations for endogenous information production. The present framework adds the institutional determination of informational use and the requirement that the resulting informational path be dynamically implementable.

\subsection{Comparison of Frameworks}
\label{subsec:literature_comparison}

Table~\ref{tab:framework_comparison} summarizes the comparison. The entries describe the principal emphasis of canonical formulations rather than every contribution within each literature. The key distinction is among information production, informational design or disclosure, incentive design, posterior evolution, and feasibility dynamics.

\begin{table}[!htbp]
\centering
\scriptsize
\setlength{\tabcolsep}{3pt}
\renewcommand{\arraystretch}{1.12}

\caption{Comparison of Informational Frameworks}
\label{tab:framework_comparison}

\begin{tabularx}{\textwidth}{
    >{\raggedright\arraybackslash}p{2.25cm}
    Y Y Y Y Y
}
\hline
Framework
& Information production
& Disclosure or information structure
& Incentive mechanism
& Posterior evolution
& Feasibility dynamics
\\
\hline

Realization theory
& Usually primitive or not central
& Usually primitive
& Not central in the canonical formulation
& Usually not central
& Usually primitive
\\[0.35em]

Implementation and mechanism design
& Usually primitive
& Usually primitive
& Designed to implement desired outcomes
& Usually induced under a given information structure
& Usually primitive
\\[0.35em]

Dynamic mechanism design
& Usually primitive; sometimes action-dependent
& Usually primitive
& Designed subject to dynamic incentive constraints
& Induced under a specified informational environment
& Sometimes endogenous
\\[0.35em]

Information design
& Usually primitive
& Designed
& Given or incorporated through obedience constraints
& Designed or induced by the experiment
& Usually absent
\\[0.35em]

Joint mechanism--information design
& Usually primitive
& Jointly designed
& Jointly designed
& Induced by the joint design
& Usually absent
\\[0.35em]

Dynamic moral hazard and feedback design
& Endogenous through effort
& Designed through feedback
& Jointly designed with incentives
& Induced by effort and feedback
& Usually exogenous or absent
\\[0.35em]

Experimentation and learning-by-doing
& Endogenous through action
& Usually automatic or fixed
& Usually given or implicit
& Endogenous
& Sometimes endogenous
\\[0.35em]

Information acquisition and rational inattention
& Chosen or endogenously produced
& Chosen or implicit in the acquisition technology
& Usually given; sometimes jointly studied
& Induced
& Usually absent
\\[0.35em]

Present framework
& Endogenous through action
& Designed through disclosure
& Designed subject to IC and IR
& Jointly induced by action and disclosure
& Endogenous through action
\\
\hline
\end{tabularx}

\vspace{0.15cm}

\begin{minipage}{0.96\textwidth}
\footnotesize
\noindent
Notes: ``Primitive'' means specified as part of the economic environment; ``designed'' means directly selected by the principal or designer; ``endogenous'' means determined through behavior or state dynamics; and ``induced'' means generated by the interaction of primitives, designed rules, and equilibrium behavior. The entries describe canonical formulations and admit important exceptions.
\end{minipage}

\end{table}

\par
The table makes clear that ``endogenous information'' can refer to different objects.
Information may be produced by action, acquired through costly choice, selected through an
experiment, governed through disclosure, or embodied in a posterior induced by equilibrium
behavior. The present framework connects these operations in one causal structure: action
produces or acquires raw evidence and changes feasibility; disclosure governs its public use; Bayesian updating generates posterior beliefs; and transfers and continuation values determine whether the resulting action and informational paths are implementable.

Several benchmark environments are obtained by holding particular links fixed. If information production is independent of action, the model approaches information or disclosure design under a specified signal technology. If action produces information but disclosure is automatic, the separate disclosure margin disappears and the model approaches experimentation or learning-by-doing. If the informational dynamics are fixed, the environment approaches dynamic contracting under a specified information structure. If strategic incentives are absent, the problem approaches controlled learning or information acquisition. These comparisons identify which feedback is suppressed; they are not claims that the corresponding literatures are literally nested in every other respect.

The term \emph{informational evolution} also differs from its use in evolutionary game theory and learning in games, where strategies, beliefs, or population states evolve through selection, imitation, reinforcement, or payoff feedback \citep{MaynardSmith1982,Weibull1995,FudenbergLevine1998,Young1998,Sandholm2010}. Here the evolving object is the public informational state of an implementable economic system, generated through information-producing actions, institutional disclosure, Bayesian updating, and feasibility dynamics.

Taken together, the related literatures provide complementary components of the architecture studied here. Realization theory makes informational requirements explicit; implementation and dynamic contracting make strategic incentives and continuation relationships endogenous; information design and disclosure make informational rules objects of design; experimentation and information acquisition make information production endogenous; dynamic feedback design links information-producing effort to incentives and informational feedback; and joint mechanism--information design establishes the importance of nonseparability across design margins. The present paper studies the system-level feedback among these components when that feedback changes feasibility, implementation, or welfare.

\section{Dynamic Economic Environment with Endogenous Information}
\label{sec:dynamic_environment}

We now introduce a dynamic principal--agent environment in which economic activity produces information. As a benchmark interpretation, a firm, the principal, delegates an innovation or experimentation project to a manager, the agent, who chooses the intensity of research, testing, or implementation. The action affects current performance, generates evidence about an initially unknown environment, and may consume scarce future capacity. The principal designs transfers and the rules governing how the generated evidence is validated, disclosed, and incorporated into future decisions and incentives.

The same structure applies to a regulator delegating inspections to a monitoring unit, a research sponsor assigning experimentation to a scientific team, or a financial institution delegating borrower monitoring to a loan officer. In each case, information production and use depend on actions, incentive provision, disclosure, and future feasibility. The analysis distinguishes the primitive information-production environment from the mechanism designed within that environment and from the informational and feasibility processes induced in equilibrium. The general formulation is presented first, followed by the linear--Gaussian benchmark and the dynamic mechanism.

\subsection{Informational Economic Environment}
\label{subsec:informational_environment}

Time is continuous, with $t\in[0,\bar T]$ and $\bar T<\infty$. Continuous time is used for analytical convenience and for the recursive characterization developed below; the underlying economic mechanism is not specific to continuous time.

The economy consists of a principal and an agent. The principal designs transfer and disclosure rules, while the agent chooses the intensity of the delegated activity. At each date, the agent has at most $\bar a<\infty$ units of productive capacity and chooses $a_t\in\mathcal A=[0,\bar a]$ for the delegated activity, leaving $\bar a-a_t$ for alternative uses. The pair $(a_t,\bar a-a_t)$ constitutes the contemporaneous \textbf{resource allocation}.

Depending on the application, $a_t$ may represent research effort, experimentation, monitoring, information acquisition, implementation, production, search, or learning-by-doing. The action entails a flow cost $c(a_t)$, where $c(0)=0$, $c'(a)>0$, and $c''(a)\geq0$.

Let $x_t\in X=[0,\bar x]$ denote a publicly observable \textbf{feasibility state} inherited from past activity. It may represent remaining testing capacity, organizational slack, financial resources, inventories, productive capacity, or future experimentation opportunities. Unlike the flow allocation $a_t$, the state $x_t$ summarizes the accumulated consequences of past activity and determines future economic opportunities. Its law of motion is specified in Subsection~\ref{subsec:informational_dynamics}.

The economy contains two linked allocation processes. The \textbf{resource allocation process} determines how productive capacity is divided between the delegated activity and alternative uses. The \textbf{informational allocation process} determines how information is produced and how generated information is disclosed, incorporated into beliefs, and used in decisions and incentives. The two processes are jointly determined because the agent's action affects information production, while the principal's informational policy affects the return to that action.

A defining feature of the environment is \textbf{information production}: economic activity generates information as well as current economic outcomes. Let $\theta\in\Theta$ denote a common but initially unobserved informational-environment state. In the benchmark model, $\theta$ governs the information-production technology through which actions generate signals rather than entering primitive payoffs directly. It is drawn from a common prior distribution $F$ on $\Theta$, known to both the principal and the agent, which admits a density with respect to a reference measure and has support equal to $\Theta$. Random variables are distinguished from their realizations by a tilde; thus
\(\widetilde{\theta}_t\) denotes the random state and \(\theta_t\) its realization,
with the same convention used for signals, actions, and outcomes.

The resulting evidence is represented by a raw signal process $(S_t)$. Conditional on the current action, feasibility state, and informational-environment state, signals are distributed according to $Q(\cdot\mid a_t,x_t,\theta)$. The conditional distribution $Q$ is the \textbf{information-production technology}: it describes how economic activity converts productive resources into evidence about the unknown environment.

Generated signals need not enter the public information set automatically. The mechanism determines how they are validated, communicated, disclosed, or otherwise made available for contracting and economic use. Posterior beliefs consequently evolve through the joint effects of action, information production, disclosure, and Bayesian updating. This interaction generates the \textbf{endogenous informational evolution} studied in the paper.

\begin{remark}[Private payoff-relevant uncertainty and common informational-environment uncertainty]
\label{rem:private_types_informational_states}

The informational-environment state in the benchmark differs from the private payoff-relevant types commonly studied in Bayesian mechanism design. In standard screening environments, privately observed types or signals directly affect preferences, valuations, costs, or feasible allocations, and the principal designs a mechanism subject to truthful-reporting constraints. Here, the informational-environment state $\theta$ is initially common but unobserved and governs the information-production technology rather than directly representing the agent's payoff-relevant type.

This distinction isolates the information-production and informational-adjustment channel rather than imposing a substantive restriction. Section~\ref{subsec:payoff_relevant_information_states} extends the framework to environments in which private payoff-relevant information and informational-environment uncertainty coexist, so that reporting and action incentive constraints operate within the same dynamic system.
\end{remark}

The incorporation of generated information into decision making, contracting, forecasting, investment, organizational coordination, and incentive provision is referred to as \textbf{informational usage}. Information production and informational usage jointly determine \textbf{informational allocation}: the former governs how evidence is generated, while the latter governs how that evidence affects beliefs, decisions, and continuation incentives.

Generated and disclosed information induces a posterior belief process $(\mu_t)$ that summarizes the public informational state of the relationship. Let $T_t$ denote the transfer from the principal to the agent. Transfers may depend on publicly observable information, including disclosed signal histories and public state variables. The agent's flow utility is $u(T_t,\mu_t,x_t)-c(a_t)$, where $u$ is weakly increasing in $T_t$ and $x_t$. The principal's flow utility is $v(T_t,a_t,\mu_t,x_t)$, which is weakly decreasing in transfers. Let $U_{\bar T}(\mu,x)$ and $V_{\bar T}(\mu,x)$ denote the agent's and principal's terminal payoff functions, respectively, which serve as the terminal conditions for their continuation-value problems at date $\bar T$. If no separate terminal payoff is present, the corresponding terminal-payoff function is identically zero.

The timing is as follows. The principal first commits to the transfer and disclosure rules, and the agent decides whether to participate. Conditional on participation, the agent subsequently chooses the action. The realized action and generated information then determine information production, posterior updating, and feasibility evolution, while the committed rules govern transfers and disclosure. The principal designs these rules so that the agent's optimal action coincides with the action the principal seeks to implement.

\begin{remark}[Action observability and hidden-action extensions]
\label{rem:hidden_action_extension}

The benchmark assumes that the realized action path is available for posterior updating and feasibility accounting. This action-revealing specification isolates the endogenous informational-evolution mechanism studied in the paper. More generally, hidden information-producing actions can be accommodated by introducing an observable performance process and the associated inference and dynamic-agency problem. For example, following the continuous-time moral-hazard formulation of \citet{HolmstromMilgrom1987}, one may introduce
\[
q_t=a_t+\varepsilon_t,
\]
where $\varepsilon_t$ represents performance noise. The benchmark corresponds to the perfectly revealing case $\varepsilon_t=0$, so that $q_t=a_t$. With nondegenerate performance noise, posterior updating and incentive provision depend on observable performance. The present paper does not pursue this additional layer in order to focus on the endogenous production, disclosure, and use of information.
\end{remark}

Dependence on beliefs, actions, and feasibility is left general to accommodate different applications. The institutional arrangement introduced below specifies transfer and disclosure rules, while the principal also selects the action process she seeks to implement. Together with the agent's action, the signal technology, Bayesian updating, and feasibility dynamics, these objects induce the informational and economic state processes analyzed in the remainder of the paper.

\subsection{Informational Dynamics and Feasibility States}
\label{subsec:informational_dynamics}

The dynamic state consists of posterior beliefs and the feasibility state, both introduced in Subsection~3.1. The former records what the economic system has learned about the informational environment, while the latter records the accumulated economic consequences of past activity.

Generated signals need not become public or contractible automatically. Their economic use is governed by the disclosure structure. In many applications, disclosure can be represented in reduced form by an intensity $\delta_t\in\mathcal D$, where $\mathcal D$ is a nonempty compact subset of $[0,1]$, measuring the extent to which generated evidence is validated, processed, communicated, or made available for contracting and decision making. The unrestricted benchmark is $\mathcal D=[0,1]$. When feasible, $\delta_t=1$ corresponds to full informational use, $\delta_t=0$ to complete suppression of informative content, and intermediate values to partial validation, selective communication, aggregation, noisy disclosure, or limited institutional use. Restricted disclosure sets accommodate institutional requirements on informational usage. For example, $\mathcal D=[\underline\delta,1]$ with $\underline\delta>0$ imposes a minimum-disclosure requirement, while $\mathcal D=\{1\}$ represents automatic or mandatory disclosure. This reduced-form representation is useful when the economic effect of disclosure depends primarily on the informativeness of the public signal rather than on the detailed communication protocol. Richer deterministic or randomized disclosure, reporting, validation, and information-processing rules are represented by $\pi_t$.\footnote{Allowing the feasible action or disclosure sets to depend on time or public states would not change the recursive logic, but is omitted to keep the notation focused on the interaction among information production, disclosure, incentives, and implementability.}

Let $Y_t=\pi_t(S_t)$ denote the publicly disclosed signal at date $t$. Through $Y_t$, the mechanism governs how raw information enters the public history and becomes available for contracting and economic use. Because action changes the distribution of the raw signal and disclosure changes its public informational content, action and disclosure jointly determine the path of learning.

The first component of the dynamic state is the \textbf{informational state} $\mu_t$. Let $h_t^\pi$ denote the public history generated by disclosed signals $(Y_s)_{s\leq t}$, transfers, and public state variables up to date $t$. We assume that, for each $\theta\in\Theta$, the history $h_t^\pi$ admits a likelihood $L_t(h_t^\pi\mid\theta)$ induced by the information-production technology $Q(\cdot\mid a_s,x_s,\theta)_{s\leq t}$, the disclosure structure $(\pi_s)_{s\leq t}$, the action process, transfer rules, and public state dynamics.

For every measurable set $B\subseteq\Theta$, the posterior belief is determined by Bayes' rule:
\begin{equation}
\label{eq:belief}
\mu_t(B)
=
\mathbb P(\theta\in B\mid h_t^\pi)
=
\frac{
\displaystyle\int_B L_t(h_t^\pi\mid\theta)\,dF(\theta)
}{
\displaystyle\int_\Theta L_t(h_t^\pi\mid\theta)\,dF(\theta)
},
\end{equation}
whenever the denominator is strictly positive. Equivalently, when $\Theta=\{\theta_1,\ldots,\theta_N\}$ is finite,
\[
\mu_t(\theta_i)
=
\frac{
F_iL_t(h_t^\pi\mid\theta_i)
}{
\sum_{j=1}^NF_jL_t(h_t^\pi\mid\theta_j)
},
\qquad i=1,\ldots,N,
\]
where $F_i=F(\theta_i)$.\footnote{In a discrete-time model with conditionally independent disclosed signals $(Y_s)$, the likelihood takes the product form
\[
L_t(h_t^\pi\mid\theta)
=
\prod_{s\leq t}
q_s^\pi(Y_s\mid a_s,x_s,\theta),
\]
where $q_s^\pi(\cdot\mid a_s,x_s,\theta)$ is the disclosed-signal distribution induced by the raw signal technology $Q(\cdot\mid a_s,x_s,\theta)$ and the disclosure rule $\pi_s$.}

Thus, $\mu_t$ is a probability measure over the state space of $\theta$ that summarizes the public informational state. It is induced by the information-production technology, action, disclosure, and Bayesian updating rather than directly selected by the principal. The updating and disclosure structure uses the standard language of statistical experiments and information structures; see \citet{Blackwell1953}, \citet{AumannMaschler1995}, and \citet{BergemannMorris2016, BergemannMorris2019}. The informational-environment state $\theta$ differs from the state in canonical information-design environments, where the underlying state is typically directly payoff relevant and the information structure itself is the principal design object.

The second component is the \textbf{feasibility state} $x_t$. It should be distinguished from the contemporaneous resource allocation $(a_t,\bar a-a_t)$. Whereas $a_t$ is a flow variable describing current productive-resource use, $x_t$ is a stock variable summarizing inherited economic conditions and determining future opportunities. Depending on the application, it may represent remaining inventories, testing capacity, productive capacity, organizational capital, knowledge stocks, reputational capital, financial resources, or future experimentation opportunities.

The feasibility state evolves according to
\begin{equation}
\label{eq:feasibility}
\dot x_t
=
G(x_t,a_t),
\qquad
x_0\in X=[0,\bar x],
\end{equation}
where $G:X\times\mathcal A\to\mathbb R$ is continuously differentiable and satisfies $G_a(x,a)\leq0$ for all $(x,a)$. Thus, higher current actions weakly reduce the evolution of future feasibility through resource use, capacity utilization, inventory reduction, financial-resource consumption, or the exhaustion of future opportunities. The dependence of $G$ on $x$ accommodates depreciation, recovery, congestion, scale effects, and technology-driven renewal. The feasibility path is required to satisfy $0\leq x_t\leq\bar x$ for all $t\in[0,\bar T]$. The depletion specification $\dot x_t=-g(a_t)$ is obtained as the special case $G(x,a)=-g(a)$, where $g(a)\geq0$ and $g'(a)\geq0$.

Thus, the pair $(\mu_t,x_t)$ constitutes the \textbf{dynamic state} of the informational economy. The informational state summarizes the accumulated informational consequences of past action and disclosure, while the feasibility state summarizes the accumulated economic consequences of past activity. Together they determine future opportunities, continuation incentives, feasible actions, and the value of additional information.

The preceding formulation permits general information-production technologies and posterior states. To obtain a tractable characterization of endogenous informational dynamics, part of the analysis specializes to a continuous-time \textbf{linear--Gaussian information-production technology}. This specification provides a finite-dimensional belief representation and permits the interaction among information production, disclosure, incentives, and feasibility to be characterized transparently. The conceptual framework and recursive logic do not depend on Gaussianity. Continuous-time Brownian environments with Gaussian updating are canonical in dynamic agency and contracting; see \citet{HolmstromMilgrom1987}, \citet{SchattlerSung1993}, \citet{Sannikov2008}, and \citet{Williams2011}. Related learning and experimentation models also use continuous-time belief-state representations; see \citet{BoltonHarris1999}, \citet{KellerRadyCripps2005}, and \citet{BergemannValimaki2008}.

Under this specification, economic activity generates the raw signal
\begin{equation}
\label{eq:linear_gaussian_raw_signal}
dS_t
=
a_t\theta\,dt+\sigma\,dW_t,
\end{equation}
where $\sigma>0$ and $(W_t)$ is a standard Brownian motion. Higher action increases the informativeness of the evidence generated about the unknown informational-environment state $\theta$.

To represent partial disclosure, let $(B_t)$ be a standard Brownian motion independent of $(W_t)$ and $\theta$. For each disclosure intensity $\delta_t\in[0,1]$, define the effective public signal process by
\begin{equation}
\label{eq:linear_gaussian_public_signal}
dY_t
=
\sqrt{\delta_t}\,dS_t
+
\sqrt{1-\delta_t}\,\sigma\,dB_t
=
\sqrt{\delta_t}\,a_t\theta\,dt+\sigma\,dW_t^Y,
\end{equation}
where $dW_t^Y=\sqrt{\delta_t}\,dW_t+\sqrt{1-\delta_t}\,dB_t$ defines a standard Brownian innovation process. Full disclosure corresponds to $\delta_t=1$, in which case $dY_t=dS_t$, whereas $\delta_t=0$ produces a public signal independent of $\theta$. Intermediate values represent Gaussian garblings of the action-generated evidence. Thus, $\delta_t$ is a reduced-form representation of the general disclosure rule $\pi_t$, rather than an additional primitive of the environment.

Suppose that the prior is Gaussian. Posterior beliefs then remain Gaussian, with $\mu_t=N(m_t,P_t)$. Conditional on the realized action path, standard filtering arguments give
\begin{equation}
\label{eq:gaussian_filter}
dm_t
=
\frac{P_t\sqrt{\delta_t}\,a_t}{\sigma^2}
\left(
dY_t-\sqrt{\delta_t}\,a_tm_t\,dt
\right),
\qquad
\dot P_t
=
-\delta_t\frac{P_t^2a_t^2}{\sigma^2}.
\end{equation}

Letting $\rho_t:=P_t^{-1}$ denote posterior precision and defining $\chi:=\sigma^{-2}$, we obtain
\begin{equation}
\label{eq:precision}
\dot\rho_t
=
\delta_t\chi a_t^2,
\qquad
\rho_t
=
\rho_0
+
\int_0^t
\delta_s\chi a_s^2\,ds.
\end{equation}
Thus, $\delta_t$ is a designed disclosure control, whereas $\rho_t$ is the induced informational state. Its evolution depends jointly on the action that produces evidence and the disclosure rule that determines its public informational content.

When continuation payoffs depend on posterior beliefs only through uncertainty, the posterior mean can be suppressed and posterior precision becomes the relevant informational state. The dynamic state is then represented by $(\rho_t,x_t)$. When payoffs depend on the estimated level of $\theta$ as well as its uncertainty, the posterior mean and precision, or more generally the full posterior distribution, must be retained.

The framework contains the principal benchmark cases summarized in Table~\ref{tab:framework_comparison}. If feasibility dynamics are absent, $G(x,a)\equiv0$, the model isolates information production, disclosure, and incentives and is related to dynamic feedback and disclosure design \citep{Ely2017,HornerLambert2021,ElyGeorgiadisRayo2025}. If action-generated information is automatically public, the separate disclosure control disappears and the model approaches learning-by-doing and strategic experimentation \citep{Arrow1962,BoltonHarris1999, KellerRadyCripps2005,BergemannValimaki2008,GuoHeLiu2023}. The framework also accommodates information acquisition, precision choice, monitoring, experimentation, and attention allocation, as well as private payoff-relevant information through the extension in Section~\ref{subsec:payoff_relevant_information_states}. The recursive characterization and value decomposition extend beyond Gaussian learning to finite-state, Poisson, and other informational environments, although their posterior-state laws and sufficient statistics differ.

\subsection{Institutional Arrangements and Dynamic Mechanisms}
\label{subsec:institutional_mechanisms}

The informational environment and the induced state dynamics do not by themselves determine economic outcomes. An institutional arrangement must specify how generated information is disclosed and how the agent is compensated, while the principal also determines which action process she seeks to implement.

Let $\mathcal F_t^\pi$ denote the public information available up to date $t$, formally represented by the filtration generated by the disclosed history $h_t^\pi$. Transfers must depend only on information available by date $t$, that is, they must be adapted to $\mathcal F_t^\pi$, and may depend on publicly observable state variables. In the general formulation, let $T_t=T(t,h_t^\pi,x_t)$ denote the flow transfer from the principal to the agent. Under the Markov state representation introduced below, this rule can be written as $T_t=T(t,\mu_t,x_t)$. Under the linear--Gaussian information-production technology, when posterior precision is a sufficient informational state, it reduces further to $T_t=T(t,\rho_t,x_t)$.

Unless otherwise specified, expectations are taken with respect to the probability measure induced jointly by the prior, the agent's action process, the information-production technology, the disclosure rule, and the resulting informational and feasibility dynamics. Conditional expectations are taken given the public information available up to date $t$.

A dynamic mechanism specifies disclosure and transfer rules:
\begin{equation}
\label{eq:mechanism}
\mathcal M
=
(\pi_t,T_t)_{t\in[0,\bar T]}.
\end{equation}
The disclosure and transfer rules are the principal's institutional instruments. Separately, let $(\hat a_t)_{t\in[0,\bar T]}$ denote the action process the principal seeks to implement. The agent chooses the actual action $(a_t)$. Public histories, posterior beliefs, and feasibility paths are not independently selected controls; they are induced by the primitive environment, the mechanism, the agent's action, Bayesian updating, and the feasibility law of motion.

Given a mechanism $\mathcal M$, the agent's problem is
\begin{equation}
\label{eq:agent_problem}
\max_{(a_t)}
\;
\mathbb E\left[
\int_0^{\bar T}
\bigl(
u(T_t,\mu_t,x_t)-c(a_t)
\bigr)\,dt
+
U_{\bar T}(\mu_{\bar T},x_{\bar T})
\right],
\end{equation}
where $(\mu_t,x_t)$ is the state process induced by the chosen action under $\mathcal M$. A deviation therefore changes not only the current cost of action, but also the distribution of future public signals, posterior beliefs, continuation values, and feasible opportunities.

The principal's problem is
\begin{equation}
\label{eq:principal_problem}
\max_{(\hat a_t,\pi_t,T_t)}
\;
\mathbb E\left[
\int_0^{\bar T}
v(T_t,\hat a_t,\mu_t,x_t)\,dt
+
V_{\bar T}(\mu_{\bar T},x_{\bar T})
\right]
\end{equation}
subject to the agent's participation and action incentive constraints:
\[
\begin{array}{ll}
\displaystyle
\mathbb E\left[
\int_0^{\bar T}
\bigl(
u(T_t,\mu_t,x_t)-c(\hat a_t)
\bigr)\,dt
+
U_{\bar T}(\mu_{\bar T},x_{\bar T})
\right]
\geq
\bar U,
& (\mathrm{IR}),
\\[1.2em]
\displaystyle
(\hat a_t)_{t\in[0,\bar T]}
\in
\arg\max_{(a_t)}
\mathbb E\left[
\int_0^{\bar T}
\bigl(
u(T_t,\mu_t,x_t)-c(a_t)
\bigr)\,dt
+
U_{\bar T}(\mu_{\bar T},x_{\bar T})
\right],
& (\mathrm{IC}).
\end{array}
\]
The benchmark imposes an ex ante participation constraint. The limited-commitment extension strengthens this requirement by imposing participation constraints along the continuation path.

An action process $(\hat a_t)$ is \textbf{implementable} under a mechanism $\mathcal M$ if it satisfies the action incentive-compatibility requirement under the induced transfer and disclosure rules. It is \textbf{individually rational} if the participation constraint is satisfied. The principal's feasible design set consists of the target action processes and mechanisms satisfying both requirements. Since action deviations affect information production and feasibility, implementability depends jointly on posterior evolution, continuation incentives, and future economic opportunities.

Disclosure affects the return to information-producing action, while action determines both the information available for disclosure and the evolution of future feasibility. The target action process, transfers, and disclosure rules must therefore be determined jointly. The mechanism and the agent's equilibrium behavior induce the resource allocation, informational allocation, and implementable continuation path analyzed recursively in the next section.

Throughout the general analysis, we maintain the following regularity conditions.

\begin{assumption}[Regularity Conditions for Primitives, State Dynamics, and Continuation Values]\rm
\label{ass:regularity_primitives}
The following regularity conditions hold:
\begin{enumerate}
\item[(i)] The flow and terminal payoff functions are continuously differentiable in their arguments, with uniformly bounded derivatives satisfying polynomial-growth conditions sufficient to justify the differentiation and conditional-expectation operations used below.

\item[(ii)] The information-production technology, disclosure and updating rules, and feasibility dynamics are measurable in time, continuous in controls, and Lipschitz continuous in state variables with linear-growth bounds sufficient to ensure well-defined state processes with the required finite moments.

\item[(iii)] Whenever a classical HJB representation or first-order condition is invoked, the associated continuation-value functions belong to the domains of the relevant controlled generators and possess the required continuous differentiability. The relevant generator mappings are continuous in controls and continuously differentiable on the interiors of their control sets.
\end{enumerate}
\end{assumption}

\begin{assumption}[Markov State Representation]\rm
\label{ass:markov_state_representation}
For every admissible mechanism, the induced posterior belief and feasibility state $(\mu_t,x_t)$ constitute a controlled Markov state for the continuation problems of the agent and the principal. Conditional on $(t,\mu_t,x_t)$ and the current controls, future payoff-relevant distributions depend on past histories only through $(\mu_t,x_t)$.
\end{assumption}

Assumption~\ref{ass:regularity_primitives}, together with the primitive restrictions above, ensures that the induced state dynamics and stochastic-control problems are well defined and provides the classical regularity required for HJB equations, continuation-value derivatives, and first-order conditions; see \citet{YongZhou1999}. Assumption~\ref{ass:markov_state_representation} provides the state-sufficiency condition for recursive analysis. Together, the two assumptions support the dynamic-programming principle and, where applicable, analysis through the associated Hamilton--Jacobi--Bellman equations and first-order conditions; see \citet{FlemingSoner2006} and \citet{YongZhou1999}.

\section{Characterization of Informational Implementability}
\label{sec:informational_implementability}

This section develops a recursive characterization of informational implementability when actions jointly affect information production and feasibility. The characterization combines the agent's continuation-value equation, a pointwise incentive condition, and the induced state dynamics. It provides the foundation for the principal's recursive problem and the value decomposition developed in the next section.

\subsection{Continuation Values and Informational Implementability}
\label{subsec:continuation_implementability}

Given a mechanism $\mathcal M=(\pi_t,T_t)_{t\in[0,\bar T]}$, the agent chooses an admissible action process taking the induced disclosure and transfer rules as given. Let $U(t,\mu,x)$ denote the agent's continuation value at time $t$ given posterior belief $\mu$ and feasibility state $x$. Formally,
\begin{equation}
\label{eq:continuation_value}
U(t,\mu,x)
=
\sup_{(a_s)_{s\in[t,\bar T]}}
\mathbb E_t
\left[
\int_t^{\bar T}
\bigl(
u(T_s,\mu_s,x_s)-c(a_s)
\bigr)\,ds
+
U_{\bar T}(\mu_{\bar T},x_{\bar T})
\right].
\end{equation}

Let \(\mathcal L^{\pi,a}\) denote the infinitesimal generator associated with the posterior-belief dynamics induced by disclosure rule \(\pi\) and action \(a\).\footnote{For any function \(\phi\) in the domain \(\mathcal D(\mathcal L^{\pi,a})\) of the infinitesimal generator,
\[
\mathcal L^{\pi,a}\phi(t,\mu,x)
=
\lim_{h\downarrow0}
\frac{1}{h}
\mathbb E
\left[
\phi(t,\mu_{t+h},x)-\phi(t,\mu,x)
\,\middle|\,
(\mu_t,x_t)=(\mu,x)
\right].
\]
Time and the feasibility state are held fixed in this operator. Their evolution is accounted for separately through \(\partial_t\phi\) and \(\phi_xG(x,a)\). Thus the generator summarizes the local drift and uncertainty induced in the posterior process by the current disclosure and action controls. When the posterior admits a finite-dimensional diffusion representation, \(\mathcal L^{\pi,a}\) reduces to the standard controlled-diffusion generator.}
Economically, \(\mathcal L^{\pi,a}U\) measures the instantaneous expected change in the agent's continuation value generated by posterior evolution under the current disclosure rule and action. The notation \(\partial_t\) denotes the partial derivative with respect to time, holding \((\mu,x)\) fixed.

Under Assumptions~\ref{ass:regularity_primitives} and \ref{ass:markov_state_representation}, the continuation problem satisfies the dynamic programming principle, and its Hamilton--Jacobi--Bellman equation is
\begin{equation}
\label{eq:agent_hjb_general}
-\partial_t U(t,\mu,x)
=
\sup_{a\in\mathcal A}
\left\{
u(T(t,\mu,x),\mu,x)
-c(a)
+\mathcal L^{\pi,a}U(t,\mu,x)
+U_x(t,\mu,x)G(x,a)
\right\},
\end{equation}
with terminal condition
\begin{equation}
\label{eq:agent_terminal_general}
U(\bar T,\mu,x)
=
U_{\bar T}(\mu,x).
\end{equation}
Thus, the HJB separates the continuation effects of endogenous informational evolution and feasibility dynamics, represented by $\mathcal L^{\pi,a}U(t,\mu,x)$ and $U_x(t,\mu,x)G(x,a)$, respectively. Along an implementable path, the target action must attain the supremum in \eqref{eq:agent_hjb_general}.

Define the action-relevant continuation-value Hamiltonian by
\[
\mathcal H^A(t,\mu,x,a;U)
=
-c(a)
+\mathcal L^{\pi,a}U(t,\mu,x)
+U_x(t,\mu,x)G(x,a).
\]
Because $u(T(t,\mu,x),\mu,x)$ does not depend directly on the current action, it can be omitted from $\mathcal H^A$ without changing the agent's pointwise choice.


The following result identifies the pointwise action incentive condition implied by implementability.

\begin{lemma}[Pointwise Incentive Condition for Implementable Actions]
\label{lem:agent_local_ic}

Suppose Assumptions~\ref{ass:regularity_primitives} and \ref{ass:markov_state_representation} hold. Then an implementable action process $(\hat a_t)$ must satisfy, for almost every $t$,
\begin{equation}
\label{eq:local_ic_general}
\hat a_t
\in
\arg\max_{a\in\mathcal A}
\left\{
-c(a)
+\mathcal L^{\pi,a}U(t,\mu_t,x_t)
+U_x(t,\mu_t,x_t)G(x_t,a)
\right\}.
\end{equation}

If the maximizer is interior and the action-relevant continuation-value Hamiltonian is differentiable with respect to $a$, then it necessarily satisfies
\begin{equation}
\label{eq:agent_foc_general}
\frac{\partial}{\partial a}
\left[
-c(a)
+\mathcal L^{\pi,a}U(t,\mu_t,x_t)
+U_x(t,\mu_t,x_t)G(x_t,a)
\right]
=
0.
\end{equation}

If, in addition, the action-relevant continuation-value Hamiltonian is concave in $a$, or more generally its derivative satisfies a single-crossing property in $a$, being nonnegative below a candidate optimum and nonpositive above it, then the first-order optimality condition, interpreted as the corresponding Kuhn--Tucker condition on the boundary of $\mathcal A$, is sufficient for global pointwise optimality.

\end{lemma}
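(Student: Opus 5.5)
The plan is to derive the pointwise condition from the dynamic programming principle, which holds by Assumptions~\ref{ass:regularity_primitives} and \ref{ass:markov_state_representation}, and from the HJB equation \eqref{eq:agent_hjb_general}. There are three steps: (i) the target action attains the supremum in the HJB almost everywhere along the induced path; (ii) the supremum is unchanged when the action-independent term $u(T(t,\mu,x),\mu,x)$ is dropped; (iii) the first-order and sufficiency claims follow by elementary one-dimensional optimization on the compact interval $\mathcal A=[0,\bar a]$.

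\textbf{Step (i).} Let $(\hat a_t)$ be implementable, so it is optimal in \eqref{eq:agent_problem}. Its value then equals $U(0,\mu_0,x_0)$, and by the dynamic programming principle its continuation from any date $t$ is optimal for almost every realized state.

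Define the process
\[
M_t=\int_0^t\bigl(u(T_s,\mu_s,x_s)-c(\hat a_s)\bigr)ds+U(t,\mu_t,x_t).
\]
Apply Dynkin's formula (It\^o's formula in the diffusion case), which is justified by the domain condition in Assumption~\ref{ass:regularity_primitives}(iii). The drift of $M_t$ is
\[
\partial_tU+u-c(\hat a_t)+\mathcal L^{\pi,\hat a_t}U+U_xG(x_t,\hat a_t).
\]
By \eqref{eq:agent_hjb_general}, this drift is nonpositive for every control. Optimality of $\hat a$ means $\mathbb E[M_{\bar T}]=M_0$, using the terminal condition \eqref{eq:agent_terminal_general}. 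The expected integral of a nonpositive drift is therefore zero, so the drift vanishes $dt\otimes d\mathbb P$-almost everywhere. Hence $\hat a_t$ attains the supremum in the HJB for almost every $t$ along the path.

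\textbf{Step (ii).} The transfer rule $T(t,\mu,x)$ is fixed by $\mathcal M$ and does not depend on the current action. Subtracting $u$ therefore preserves the argmax, which gives \eqref{eq:local_ic_general}.

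\textbf{Step (iii).} Write $h(a)=\mathcal H^A(t,\mu_t,x_t,a;U)$. If the maximizer $\hat a_t\in(0,\bar a)$ is interior and $h$ is differentiable, Fermat's theorem gives $h'(\hat a_t)=0$, which is \eqref{eq:agent_foc_general}. At the boundary, the corresponding Kuhn--Tucker conditions are
\[
h'(0)\le0 \quad\text{if } \hat a_t=0, \qquad h'(\bar a)\ge0 \quad\text{if } \hat a_t=\bar a.
\]

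For sufficiency, suppose $h'\ge0$ on $[0,a^*]$ and $h'\le0$ on $[a^*,\bar a]$. This covers the concave case, where $h'$ is nonincreasing and the first-order or Kuhn--Tucker condition at $a^*$ yields the sign pattern. By the fundamental theorem of calculus, $h$ is nondecreasing up to $a^*$ and nonincreasing after it, so $a^*$ is a global maximizer on $\mathcal A$. When $a^*$ is a boundary point, one of the two intervals is degenerate and the same argument applies. This establishes pointwise optimality. Global optimality of the action process then follows from the verification theorem, by reversing the argument of step (i).

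\textbf{Main obstacle.} The hardest part is step (i). It must pass from optimality of the whole process to pointwise attainment of the supremum, which requires $U$ to be in the generator domain and the local martingale part of $M_t$ to be a true martingale. I would rely on the moment and growth bounds in Assumption~\ref{ass:regularity_primitives}(i)--(ii) for this, rather than attempting a viscosity-solution argument.
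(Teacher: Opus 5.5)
Your proposal is correct and follows the paper's own route. The paper likewise obtains the HJB from the dynamic programming principle, has the target action attain the supremum, drops the action-independent transfer term, and then applies the first-order and Kuhn--Tucker conditions together with the single-crossing monotonicity argument; the one difference is that you justify the attainment step explicitly with the Dynkin/nonpositive-drift martingale argument, which the paper simply asserts.
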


\begin{proof}
See Appendix~\ref{app:proof_agent_local_ic}.
\end{proof}

The pointwise incentive condition identifies two interacting continuation channels. First, action changes the distribution of posterior beliefs and therefore affects learning, informational evolution, and future incentives. Second, action changes the feasibility state and hence future economic opportunities. The sign of this feasibility channel depends on the marginal continuation value $U_x(t,\mu,x)$. When the continuation problem is monotone in the feasibility state, so that $U_x(t,\mu,x)\geq0$ wherever the derivative exists, $G_a(x,a)\leq0$ implies $-U_x(t,\mu,x)G_a(x,a)\geq0$, and the feasibility effect can be interpreted as a marginal continuation cost of action.

Under the linear--Gaussian information-production technology, posterior precision $\rho_t$ summarizes the relevant informational state, and the continuation value can be written as $U(t,\rho,x)$. For an interior action, the first-order condition becomes
\begin{equation}
\label{eq:agent_ic_gaussian}
c'(a_t)
-
U_x(t,\rho_t,x_t)G_a(x_t,a_t)
=
\frac{2\delta_ta_t}{\sigma^2}
U_\rho(t,\rho_t,x_t),
\end{equation}
where $U_\rho$ is the marginal continuation value of posterior precision.

Equation~\eqref{eq:agent_ic_gaussian} makes the informational incentive channel explicit. A higher action raises the drift of posterior precision at the marginal rate $2\delta_ta_t/\sigma^2$ and therefore generates the marginal informational continuation effect $(2\delta_ta_t/\sigma^2)U_\rho(t,\rho_t,x_t)$. The pointwise first-order condition balances this effect against the current marginal action cost $c'(a_t)$ and the feasibility term $-U_x(t,\rho_t,x_t)G_a(x_t,a_t)$. When $U_x(t,\rho_t,x_t)\geq0$, the feasibility term is nonnegative and can be interpreted as the marginal feasibility cost of action.

Because disclosure changes the informational return to action, it changes the set of actions supported by a given continuation plan. Information production, disclosure, and incentive provision therefore enter the implementability condition jointly.

\subsection{Recursive Characterization of Implementability}
\label{subsec:recursive_implementability}

The pointwise incentive condition can be combined with the continuation-value equation and the induced state dynamics to characterize implementable action and mechanism rules recursively.

\begin{proposition}[Recursive Characterization of Implementability]
\label{prop:general_characterization}

Suppose Assumptions~\ref{ass:regularity_primitives}--\ref{ass:markov_state_representation} hold. An action rule $a$ is implementable under mechanism rules $(T,\pi)$ if and only if there exists a continuation-value function $U(t,\mu,x)$ satisfying the following conditions:

\begin{enumerate}
\item[(i)]
The continuation value satisfies
\begin{equation}
\label{eq:prop_agent_hjb}
-\partial_tU(t,\mu,x)
=
u(T(t,\mu,x),\mu,x)
-c(a)
+\mathcal L^{\pi,a}U(t,\mu,x)
+U_x(t,\mu,x)G(x,a),
\end{equation}
with terminal condition
\begin{equation}
\label{eq:prop_agent_terminal}
U(\bar T,\mu,x)=U_{\bar T}(\mu,x).
\end{equation}

\item[(ii)]
The action rule satisfies the pointwise action incentive condition
\begin{equation}
\label{eq:prop_agent_lic}
a
\in
\arg\max_{\tilde a\in\mathcal A}
\left\{
-c(\tilde a)
+\mathcal L^{\pi,\tilde a}U(t,\mu,x)
+U_x(t,\mu,x)G(x,\tilde a)
\right\}
\end{equation}
for every admissible date-state tuple $(t,\mu,x)$.

\item[(iii)]
The informational and feasibility states evolve according to the state dynamics induced by $(a,T,\pi)$.
\end{enumerate}

Consequently, implementability admits a recursive continuation-value representation over the endogenous dynamic state $(\mu,x)$.

\end{proposition}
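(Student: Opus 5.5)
The proof establishes the two directions separately. Both rest on the dynamic programming principle that Assumptions~\ref{ass:regularity_primitives}--\ref{ass:markov_state_representation} guarantee, together with a verification argument for the controlled Markov process $(\mu_t,x_t)$.

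\emph{Necessity.} Suppose the Markov action rule $a$ is implementable under $(T,\pi)$. Let $U$ be the agent's value function defined in \eqref{eq:continuation_value}. By Assumption~\ref{ass:markov_state_representation} it depends on the history only through $(t,\mu,x)$, and by Assumption~\ref{ass:regularity_primitives}(iii) it lies in the domain of the generators. The dynamic programming principle then yields the HJB equation \eqref{eq:agent_hjb_general} with terminal condition \eqref{eq:agent_terminal_general}; the latter gives \eqref{eq:prop_agent_terminal} directly.

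Implementability means the rule $a$ is optimal for the agent from every admissible date--state tuple. Lemma~\ref{lem:agent_local_ic} therefore shows that $a(t,\mu,x)$ attains the supremum of the action-relevant Hamiltonian, which is condition (ii). Substituting this maximizer into \eqref{eq:agent_hjb_general} replaces the supremum by its value at $a$, which is exactly \eqref{eq:prop_agent_hjb}, condition (i). Condition (iii) holds by construction, because the state process is the one induced by the implemented rule.

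One technical point needs care. Lemma~\ref{lem:agent_local_ic} gives (ii) only for almost every $t$ along the realized path. To state it for every admissible tuple, I would use continuity of the generator in controls and in the state to extend from a.e.\ equality to all points of the reachable set. Alternatively, I would simply interpret ``admissible date-state tuple'' as one reached with positive density.

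\emph{Sufficiency.} Suppose some $U$ satisfies (i)--(iii). This direction is a verification theorem, and I expect it to be the main step. Because (ii) holds, the linear equation (i) is the same as the HJB equation \eqref{eq:agent_hjb_general}. Take any admissible deviation $(\tilde a_s)$ with its induced state process $(\tilde\mu_s,\tilde x_s)$. Apply Dynkin's formula (It\^o's formula in the diffusion case) to $U(s,\tilde\mu_s,\tilde x_s)$ on $[t,\bar T]$:
\[
\mathbb E_t\bigl[U_{\bar T}(\tilde\mu_{\bar T},\tilde x_{\bar T})\bigr]-U(t,\mu,x)
=\mathbb E_t\int_t^{\bar T}\bigl(\partial_sU+\mathcal L^{\pi,\tilde a_s}U+U_xG(\tilde x_s,\tilde a_s)\bigr)\,ds .
\]
Condition (ii) says the supremum is attained at $a$, so \eqref{eq:agent_hjb_general} gives pointwise
\[
\partial_sU+\mathcal L^{\pi,\tilde a}U+U_xG\le -u(T,\tilde\mu_s,\tilde x_s)+c(\tilde a_s).
\]
Substituting, the deviation's expected payoff is at most $U(t,\mu,x)$. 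Under the candidate rule $a$ the inequality holds with equality, so that payoff equals $U(t,\mu,x)$. Hence $U$ is the true value function and $a$ is optimal from every state, which is the (IC) requirement.

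The main obstacle is justifying Dynkin's formula. The local-martingale term must be a true martingale, and the deviation's state process must remain in the domain of $\mathcal L$. I would handle both by invoking the linear-growth, Lipschitz and polynomial-growth bounds in Assumption~\ref{ass:regularity_primitives}(i)--(ii) to obtain finite moments. I would then localize with stopping times and pass to the limit by dominated convergence.

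The final sentence of the proposition then follows, since (i)--(iii) refer only to functions of $(t,\mu,x)$.
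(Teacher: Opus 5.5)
Your proposal is correct and follows the paper's proof essentially step for step. Necessity comes from the dynamic programming principle, the HJB equation \eqref{eq:agent_hjb_general}, and Lemma~\ref{lem:agent_local_ic}, with the target action substituted for the supremum to give (i); sufficiency is the same change-of-variables (Dynkin) verification argument showing that no admissible deviation exceeds $U(t,\mu,x)$, and your localization and integrability remarks fill in details the paper leaves implicit. One caution on ``every admissible date-state tuple'': keep your reading of implementability as optimality from every tuple, because the fallback of restricting (ii) to states reached under the target would break your sufficiency step, which needs the HJB inequality at the off-path states visited by deviations.
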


\begin{proof}
See Appendix~\ref{app:proof_general_characterization}.
\end{proof}

Proposition~\ref{prop:general_characterization} replaces the agent's global action problem with a continuation-value equation, a pointwise incentive condition, and the induced informational and feasibility dynamics. It characterizes incentive-compatible implementability; individual rationality remains a separate restriction on the principal's feasible design set, as specified in Subsection~\ref{subsec:institutional_mechanisms}. The same characterization therefore applies under both the benchmark ex ante participation constraint and the dynamic participation constraints introduced under limited commitment.

The recursive system also reveals why disclosure and incentive design need not be separable. Disclosure enters the informational generator $\mathcal L^{\pi,a}$ and changes both the agent's continuation value and the action incentive condition, while the resulting action changes information production and feasibility. Disclosure must therefore be evaluated together with the continuation plan required to implement the induced action and state path.

This characterization provides the foundation for the principal's recursive problem and the marginal-value decomposition in the next section. It is related to continuation-value approaches in dynamic mechanism design and dynamic information provision; see \citet{PavanSegalToikka2014,Ball2023}. The additional feature here is that information is produced through action, governed through disclosure, and linked to future feasibility within the implementability system.

\section{Optimal Contracting and the Value of Information}
\label{sec:optimal_contracting_information}

Building on the recursive characterization above, this section formulates the principal's problem over implementable action and mechanism rules. It establishes the incompatibility between exogenous-feasibility informational efficiency and endogenous dynamic feasibility, the general nonseparability of incentive and disclosure design, and the decomposition of the marginal value of information into a direct informational benefit and an implementation cost. Feasibility enters through the continuation state, the shadow value of future opportunities, and the implementable action path.

\subsection{Optimal Contracting under Endogenous Informational Evolution}
\label{subsec:principal_recursive_problem}

By Proposition~\ref{prop:general_characterization}, the principal's contracting problem can be formulated directly over action and mechanism rules satisfying the recursive continuation-value equation, pointwise incentive condition, and induced state dynamics.

Let \(J(t,\mu,x)\) denote the principal's continuation value at time \(t\), given posterior belief \(\mu\) and feasibility state \(x\).

\begin{proposition}[Continuation-Value Formulation of the Principal's Problem]
\label{prop:principal_reduction}

Suppose Assumptions~\ref{ass:regularity_primitives}--\ref{ass:markov_state_representation} hold. Then the principal's dynamic contracting problem is equivalent to
\begin{equation}
J(t,\mu,x)
=
\sup_{(\pi_s,T_s,a_s)_{s\in[t,\bar T]}}
\mathbb E_t
\left[
\int_t^{\bar T}
v(T_s,a_s,\mu_s,x_s)\,ds
+
V_{\bar T}(\mu_{\bar T},x_{\bar T})
\right],
\end{equation}
subject to the informational and feasibility dynamics induced by $(a,T,\pi)$, the applicable participation requirement, and the existence of a continuation-value function $U$ satisfying
\begin{align}
a_s
&\in
\arg\max_{\tilde a\in\mathcal A}
\left\{
-c(\tilde a)
+
\mathcal L^{\pi,\tilde a}U(s,\mu_s,x_s)
+
U_x(s,\mu_s,x_s)G(x_s,\tilde a)
\right\},
\label{eq:principal_local_ic_constraint}
\\
-\partial_s U(s,\mu,x)
&=
u(T(s,\mu,x),\mu,x)
-
c(a_s)
+
\mathcal L^{\pi,a_s}U(s,\mu,x)
+
U_x(s,\mu,x)G(x,a_s),
\label{eq:principal_agent_continuation}
\\
U(\bar T,\mu,x)
&=
U_{\bar T}(\mu,x).
\label{eq:principal_agent_terminal}
\end{align}

\end{proposition}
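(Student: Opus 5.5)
The argument will show that the two feasible sets coincide and that the objectives agree on them, so the suprema are equal. The original problem \eqref{eq:principal_problem} maximizes the principal's expected payoff over triples $(\hat a,\pi,T)$ subject to (IR) and (IC), with the state $(\mu_t,x_t)$ induced by the Bayesian law \eqref{eq:belief} and the feasibility law \eqref{eq:feasibility}. The reformulated problem maximizes the same objective over triples $(a,\pi,T)$ subject to the induced dynamics, the participation requirement, and the existence of a $U$ solving \eqref{eq:principal_local_ic_constraint}--\eqref{eq:principal_agent_terminal}.

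\textbf{Step 1: the IC sets coincide.} By Proposition~\ref{prop:general_characterization}, an action rule $a$ is implementable under $(T,\pi)$ if and only if there is a continuation-value function $U$ satisfying the agent HJB identity \eqref{eq:prop_agent_hjb}, the terminal condition \eqref{eq:prop_agent_terminal}, the pointwise condition \eqref{eq:prop_agent_lic}, and the induced state dynamics. These are, term by term, the constraints \eqref{eq:principal_agent_continuation}, \eqref{eq:principal_agent_terminal} and \eqref{eq:principal_local_ic_constraint}. Hence (IC) is equivalent to the existence clause of the proposition.

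\textbf{Step 2: participation and the objective transfer unchanged.} By Assumption~\ref{ass:markov_state_representation}, the state path, and hence the objective in \eqref{eq:principal_problem}, depends only on $(a,\pi,T)$. Along an implementable path the target action attains the supremum in the agent HJB, so $U(0,\mu_0,x_0)$ equals the agent's ex ante expected payoff under $\hat a$. The benchmark (IR) is therefore the statement $U(0,\mu_0,x_0)\ge\bar U$. Under limited commitment the participation requirement is instead imposed along the path, as the phrase ``applicable participation requirement'' allows; in either case it is the same restriction in both problems. Equal feasible sets and equal objectives give equal values at $t=0$.

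\textbf{Step 3: extend to all $(t,\mu,x)$.} I would define $J(t,\mu,x)$ as the value of the reformulated problem restarted at $(t,\mu,x)$. The original problem restricted to continuation paths from that state is the same problem on $[t,\bar T]$: the Markov property lets me restart the state at $(\mu,x)$, and the regularity in Assumption~\ref{ass:regularity_primitives} lets me apply Proposition~\ref{prop:general_characterization} on the subinterval. So the equivalence holds at every date-state tuple, which is exactly what the displayed formula for $J$ asserts.

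\textbf{Main obstacle.} The delicate step is the mismatch between the pointwise condition ``for every admissible $(t,\mu,x)$'' in Proposition~\ref{prop:general_characterization} and an IC that only needs to hold almost surely along the realized path. I would handle this by interpreting ``admissible'' as reachable under the induced dynamics. Off-path states can then be assigned the maximizer of the Hamiltonian, which exists by compactness of $\mathcal A$ and continuity in Assumption~\ref{ass:regularity_primitives}(iii). This modifies $a$ only on a null set of histories, so the principal's objective is unaffected.
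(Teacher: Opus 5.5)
Your Steps~1--2 reproduce the paper's argument: replace (IC) by the recursive conditions of Proposition~\ref{prop:general_characterization}, note that participation and the objective are unchanged, and conclude that the feasible sets and values coincide. Step~3 only spells out what the paper asserts in its last sentence. The step that would fail is your treatment of the ``main obstacle.''

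Assigning a Hamiltonian maximizer at off-path states leaves \eqref{eq:principal_agent_continuation} intact only if $U$ already satisfies the agent's sup-HJB \eqref{eq:agent_hjb_general} there, i.e.\ only if $U$ is the agent's value function.
\begin{itemize}
\item In the direction ``implementable $\Rightarrow$ recursive,'' Assumptions~\ref{ass:regularity_primitives}--\ref{ass:markov_state_representation} give the sup-HJB at every state, so your patch is correct there.
\item In the converse direction you only know that $U$ solves the linear equation for the given rule. That $U$ solves it for the reassigned rule only if the old rule was already a maximizer off path, in which case nothing needed changing. Re-solving instead changes $U$ in a neighborhood of the path. That alters $U_x$ and $\mathcal L^{\pi,a}U$ at on-path points, so the on-path condition would have to be re-verified.
\end{itemize}
More fundamentally, the linear equation plus the pointwise condition only along the target path does not imply implementability. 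The verification argument needs the Hamiltonian inequality at every state a deviation can reach.

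For example, take $\dot x=-a$, $a\in[0,1]$, $x_0=1$, $\bar T=1$, $\delta\equiv0$, $T\equiv0$, $c(a)=\epsilon a$ with $0<\epsilon<1$, terminal payoff $U_{\bar T}(x)=2(1-x)^2-(1-x)$, and target $a\equiv0$. Then $U=U_{\bar T}$ and $U_x(t,\mu_0,1)=1$, so $\tilde a=0$ maximizes $-\epsilon\tilde a-\tilde a\,U_x(t,\mu_0,1)$ along the path $x_t\equiv1$. Yet the deviation $\tilde a\equiv1$ yields $1-\epsilon>0=U(0,\mu_0,1)$.

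So the right resolution is the opposite of reading ``admissible'' as ``reachable under the target's dynamics.'' Under that reading the sufficiency half of Proposition~\ref{prop:general_characterization} fails, as the example shows. The incentive constraint must instead be imposed, as in Proposition~\ref{prop:general_characterization}(ii), at every state reachable under some admissible action. Equivalently, $U$ must solve the agent's sup-HJB there.

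This is the reading the paper's proof tacitly adopts when it identifies the constraints with those of Proposition~\ref{prop:general_characterization}. With it, your Step~1 is already complete. The off-path reassignment is then needed, and valid, only in the necessity direction with $U$ taken to be the agent's value function.
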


\begin{proof}
See Appendix~\ref{app:proof_principal_reduction}.
\end{proof}

Here and below, the applicable participation requirement depends on the commitment regime: under full commitment, only the initial participation constraint $U(0,\mu_0,x_0)\geq\bar U$ is imposed, whereas under limited commitment the continuation participation constraint $U(s,\mu_s,x_s)\geq\bar U(s,\mu_s,x_s)$ must hold along the continuation path. Proposition~\ref{prop:principal_reduction} reduces the principal's problem to a recursive optimization over implementable action and mechanism rules and endogenous informational and feasibility states. This representation is used below to study informational feasibility, the joint design of incentives and disclosure, and the marginal value of information.

\subsection{Two Fundamental Impossibility Results}
\label{subsec:fundamental_impossibility_results}

Because the informational state evolves endogenously, informational design must be evaluated jointly with incentive provision and feasibility management. This raises two questions: whether a mechanism can attain the informational frontier available under exogenous feasibility while preserving endogenous dynamic feasibility, and whether incentive and disclosure design can be conducted sequentially without loss. The results below show that neither property can be guaranteed throughout the class of environments considered in this paper.

\subsubsection{Impossibility of Exogenous-Feasibility Informational Efficiency and Endogenous Dynamic Feasibility}
\label{subsubsec:informational_feasibility}

Fix a primitive informational environment $(\Theta,F,Q)$, a continuation state $(t,\mu,x)$, and a terminal informational criterion $\Phi(\mu_{\bar T})$, normalized so that a larger value represents a better terminal informational outcome. Let $\mathfrak P^{\mathrm{exo}}(t,\mu,x)$ denote the set of admissible action--disclosure paths satisfying the primitive action restrictions, disclosure restrictions, and informational dynamics when feasibility is exogenous to information-producing activity. Define the \textbf{exogenous-feasibility informational frontier} by
\[
\mathcal I^{\mathrm{exo}}(t,\mu,x)
:=
\sup_{(a,\pi)\in\mathfrak P^{\mathrm{exo}}(t,\mu,x)}
\mathbb E_t
\left[
\Phi(\mu_{\bar T})
\right].
\]

Let $\mathfrak P^{\mathrm{endo}}(t,\mu,x) \subseteq \mathfrak P^{\mathrm{exo}}(t,\mu,x)$ denote the subset also satisfying the endogenous feasibility dynamics and $x_s\in X$ for all $s\in[t,\bar T]$. Define the \textbf{endogenous-feasibility informational frontier} by
\[
\mathcal I^{\mathrm{endo}}(t,\mu,x)
:=
\sup_{(a,\pi)\in\mathfrak P^{\mathrm{endo}}(t,\mu,x)}
\mathbb E_t
\left[
\Phi(\mu_{\bar T})
\right].
\]
Since  $\mathfrak P^{\mathrm{endo}}(t,\mu,x) \subseteq \mathfrak P^{\mathrm{exo}}(t,\mu,x)$, we have
\[
\mathcal I^{\mathrm{endo}}(t,\mu,x)
\leq
\mathcal I^{\mathrm{exo}}(t,\mu,x).
\]

A mechanism is \textbf{informationally efficient under exogenous feasibility} at $(t,\mu,x)$ if its induced continuation path attains $\mathcal I^{\mathrm{exo}}(t,\mu,x)$. It satisfies \textbf{endogenous dynamic feasibility} if its induced feasibility path remains in $X$ throughout the continuation horizon. Informational efficiency here refers to attainment of the specified terminal informational criterion and is distinct from informational-efficiency notions in realization theory based on the informational or dimensional requirements of decentralized processes.

If information-producing activity does not restrict feasibility, the endogenous-feasibility restriction is vacuous and  $\mathfrak P^{\mathrm{endo}}(t,\mu,x) = \mathfrak P^{\mathrm{exo}}(t,\mu,x)$, so the two frontiers coincide. When information-producing activity consumes a scarce continuation state, however, the endogenous-feasibility information frontier may lie strictly below the exogenous-feasibility frontier.

\begin{theorem}[Impossibility of Exogenous-Feasibility Informational Efficiency and Endogenous Dynamic Feasibility]
\label{thm:informational_scarcity_impossibility}

For the class of informational environments considered in this paper, there exists an informational environment and a continuation state $(t,\mu,x)$ such that
\[
\mathcal I^{\mathrm{endo}}(t,\mu,x)
<
\mathcal I^{\mathrm{exo}}(t,\mu,x).
\]
Consequently, no dynamic mechanism can guarantee both of the following properties throughout this class:
\begin{enumerate}
\item[(i)] informational efficiency under exogenous feasibility;

\item[(ii)] endogenous dynamic feasibility, $x_s\in X$ for all $s\in[t,\bar T]$.
\end{enumerate}

In particular, no implementable and individually rational dynamic mechanism can guarantee both properties throughout the class.

\end{theorem}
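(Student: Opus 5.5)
The proof is by explicit construction inside the linear--Gaussian benchmark of Subsection~\ref{subsec:informational_dynamics}, which belongs to the class considered. Once one environment gives a strict gap, the ``consequently'' and ``in particular'' parts follow by logic alone.

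First, I fix the environment. Take a Gaussian prior with precision $\rho_0>0$, action set $\mathcal A=[0,\bar a]$, disclosure set $\mathcal D=[0,1]$, and depletion dynamics $G(x,a)=-g(a)$ with $g(a)=a$. Thus $\dot x_s=-a_s$ and $X=[0,\bar x]$. For the terminal criterion take $\Phi(\mu_{\bar T})=\rho_{\bar T}$, terminal precision, or equivalently minus the posterior variance. It is deterministic given the action--disclosure path by \eqref{eq:precision}, so
\[
\mathbb E_t[\Phi(\mu_{\bar T})]=\rho+\chi\int_t^{\bar T}\delta_s a_s^2\,ds .
\]

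Second, I compute the exogenous-feasibility frontier. Without the feasibility constraint, the path $a_s\equiv\bar a$, $\delta_s\equiv1$ is admissible and maximizes the integrand pointwise. Hence $\mathcal I^{\mathrm{exo}}(t,\rho,x)=\rho+\chi\bar a^2(\bar T-t)$.

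Third, I bound the endogenous frontier. Any path in $\mathfrak P^{\mathrm{endo}}$ must keep $x_s\ge0$, which forces $\int_t^{\bar T}a_s\,ds\le x$. Since $\delta_s\le1$ and $a_s\le\bar a$, we have
\[
\int_t^{\bar T}\delta_s a_s^2\,ds\le\bar a\int_t^{\bar T}a_s\,ds\le \bar a\,x .
\]
Now choose the state so that $x<\bar a(\bar T-t)$, i.e.\ remaining capacity is insufficient to sustain maximal experimentation over the remaining horizon; for instance $t=0$ and $x_0=\bar x<\bar a\bar T$. Then
\[
\mathcal I^{\mathrm{endo}}\le\rho+\chi\bar a x<\rho+\chi\bar a^2(\bar T-t)=\mathcal I^{\mathrm{exo}},
\]
which is the strict gap. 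The bound is also attained, by $a=\bar a$ on an interval of length $x/\bar a$ with full disclosure, so in this example the endogenous frontier equals $\rho+\chi\bar a x$.

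Finally, I derive the impossibility claims. If a mechanism satisfied (i) and (ii) at this state, its induced path would lie in $\mathfrak P^{\mathrm{endo}}$ and attain $\mathcal I^{\mathrm{exo}}$, contradicting the strict inequality. Implementable and individually rational mechanisms form a subset of all mechanisms, so the restriction to them follows immediately. Transfers never enter the precision dynamics, so no contractual instrument can close the gap.

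The main obstacle I anticipate is definitional. I need to confirm that this $\Phi$, a deterministic precision criterion, and the chosen primitives satisfy Assumptions~\ref{ass:regularity_primitives}--\ref{ass:markov_state_representation}, and that $\mathfrak P^{\mathrm{exo}}$ does not itself implicitly impose $x_s\in X$. I would handle the second point by reading $\mathfrak P^{\mathrm{exo}}$ literally, as constraining only actions, disclosure and informational dynamics. For the first, I would note that all primitives are smooth and that $(\rho,x)$ evolves deterministically, so both assumptions hold trivially.
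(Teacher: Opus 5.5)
Your proposal is correct and follows essentially the same route as the paper. Both use the linear--Gaussian environment with $\Phi(\mu_{\bar T})=\rho_{\bar T}$, obtain the exogenous frontier $\rho+\chi\bar a^2(\bar T-t)$ from full disclosure and maximal action, and apply the pathwise bound $\delta_s a_s^2\le\bar a a_s$ to get $\rho_{\bar T}-\rho\le\chi\bar a x<\chi\bar a^2(\bar T-t)$ whenever $x<\bar a(\bar T-t)$. Your added observation that $\rho+\chi\bar a x$ is actually attained is exactly the sharpness statement of Corollary~\ref{cor:gaussian_informational_frontier}.
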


\begin{proof}
See Appendix~\ref{app:proof_informational_scarcity_impossibility}.
\end{proof}

Theorem~\ref{thm:informational_scarcity_impossibility} identifies a technological rather than contractual obstruction. When information production consumes scarce future opportunities, no transfer, disclosure rule, or continuation promise can reproduce an informational frontier that the endogenous feasibility state cannot sustain.

The Gaussian counterexample constructed in the proof of Theorem~\ref{thm:informational_scarcity_impossibility} establishes a strict frontier gap for every continuation state in the open set
\[
0<x<\bar a(\bar T-t).
\]
If $x\geq\bar a(\bar T-t)$, the remaining feasibility stock can sustain maximal action over the entire continuation horizon. If $x<\bar a(\bar T-t)$, it cannot, and the endogenously feasible informational frontier is strictly lower.

The magnitude of the informational restriction can be characterized for a broader scalar subclass. Suppose the endogenous posterior process admits a scalar sufficient statistic $z_s=Z(\mu_s)$ whose induced dynamics satisfy
\[
\dot z_s
=
\delta_s h(a_s),
\qquad
\dot x_s
=
-g(a_s),
\]
where $a_s\in[0,\bar a]$, $\delta_s\in[0,1]$, $h(a)\geq0$, $g(a)\geq0$, $h(0)=g(0)=0$, and $g(a)>0$ for $a>0$.\footnote{The linear--Gaussian benchmark is the special case $z_s=\rho_s$, $h(a)=\chi a^2$, and $g(a)=a$, so that $\dot\rho_s=\delta_s\chi a_s^2$ and $\dot x_s=-a_s$.} Define
\[
\Lambda
:=
\sup_{a\in(0,\bar a]}
\frac{h(a)}{g(a)}
<
\infty,
\qquad
\bar h
:=
\max_{a\in[0,\bar a]}h(a).
\]

\begin{lemma}[General Informational Feasibility Bound]
\label{lem:general_informational_feasibility}

For any admissible continuation path from date $t$ to $\bar T$, the increase in the scalar informational sufficient statistic satisfies
\begin{equation}
\label{eq:general_information_feasibility_bound}
z_{\bar T}-z_t
\leq
\min
\left\{
\Lambda\bigl(x_t-x_{\bar T}\bigr),
\bar h(\bar T-t)
\right\}
\leq
\min
\left\{
\Lambda x_t,
\bar h(\bar T-t)
\right\}.
\end{equation}

\end{lemma}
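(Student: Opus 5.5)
The plan is to integrate the two state equations along an arbitrary admissible continuation path and compare the integrands pointwise. Fix a path $(a_s,\delta_s)_{s\in[t,\bar T]}$ with $a_s\in[0,\bar a]$ and $\delta_s\in[0,1]$. Assume it is measurable, so that both state equations hold in integral form:
\[
z_{\bar T}-z_t=\int_t^{\bar T}\delta_s h(a_s)\,ds,
\qquad
x_t-x_{\bar T}=\int_t^{\bar T}g(a_s)\,ds .
\]
If the dynamics are only stated in differential form, I would note that Assumption~\ref{ass:regularity_primitives}(ii) makes the state paths absolutely continuous, so the integral form is legitimate. The two bounds inside the minimum then come from two separate pointwise inequalities on the integrand $\delta_s h(a_s)$.

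For the first bound, I will show pointwise that $\delta_s h(a_s)\le \Lambda g(a_s)$ for every $s$, by splitting on the value of the action. If $a_s=0$, then $h(0)=g(0)=0$ and both sides vanish. If $a_s\in(0,\bar a]$, then $g(a_s)>0$, and the definition of $\Lambda$ as a supremum gives $h(a_s)\le \Lambda g(a_s)$; multiplying by $\delta_s\in[0,1]$ and using $h\ge0$ gives $\delta_s h(a_s)\le h(a_s)\le\Lambda g(a_s)$. Integrating this pointwise bound yields $z_{\bar T}-z_t\le \Lambda(x_t-x_{\bar T})$. Here $\Lambda<\infty$ is part of the hypothesis, and $\Lambda\ge0$ holds because $h\ge0$ and $g>0$ on $(0,\bar a]$.

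For the second bound, I will use $\delta_s h(a_s)\le h(a_s)\le \bar h$ and integrate over the continuation horizon, which gives $z_{\bar T}-z_t\le \bar h(\bar T-t)$. The maximum defining $\bar h$ should be treated as a supremum; if $h$ is continuous it is attained on the compact set $[0,\bar a]$, as the notation suggests. Combining the two integrated inequalities gives the first inequality in \eqref{eq:general_information_feasibility_bound}.

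For the outer inequality, admissibility requires $x_{\bar T}\in X=[0,\bar x]$, so $x_{\bar T}\ge0$ and hence $x_t-x_{\bar T}\le x_t$. Since $\Lambda\ge0$, this gives $\Lambda(x_t-x_{\bar T})\le\Lambda x_t$, and the minimum is monotone in each argument.

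I expect no step to be a real obstacle. The only delicate point is the degenerate case $a_s=0$, where the ratio $h/g$ is undefined; the case split above handles it, together with the requirement that $\Lambda$ be finite and nonnegative. Measurability of the controls, needed for the integrals, is part of admissibility.
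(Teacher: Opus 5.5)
Your proposal is correct and follows essentially the same route as the paper's proof: the pointwise bound $h(a)\le\Lambda g(a)$, with the case $a=0$ handled by $h(0)=g(0)=0$, integrated using $\delta_s\in[0,1]$ and $\dot x_s=-g(a_s)$, combined with the separate bound $h\le\bar h$, and finished with $x_{\bar T}\ge 0$. You also state explicitly that the outer inequality requires $\Lambda\ge 0$, which the paper uses without saying so.
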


\begin{proof}
See Appendix~\ref{app:proof_general_informational_feasibility}.
\end{proof}

The first component of the bound is determined by the remaining feasibility stock, whereas the second is determined by the remaining time horizon. Under full disclosure, the feasibility-stock component is attained when the supremum defining $\Lambda$ is attained at some $a^*>0$ and the horizon permits the relevant stock to be used at that action. The time component is attained when $\bar h$ is attained at some $a^h$ and the feasibility stock can sustain that action until $\bar T$. The bound is therefore sharp under the conditions associated with its binding component.

In the exogenous-feasibility benchmark, full disclosure and the action attaining $\bar h$ generate the informational increment $\bar h(\bar T-t)$. Under endogenous feasibility, the increment cannot exceed $\Lambda x_t$. Consequently, $\Lambda x_t < \bar h(\bar T-t)$ implies a strict gap between the two informational frontiers. The incompatibility therefore applies to a broad class of information-production and feasibility technologies.

The linear--Gaussian information-production technology yields the exact attainable informational set. Let $\mathcal R(t,\rho,x)$ denote the set of technologically attainable terminal precision levels from continuation state $(t,\rho,x)$.

\begin{corollary}[Sharpness of the Gaussian Informational Frontier]
\label{cor:gaussian_informational_frontier}

Under
\[
\dot\rho_s
=
\delta_s\chi a_s^2,
\qquad
\dot x_s
=
-a_s,
\qquad
a_s\in[0,\bar a],
\qquad
\delta_s\in[0,1],
\qquad
x_s\geq0,
\]
the attainable terminal-precision set is
\begin{equation}
\label{eq:gaussian_attainable_precision_set}
\mathcal R(t,\rho,x)
=
\left[
\rho,\,
\rho
+
\chi\bar a
\min
\left\{
x,\,
\bar a(\bar T-t)
\right\}
\right].
\end{equation}
Equivalently, a target $\widehat\rho\geq\rho$ is technologically attainable if and only if
\begin{equation}
\label{eq:gaussian_target_feasibility}
\widehat\rho-\rho
\leq
\chi\bar a
\min
\left\{
x,\,
\bar a(\bar T-t)
\right\}.
\end{equation}
If the corresponding full-disclosure maximal-action path is implementable and individually rational for the required active duration, then the target is implementable.
\end{corollary}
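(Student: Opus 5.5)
The proof proceeds in two directions: an upper bound on attainable terminal precision, and a construction of paths reaching every point of the claimed interval. Throughout, write $\tau:=\bar T-t$.

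\emph{Upper bound.} Apply Lemma~\ref{lem:general_informational_feasibility} with $z=\rho$, $h(a)=\chi a^2$ and $g(a)=a$. Then $h(a)/g(a)=\chi a$ is increasing, so $\Lambda=\chi\bar a$, and $\bar h=\chi\bar a^2$. This gives
\[
\rho_{\bar T}-\rho\leq \min\{\chi\bar a x,\ \chi\bar a^2\tau\}=\chi\bar a\min\{x,\bar a\tau\}.
\]
Alternatively, the bound follows directly. Since $\delta_s\le 1$ and $a_s\le\bar a$,
\[
\rho_{\bar T}-\rho=\int_t^{\bar T}\delta_s\chi a_s^2\,ds\le\chi\bar a\int_t^{\bar T}a_s\,ds .
\]
Feasibility $x_{\bar T}\ge0$ gives $\int_t^{\bar T} a_s\,ds\le x$, while $a_s\le\bar a$ gives $\int_t^{\bar T} a_s\,ds\le\bar a\tau$. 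The lower endpoint $\rho$ is immediate because $\dot\rho_s\ge0$.

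\emph{Attainability of the upper endpoint.} Set $D:=\min\{x/\bar a,\tau\}$. Use $a_s=\bar a$ and $\delta_s=1$ on $[t,t+D]$, and $a_s=0$ afterwards. This path uses exactly $\bar aD\le x$ units of the stock, so $x_s\ge0$ throughout. The terminal precision is $\rho+\chi\bar a^2D=\rho+\chi\bar a\min\{x,\bar a\tau\}$.

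\emph{Attainability of intermediate targets.} For a target $\widehat\rho$ in the interval, keep the same action path and choose disclosure to control the precision increment. One option is a constant $\delta\in[0,1]$ with $\delta\chi\bar a^2D=\widehat\rho-\rho$. Another is full disclosure on the subinterval $[t,t+D']$ with $D'\le D$ chosen so that $\chi\bar a^2D'=\widehat\rho-\rho$. Either choice delivers $\widehat\rho$ by continuity and the intermediate value property. This establishes equation~\eqref{eq:gaussian_attainable_precision_set}, and equation~\eqref{eq:gaussian_target_feasibility} is a restatement of it.

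\emph{Implementability.} The last sentence of the corollary follows from the construction. Any target is reached by the full-disclosure maximal-action path run for a duration $D'\le D$. If that path is implementable and individually rational for the required active duration, then it is implementable in the sense of Proposition~\ref{prop:general_characterization}. There is therefore nothing more to prove here, beyond noting that the target is attained along an implementable path.

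The main obstacle is modest. The only step needing care is computing $\Lambda$ correctly, namely that the supremum of $\chi a$ over $(0,\bar a]$ is attained at $\bar a$, so that the bound is exactly tight rather than merely an inequality.
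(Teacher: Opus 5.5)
Your proof is correct and follows the paper's argument: the same bound $\chi\int\delta_s a_s^2\,ds\le\chi\bar a\int a_s\,ds\le\chi\bar a\min\{x,\bar a(\bar T-t)\}$, and the same construction with full disclosure and $a_s=\bar a$ on an interval of length $(\widehat\rho-\rho)/(\chi\bar a^2)$ and $a_s=0$ elsewhere (invoking Lemma~\ref{lem:general_informational_feasibility} with $\Lambda=\chi\bar a$, $\bar h=\chi\bar a^2$ is an equally valid shortcut for the bound). One small consistency point: for the final implementability clause you must switch the action off outside the disclosure window, as your last paragraph implicitly does, rather than ``keep the same action path'' on all of $[t,t+D]$, because only the former is the full-disclosure maximal-action path that the statement refers to.
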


\begin{proof}
See Appendix~\ref{app:proof_gaussian_informational_feasibility}.
\end{proof}

For the terminal-precision criterion,
\[
\mathcal I^{\mathrm{endo}}(t,\rho,x)
=
\rho
+
\chi\bar a
\min
\left\{
x,\,
\bar a(\bar T-t)
\right\},
\]
whereas
\[
\mathcal I^{\mathrm{exo}}(t,\rho,x)
=
\rho
+
\chi\bar a^2(\bar T-t).
\]
Hence
\[
\mathcal I^{\mathrm{endo}}(t,\rho,x)
<
\mathcal I^{\mathrm{exo}}(t,\rho,x)
\quad\Longleftrightarrow\quad
x
<
\bar a(\bar T-t).
\]
Corollary~\ref{cor:gaussian_informational_frontier} therefore gives the exact boundary between the region in which endogenous feasibility restricts informational attainment and the region in which the exogenous-feasibility frontier is technologically attainable.

Theorem~\ref{thm:informational_scarcity_impossibility} and Corollary~\ref{cor:gaussian_informational_frontier} concern the amount of information that can be generated. Section~6 shows how scarcity of future informational opportunities also affects the timing of information production and disclosure.

\subsubsection{Impossibility of Separating Incentive and Disclosure Design}
\label{subsubsec:design_nonseparability}

In environments with a fixed informational structure, incentive design can be conducted conditional on that structure, and disclosure may be treated separately when it does not alter implementability. Here, disclosure changes the continuation return to information-producing action and may therefore change both the implemented action and the resulting informational and feasibility paths.

For a continuation-value function $U$, define the implementable-action correspondence under disclosure rule $\pi$ by
\begin{equation}
\label{eq:implementable_action_correspondence}
\mathcal A^I(t,\mu,x;U,\pi)
:=
\arg\max_{a\in\mathcal A}
\left\{
-c(a)
+
\mathcal L^{\pi,a}U(t,\mu,x)
+
U_x(t,\mu,x)G(x,a)
\right\}.
\end{equation}
A prescribed action $a^I$ satisfies the \textbf{pointwise incentive condition} under $\pi$ at $(t,\mu,x)$ for continuation plan $U$ if $a^I\in\mathcal A^I(t,\mu,x;U,\pi)$. The \textbf{local separation property} between disclosure rules $\pi$ and $\pi'$ holds for $a^I$ at $(t,\mu,x)$ if the same action satisfies the pointwise incentive condition under both rules with $U$ held fixed. Thus, ``local'' refers to separation at a fixed date-state tuple rather than to a neighborhood of actions, states, or disclosure rules. A sequential design procedure is \textbf{disclosure-independent} if it first selects an action and incentive-continuation plan and subsequently optimizes disclosure while retaining that action and plan.

\begin{theorem}[Impossibility of Separating Incentive and Disclosure Design]
\label{thm:design_nonseparability}

For the class of informational environments considered in this paper, no sequential dynamic design procedure can guarantee both of the following properties:
\begin{enumerate}
\item[(i)] an action and incentive-continuation plan specified independently of the disclosure rule remains implementable as the disclosure rule varies;

\item[(ii)] holding that action and incentive-continuation plan fixed, optimizing disclosure attains the value of the joint incentive-and-disclosure design problem.
\end{enumerate}

Consequently, incentive and disclosure design cannot generally be separated.
\end{theorem}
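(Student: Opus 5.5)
The argument is a counterexample: since the theorem claims only that no sequential procedure works \emph{throughout} the class, one explicit environment inside the linear--Gaussian benchmark suffices. It must violate property (i), or, if (i) happens to hold there, violate (ii). The construction uses the pointwise characterization of Lemma~\ref{lem:agent_local_ic} together with the Gaussian first-order condition \eqref{eq:agent_ic_gaussian}.

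The environment is $dS_t=a_t\theta\,dt+\sigma dW_t$ with disclosure intensity $\delta_t\in\mathcal D=[0,1]$. Take quadratic cost $c(a)=\tfrac{k}{2}a^2$ and depletion $\dot x_t=-a_t$ with $x_0$ large enough that feasibility never binds, so $U_x$ plays no role. The agent should value precision, for example $U_{\bar T}(\rho,x)=\beta\rho$ with $\beta>0$, or a flow payoff increasing in $\rho$. With a state-independent transfer plan, the continuation value then satisfies $U_\rho>0$.

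\textbf{Step 1: failure of (i).} Fix $U$ and evaluate the action-relevant Hamiltonian from Lemma~\ref{lem:agent_local_ic}. In the precision representation $\mathcal L^{\pi,a}U=\delta\chi a^2U_\rho$, so
\[
\mathcal H^A=-\tfrac{k}{2}a^2+\delta\chi a^2U_\rho-U_x a .
\]
With $U_x=0$, the implementable-action correspondence is determined by the sign of $\delta\chi U_\rho-k/2$: it is $\{0\}$ when this is negative and $\{\bar a\}$ when positive.

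Choose a plan $U$ and a date-state tuple with $\chi U_\rho<k/2<\chi U_\rho+\varepsilon$; if needed, add a small convex term to $c$ so the optimum is interior and strictly depends on $\delta$. Then some target $a^I>0$ is implementable under full disclosure $\delta=1$ but not under $\delta=0$, where $\mathcal A^I=\{0\}$. The key point is that $\delta$ enters the marginal return $2\delta a U_\rho/\sigma^2$ in \eqref{eq:agent_ic_gaussian} multiplicatively. So for any fixed $U$ with $U_\rho\neq0$, the interior first-order condition holds for at most one value of $\delta$, and the local separation property fails. Hence a plan fixed independently of disclosure cannot stay implementable as disclosure varies.

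\textbf{Step 2: failure of (ii).} It remains to rule out procedures that restrict attention to disclosure rules compatible with the fixed plan. Give the principal a terminal payoff $V_{\bar T}$ increasing in $\rho$, linear transfers, and small $k$, so that joint design optimally uses full disclosure with positive action.

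Suppose a sequential procedure first fixes an action and plan chosen under some reference disclosure, say $\delta=0$, which is natural for instance under a no-information benchmark. Retaining that action forces $a=0$, or forces re-optimization of $U$. Holding the plan fixed, the best disclosure-constrained value is then attained with no information produced. This is strictly below the joint value, which obtains $\rho_0+\chi\bar a\min\{x,\bar a\bar T\}$ at the cost of implementing transfers. I would compute both values explicitly, using Corollary~\ref{cor:gaussian_informational_frontier} for the attainable precision, to exhibit the strict gap.

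\textbf{Main obstacle.} The hard part is making ``no sequential procedure'' precise, since a procedure could pick its first-stage plan cleverly, for example by anticipating full disclosure. I would handle this with a pair of environments, or a parameter range, in which the jointly optimal disclosure differs across environments while the first-stage plan cannot depend on that information. Alternatively, I would note that any first stage that anticipates the optimal disclosure is no longer disclosure-independent, which contradicts the definition.
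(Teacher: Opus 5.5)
Your overall strategy matches the paper's: a linear--Gaussian counterexample in which nondisclosure removes the informational return to action, an agent who values precision makes $\bar a$ implementable under full disclosure, and joint design strictly beats the zero-action outcome. But the environment you build is not a counterexample. You add the depletion law $\dot x_t=-a_t$ and say that, because $x_0$ is large, $U_x$ ``plays no role.'' A slack capacity constraint does not make $U_x$ vanish. The state $x_t$ is publicly observable, the paper allows transfers to depend on public state variables, and $\dot x_t=-a_t$ makes $x$ reveal the action. A first-stage plan can therefore pay for depletion, e.g.\ with the terminal transfer $w-\beta\rho_{\bar T}+\gamma(x_0-x_{\bar T})$, $\gamma\geq k\bar a$. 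This gives $U_\rho=0$, $U_x=-\gamma$, and $\mathcal H^A=-\tfrac{k}{2}a^2+\gamma a$, which does not involve $\delta$. So $\bar a$ is implemented under every $\delta\in[0,1]$, with participation binding for a single $w$, and (i) holds. With quasi-linear payoffs and binding participation, the principal's value depends only on the implemented action and disclosure paths. Choosing $\delta=1$ in the second stage therefore attains the joint value, and (ii) holds too. Your Step~1 only shows that one particular plan (state-independent transfers) fails. That is not enough, because the theorem quantifies over every plan a procedure might select. Your Step~2 claim that retaining the action at $\delta=0$ ``forces $a=0$'' is also false in this environment.

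The repair is what the paper does: hold feasibility constant, so that under $\delta=0$ neither the public signal nor any public state depends on the action. Then every admissible transfer has an action-independent distribution, and $a=0$ is the unique best response to \emph{every} plan, not merely to a fixed $U$. This also removes your ``main obstacle'' inside a single environment. You need neither a pair of environments nor the definitional argument, which does not work anyway: the paper's disclosure-independent procedure is not forbidden from anticipating the disclosure it will choose. The dichotomy is this. Any first-stage plan prescribing a positive action violates (i) at $\delta=0$. Any plan prescribing zero action violates (ii). To see the latter, take a principal terminal payoff $M\rho_{\bar T}$. With binding participation, a constant action $a$ and intensity $\delta$ give the principal $(M+\beta)(\rho_0+\delta\chi a^2\bar T)-\tfrac{k}{2}a^2\bar T-\bar U$. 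Full disclosure with $\bar a$ therefore beats zero action by $[(M+\beta)\chi-\tfrac{k}{2}]\bar a^2\bar T>0$. Finally, your Step~1 inequality is reversed. With $U_x=0$, $\mathcal H^A=(\delta\chi U_\rho-\tfrac{k}{2})a^2$, so $\chi U_\rho<k/2$ gives $\mathcal A^I=\{0\}$ under both $\delta=0$ and $\delta=1$, and separation actually holds. You need $\chi U_\rho>k/2$ (the paper's $\eta/\sigma^2>\kappa/2$), which yields $\mathcal A^I=\{\bar a\}$ at $\delta=1$ directly. The boundary action suffices, so the detour through interior actions and an added convex cost is unnecessary.
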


\begin{proof}
See Appendix~\ref{app:proof_design_nonseparability}.
\end{proof}

Theorem~\ref{thm:design_nonseparability} compares joint and sequential incentive-and-disclosure design. Under joint design, the principal optimizes disclosure and the incentive-continuation plan together, so the continuation plan may adjust with the disclosure rule to preserve implementation. Under sequential design, the principal first selects an action and incentive-continuation plan and then optimizes disclosure while holding that plan fixed. The theorem shows that this sequential procedure may be lossy because changing disclosure can alter the agent's incentives, so the initially selected action and continuation plan need not remain implementable or optimal.

The following proposition gives the exact pointwise criterion for retaining a prescribed action and identifies stronger conditions for lossless sequential design.

\begin{proposition}[Criterion for Sequential Separation]
\label{prop:separation_criterion}

Suppose Assumptions~\ref{ass:regularity_primitives}--\ref{ass:markov_state_representation} hold.
\begin{enumerate}
\item[(i)] For a prescribed action $a^I$, the local separation property between disclosure rules $\pi$ and $\pi'$ holds at $(t,\mu,x)$ under continuation plan $U$ if and only if
\begin{equation}
\label{eq:separation_invariance}
a^I
\in
\mathcal A^I(t,\mu,x;U,\pi)
\cap
\mathcal A^I(t,\mu,x;U,\pi').
\end{equation}

\item[(ii)] A sufficient condition for invariance of the pointwise incentive condition across disclosure rules is
\[
\mathcal A^I(t,\mu,x;U,\pi)
=
\mathcal A^I(t,\mu,x;U,\pi')
\]
for every pair of admissible disclosure rules $\pi$ and $\pi'$ relevant to the continuation problem.

\item[(iii)] The joint incentive-and-disclosure design problem can be solved sequentially without loss if, at every reachable continuation state, the retained incentive-continuation plan remains feasible and implementable as disclosure varies, and the principal's feasible set and continuation objective are separable between the incentive-continuation plan and the disclosure rule.

\item[(iv)] If
\[
\mathcal A^I(t,\mu,x;U,\pi)
\cap
\mathcal A^I(t,\mu,x;U,\pi')
=
\varnothing
\]
for some pair of admissible disclosure rules at a reachable continuation state, then no action satisfies the pointwise incentive condition under both disclosure rules with the same continuation-value function. The local separation property therefore fails for every prescribed action at that state.
\end{enumerate}

\end{proposition}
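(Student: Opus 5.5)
The plan is to prove the four parts in order, since each is essentially a consequence of the definition of the implementable-action correspondence $\mathcal A^I$ in \eqref{eq:implementable_action_correspondence} together with the recursive characterization in Proposition~\ref{prop:general_characterization}. Part (i) follows directly from the definitions. The local separation property for $a^I$ at $(t,\mu,x)$ under $U$ means that $a^I$ satisfies the pointwise incentive condition under $\pi$ and under $\pi'$ with $U$ fixed. By definition, satisfying that condition under $\pi$ means $a^I\in\mathcal A^I(t,\mu,x;U,\pi)$, and likewise for $\pi'$. The conjunction of the two memberships is exactly membership in the intersection \eqref{eq:separation_invariance}, and both directions of the equivalence are immediate. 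The only remark needed is that Lemma~\ref{lem:agent_local_ic} ensures this pointwise condition is the right implementability requirement at each date-state.

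For part (ii), suppose $\mathcal A^I(\cdot;U,\pi)=\mathcal A^I(\cdot;U,\pi')$ for all relevant pairs. Then any $a^I$ in one set lies in the other, so by (i) it satisfies the pointwise condition under every admissible disclosure rule. That is the claimed invariance. Part (iv) is the contrapositive use of (i). If the intersection is empty at some reachable state, then no $a^I$ can belong to it, so by (i) local separation fails for every prescribed action. Reachability is only needed so that the failure is economically relevant along the path; it plays no role in the logic.

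Part (iii) is the substantive step, and I expect it to be the main obstacle, because it must link the pointwise statements to the global values in Proposition~\ref{prop:principal_reduction}. I would fix the joint optimum $(a^*,U^*,\pi^*)$, or an $\varepsilon$-optimal triple if no maximizer exists. Under the separability hypothesis, the principal's continuation objective and feasible set can be written as a product: an incentive-continuation component $(a,T,U)$ and a disclosure component $\pi$, with feasibility of one not depending on the other. The hypothesis that the retained plan remains feasible and implementable as disclosure varies means that, for the plan selected in the first stage, the constraints \eqref{eq:principal_local_ic_constraint}--\eqref{eq:principal_agent_terminal} and the applicable participation requirement continue to hold for every admissible $\pi$.

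Given this, I would argue that the sequential value is at least the joint value. Let the first stage select the incentive-continuation plan that is optimal on the incentive factor. Then optimize $\pi$ on the disclosure factor, which is feasible by the retention hypothesis. Separability means the two suprema combine to give the supremum of the joint problem, so $\sup_{(a,U)}\sup_\pi=\sup_{(a,U,\pi)}$ over a product feasible set. The reverse inequality is trivial, since any sequential choice is feasible for the joint problem. The delicate point is checking that ``reachable continuation states'' are themselves unaffected by $\pi$, or at least that the retention condition holds on the union of states reachable under any $\pi$. I would handle this by reading the hypothesis as applying along every path induced by admissible disclosure rules, and then apply the dynamic-programming principle from Assumption~\ref{ass:markov_state_representation} to pass from pointwise to pathwise implementability via Proposition~\ref{prop:general_characterization}.
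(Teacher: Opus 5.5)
Your proposal is correct and follows essentially the same route as the paper: parts (i), (ii), and (iv) are read off directly from the definition of $\mathcal A^I$. Part (iii) reduces, under the retention and separability hypotheses, to writing the joint problem as an iterated maximization over a product feasible set. Your extra care about $\varepsilon$-optimal plans and about states reachable under different disclosure rules goes slightly beyond the paper's argument but does not change it.
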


\begin{proof}
See Appendix~\ref{app:proof_separation_criterion}.
\end{proof}

Part~(i) gives the exact pointwise condition for retaining a prescribed action. The equality in part~(ii) is stronger because it preserves the entire implementable-action correspondence rather than only a particular prescribed action.

Under the linear--Gaussian information-production technology, the exact local separation boundary for a positive interior action has a particularly simple form.

\begin{corollary}[Exact Local Separation Condition under Gaussian Learning]
\label{cor:gaussian_nonseparability}

Suppose Assumptions~\ref{ass:regularity_primitives}--\ref{ass:markov_state_representation} hold and the information-production technology satisfies the linear--Gaussian specification. Fix a date-state tuple $(t,\rho,x)$, a differentiable continuation-value function $U(t,\rho,x)$, and two disclosure intensities $\delta,\delta'\in[0,1]$. Let $a\in\operatorname{int}\mathcal A$ satisfy the pointwise incentive condition under disclosure intensity $\delta$. Suppose further that, under disclosure intensity $\delta'$, the derivative of the action-relevant continuation-value Hamiltonian satisfies the single-crossing property stated in Lemma~\ref{lem:agent_local_ic}.

Then the same action $a$ satisfies the pointwise incentive condition under disclosure intensity $\delta'$ with the same continuation-value function if and only if
\begin{equation}
\label{eq:gaussian_separation_condition}
(\delta-\delta')aU_\rho(t,\rho,x)
=
0.
\end{equation}

\end{corollary}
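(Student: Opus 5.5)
The plan is to reduce everything to the first-order condition \eqref{eq:agent_ic_gaussian} and compare it across the two disclosure intensities.

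\emph{Step 1: the Hamiltonian.} Under the linear--Gaussian technology the informational state is $\rho$, and by \eqref{eq:precision} its drift is $\delta\chi a^2$ with no diffusion term. The generator therefore acts on $U(t,\rho,x)$ as $\mathcal L^{\delta,a}U=\delta\chi a^2U_\rho$. The action-relevant Hamiltonian becomes
\[
\mathcal H^A_\delta(a)=-c(a)+\delta\chi a^2U_\rho(t,\rho,x)+U_x(t,\rho,x)G(x,a),
\]
which is differentiable in $a$ with derivative
\[
D_\delta(a)=-c'(a)+2\delta\chi aU_\rho+U_xG_a(x,a).
\]
These derivatives differ only in the informational term:
\[
D_{\delta'}(a)=D_\delta(a)-2(\delta-\delta')\chi aU_\rho.
\]

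\emph{Step 2: use the hypothesis at $\delta$.} Since $a$ is interior and satisfies the pointwise incentive condition under $\delta$, Lemma~\ref{lem:agent_local_ic} gives $D_\delta(a)=0$. Hence
\[
D_{\delta'}(a)=-2(\delta-\delta')\chi aU_\rho.
\]
Because $\chi=\sigma^{-2}>0$, this vanishes if and only if $(\delta-\delta')aU_\rho=0$.

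\emph{Step 3: both directions.} For necessity, suppose $a$ satisfies the pointwise incentive condition under $\delta'$. Since $a\in\operatorname{int}\mathcal A$, the necessary first-order condition of Lemma~\ref{lem:agent_local_ic} gives $D_{\delta'}(a)=0$, so \eqref{eq:gaussian_separation_condition} follows. For sufficiency, suppose \eqref{eq:gaussian_separation_condition} holds. Then $D_{\delta'}(a)=0$ at the interior point $a$. The assumed single-crossing property of $D_{\delta'}$ says it is nonnegative below this candidate and nonpositive above it. By the sufficiency part of Lemma~\ref{lem:agent_local_ic}, $a$ is a global maximizer of $\mathcal H^A_{\delta'}$ on $\mathcal A$, so $a\in\mathcal A^I(t,\rho,x;U,\delta')$. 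This is exactly criterion (i) of Proposition~\ref{prop:separation_criterion}.

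\emph{Main obstacle.} The step needing care is the generator computation in Step 1. It relies on $U$ depending on the Gaussian posterior only through $\rho$, which is the maintained sufficiency assumption. It also relies on precision evolving deterministically given the action, so that no second-order term in $\rho$ appears. Once that is granted, the rest is algebra plus the two halves of Lemma~\ref{lem:agent_local_ic}. The one subtlety there is that single crossing must hold for $\delta'$ around the specific candidate $a$, which is precisely what the hypothesis supplies.
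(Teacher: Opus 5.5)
Your proof is correct and follows the paper's argument essentially verbatim: for necessity the paper subtracts the interior first-order conditions \eqref{eq:agent_ic_gaussian} under $\delta$ and $\delta'$, and for sufficiency it notes that the two conditions coincide, so the single-crossing hypothesis under $\delta'$ delivers global pointwise optimality. Your explicit generator computation $\mathcal L^{\delta,a}U=\delta\chi a^2U_\rho$ simply makes visible the step the paper takes for granted when it writes down \eqref{eq:agent_ic_gaussian}.
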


\begin{proof}
See Appendix~\ref{app:proof_gaussian_nonseparability}.
\end{proof}

Corollary~\ref{cor:gaussian_nonseparability} gives the exact local separation boundary for positive interior actions. Because $a>0$, the same action can satisfy the pointwise incentive condition under distinct disclosure intensities only if posterior precision has no marginal continuation value for the agent. Whenever $\delta\neq\delta'$ and $U_\rho(t,\rho,x)\neq0$, disclosure changes the agent's marginal return to action and the local separation property fails.

The corollary does not characterize boundary actions such as $a=0$, whose incentive compatibility is governed by boundary inequalities rather than the interior first-order condition. More generally, separation is restored when disclosure does not enter the action incentive condition, information production is independent of action, or posterior beliefs do not affect the agent's continuation value.

\subsection{The Marginal Value of Information}
\label{subsec:marginal_value_information}

The continuation-value formulation developed above applies to a general informational environment. A perturbation of the informational state affects the principal's continuation value through two broad channels. The first is the \textbf{direct informational-benefit channel}: posterior beliefs affect future decisions, learning opportunities, continuation payoffs, and the value of future feasibility. The second is the \textbf{implementability-adjustment channel}: changes in the informational state may require adjustments in transfers, actions, disclosure rules, continuation promises, participation margins, and feasibility paths to preserve implementability.

Throughout this subsection, we restrict attention to informational perturbations evaluated around an optimal implementable continuation plan. Fix such an optimal continuation plan $(a_s,T_s,\pi_s)_{s\in[t,\bar T]}$, with associated continuation-value process $(U_s)_{s\in[t,\bar T]}$ and induced feasibility-state trajectory $(x_s)_{s\in[t,\bar T]}$ satisfying $\dot x_s=G(x_s,a_s)$. Consider an admissible differentiable perturbation $\{\mu_s^\varepsilon\}_{s\in[t,\bar T]}$ of the posterior process, with $\mu_s^0=\mu_s$. Let $(a_s^\varepsilon,T_s^\varepsilon,\pi_s^\varepsilon)_{s\in[t,\bar T]}$, $(U_s^\varepsilon)_{s\in[t,\bar T]}$, and $(x_s^\varepsilon)_{s\in[t,\bar T]}$ denote the corresponding implementability-preserving continuation choices and state trajectory, with
\[
(a_s^0,T_s^0,\pi_s^0,U_s^0,x_s^0)
=
(a_s,T_s,\pi_s,U_s,x_s).
\]

Under the standard envelope condition, the first-order effect of further reoptimization around the optimal unperturbed continuation plan vanishes. The marginal value generated by the implementability-preserving informational perturbation can therefore be written directly as the directional G\^{a}teaux derivative of the principal's optimized continuation value:
\begin{equation}
\label{eq:general_information_derivative}
D_\mu J(t,\mu_t,x_t)
:=
\left.
\frac{d}{d\varepsilon}
J(t,\mu_t^\varepsilon,x_t)
\right|_{\varepsilon=0}.
\end{equation}

Define the associated first-order perturbation directions by
\[
\begin{aligned}
h_s^\mu
&:=
\left.
\frac{d\mu_s^\varepsilon}{d\varepsilon}
\right|_{\varepsilon=0},
&
h_s^T
&:=
\left.
\frac{dT_s^\varepsilon}{d\varepsilon}
\right|_{\varepsilon=0},
\\
h_s^a
&:=
\left.
\frac{da_s^\varepsilon}{d\varepsilon}
\right|_{\varepsilon=0},
&
h_s^x
&:=
\left.
\frac{dx_s^\varepsilon}{d\varepsilon}
\right|_{\varepsilon=0}.
\end{aligned}
\]

The resulting marginal value of information admits the accounting representation
\begin{equation}
\label{eq:general_information_decomposition}
D_\mu J(t,\mu_t,x_t)
=
\mathcal B_\mu(t,\mu_t,x_t)
-
\mathcal C_\mu(t,\mu_t,x_t),
\end{equation}

\[
\mathcal B_\mu(t,\mu_t,x_t)
=
\mathbb E_t
\left[
\int_t^{\bar T}
\left\langle
D_\mu v(T_s,a_s,\mu_s,x_s),
h_s^\mu
\right\rangle\,ds
+
\left\langle
D_\mu V_{\bar T}(\mu_{\bar T},x_{\bar T}),
h_{\bar T}^\mu
\right\rangle
\right]
\]
collects the direct marginal informational effects on the principal's primitive flow and terminal payoffs, and
\[
\begin{aligned}
\mathcal C_\mu(t,\mu_t,x_t)
=
-\mathbb E_t
\bigg[
&\int_t^{\bar T}
\Bigl(
v_T(T_s,a_s,\mu_s,x_s)h_s^T
+
v_a(T_s,a_s,\mu_s,x_s)h_s^a
\\
&\hspace{5.5em}
+
v_x(T_s,a_s,\mu_s,x_s)h_s^x
\Bigr)\,ds
+
V_{\bar T,x}(\mu_{\bar T},x_{\bar T})h_{\bar T}^x
\bigg],
\end{aligned}
\]
where $\langle\cdot,\cdot\rangle$ denotes the duality pairing between the derivative with respect to the posterior and the corresponding perturbation direction, playing the role of the usual inner-product pairing when the posterior state is infinite-dimensional.

The term $\mathcal C_\mu$ collects the continuation-plan adjustments required to preserve implementability. Because these adjustments may operate through transfers, actions, and feasibility, $\mathcal C_\mu$ may be positive or negative and need not admit a purely monetary interpretation. A negative value means that the implementability-preserving adjustments themselves raise the principal's continuation value. The marginal value of information may therefore differ from its direct decision value.

To obtain the sharper monetary decomposition used in the disclosure analysis, we now impose quasi-linear utility. Under the envelope evaluation above, the implemented action and disclosure rules are held at their unperturbed optimal values, while transfers and continuation promises adjust to preserve incentive compatibility and participation. Because the feasibility law depends on the implemented action, the feasibility path is also unchanged.

\begin{assumption}[Quasi-Linear Utility Specification]
\label{ass:quasilinear}

The agent's flow utility and the principal's flow payoff take the form
\[
u(\mu_t,x_t)+T_t-c(a_t)
\]
and
\[
v(a_t,\mu_t,x_t)-T_t,
\]
respectively. The terminal payoff functions are quasi-linear in the terminal transfer:
\[
U_{\bar T}(\mu,x)
=
u_{\bar T}(\mu,x)+T_{\bar T}(\mu,x),
\qquad
V_{\bar T}(\mu,x)
=
v_{\bar T}(\mu,x)-T_{\bar T}(\mu,x),
\]
where $u_{\bar T}$ and $v_{\bar T}$ are transfer-independent terminal payoff components.
\end{assumption}

Along an optimal implementable continuation path $(a_s,\pi_s,T_s)_{s\in[t,\bar T]}$, the principal's continuation value takes the form
\begin{equation}
\label{eq:principal_quasilinear_value}
J(t,\mu,x)
=
\mathbb E_t
\left[
\int_t^{\bar T}
\bigl(
v(a_s,\mu_s,x_s)-T_s
\bigr)\,ds
+
v_{\bar T}(\mu_{\bar T},x_{\bar T})
-
T_{\bar T}
\right].
\end{equation}
Equivalently, $J$ satisfies the recursive equation
\begin{equation}
\label{eq:principal_hjb_quasilinear}
-\partial_tJ(t,\mu,x)
=
v(a_t,\mu,x)-T_t
+
\mathcal L^{\pi,a_t}J(t,\mu,x)
+
J_x(t,\mu,x)G(x,a_t),
\end{equation}
with terminal condition
\begin{equation}
\label{eq:principal_terminal_quasilinear}
J(\bar T,\mu,x)
=
v_{\bar T}(\mu,x)-T_{\bar T}(\mu,x).
\end{equation}

Defining the full-state infinitesimal generator by
\[
\overline{\mathcal L}^{\pi,a}J(t,\mu,x)
:=
\mathcal L^{\pi,a}J(t,\mu,x)
+
J_x(t,\mu,x)G(x,a),
\]
equation~\eqref{eq:principal_hjb_quasilinear} becomes
\[
-\partial_tJ(t,\mu,x)
=
v(a_t,\mu,x)-T_t
+
\overline{\mathcal L}^{\pi,a_t}J(t,\mu,x).
\]

\begin{theorem}[General Decomposition of the Marginal Value of Information]
\label{thm:general_decomposition}

Suppose Assumptions~\ref{ass:regularity_primitives}--\ref{ass:quasilinear} hold. Then, along an optimal implementable continuation path,
\begin{equation}
\label{eq:general_value_decomposition}
D_\mu J(t,\mu_t,x_t)
=
B_\mu(t,\mu_t,x_t)
-
K_\mu(t,\mu_t,x_t),
\end{equation}
where
\[
B_\mu
=
\left.
\frac{d}{d\varepsilon}
\mathbb E_t
\left[
\int_t^{\bar T}
v(a_s,\mu_s^\varepsilon,x_s)\,ds
+
v_{\bar T}(\mu_{\bar T}^\varepsilon,x_{\bar T})
\right]
\right|_{\varepsilon=0},
\]
and
\[
K_\mu
=
\left.
\frac{d}{d\varepsilon}
\mathbb E_t
\left[
\int_t^{\bar T}
T_s^\varepsilon\,ds
+
T_{\bar T}^\varepsilon
\right]
\right|_{\varepsilon=0}.
\]
Consequently,
\begin{equation}
\label{eq:general_value_boundary}
D_\mu J(t,\mu_t,x_t)
\gtreqless 0
\quad\Longleftrightarrow\quad
B_\mu(t,\mu_t,x_t)
\gtreqless
K_\mu(t,\mu_t,x_t).
\end{equation}

\end{theorem}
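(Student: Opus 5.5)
The argument starts from the quasi-linear representation \eqref{eq:principal_quasilinear_value} of the principal's optimized continuation value along the fixed optimal implementable plan $(a_s,T_s,\pi_s)_{s\in[t,\bar T]}$. Write $J(t,\mu_t^\varepsilon,x_t)$ as the value of the implementability-preserving perturbed plan:
\[
J(t,\mu_t^\varepsilon,x_t)
=
\mathbb E_t\left[\int_t^{\bar T}\bigl(v(a_s^\varepsilon,\mu_s^\varepsilon,x_s^\varepsilon)-T_s^\varepsilon\bigr)\,ds+v_{\bar T}(\mu_{\bar T}^\varepsilon,x_{\bar T}^\varepsilon)-T_{\bar T}^\varepsilon\right].
\]
By the envelope evaluation adopted in Subsection~\ref{subsec:marginal_value_information}, we may take $a_s^\varepsilon=a_s$ and $\pi_s^\varepsilon=\pi_s$ to first order, since further reoptimization of these controls has zero first-order effect at the optimum. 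The feasibility law $\dot x_s=G(x_s,a_s)$ is deterministic given the action path and does not involve $\mu$, so $x_s^\varepsilon=x_s$ and $h_s^x=0$. Only the posterior argument and the transfers, together with the agent's continuation promises $U^\varepsilon$, move with $\varepsilon$. The transfers are chosen so that \eqref{eq:principal_local_ic_constraint}--\eqref{eq:principal_agent_terminal} and the applicable participation constraint continue to hold.

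The key step is that, under Assumption~\ref{ass:quasilinear}, the integrand separates additively into a transfer-independent part $v(a_s,\mu_s^\varepsilon,x_s)$ and a transfer part $-T_s^\varepsilon$, and likewise at the terminal date. We then differentiate at $\varepsilon=0$ and interchange differentiation with the conditional expectation and the time integral. This interchange is justified by Assumption~\ref{ass:regularity_primitives}(i)--(ii): uniformly bounded derivatives with polynomial growth, finite moments of the state processes, and differentiability of the perturbation family give a dominating function, so dominated convergence applies to the difference quotients. The derivative of the first part is exactly $B_\mu$ and that of the second part is exactly $K_\mu$, which yields \eqref{eq:general_value_decomposition}.

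The sign equivalence \eqref{eq:general_value_boundary} is then immediate from the identity $D_\mu J=B_\mu-K_\mu$. For consistency we also check this against the general accounting decomposition \eqref{eq:general_information_decomposition}. With $v_T=-1$, $h^a=0$ and $h^x=0$, the term $\mathcal C_\mu$ reduces to the derivative of expected transfers, which is $K_\mu$, and $\mathcal B_\mu$ reduces to $B_\mu$.

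The main obstacle is the envelope step, namely justifying that the action and disclosure components may be frozen while transfers absorb the entire implementability adjustment. The first approach would be to argue that implementability in Proposition~\ref{prop:general_characterization} depends on transfers only through the level term $u(\mu,x)+T$ in the agent's HJB. Given the perturbed posterior path and the unchanged action and disclosure, one can therefore solve the agent's linear HJB for $U^\varepsilon$ with a transfer schedule $T^\varepsilon$ chosen so that $U^\varepsilon_\mu$ and $U^\varepsilon_x$, and hence the argmax in \eqref{eq:principal_local_ic_constraint}, are preserved. Participation is kept by an additive constant transfer adjustment. Any residual first-order gain from reoptimizing $(a,\pi)$ vanishes by optimality of the unperturbed plan. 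Once this is granted, the rest of the argument is bookkeeping.
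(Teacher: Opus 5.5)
Your proposal is correct and follows essentially the same route as the paper: you write $J$ as the quasi-linear expected payoff along the optimal plan, use the envelope condition to hold action, disclosure and the feasibility path fixed while transfers absorb the implementability adjustment, and differentiate under the expectation to obtain $B_\mu$ and $K_\mu$. The paper differs only in re-deriving \eqref{eq:principal_quasilinear_value} from \eqref{eq:principal_hjb_quasilinear} via the change-of-variables formula, and in simply imposing the envelope step you try to motivate; one small slip is in your side check, where in \eqref{eq:general_information_decomposition} the terminal transfer enters through $V_{\bar T}=v_{\bar T}-T_{\bar T}$ and so sits in $\mathcal B_\mu$ rather than $\mathcal C_\mu$, meaning the match with $B_\mu-K_\mu$ holds only for the difference (and only when the terminal-transfer schedule moves solely through its state argument), not term by term.
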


\begin{proof}
See Appendix~\ref{app:proof_general_decomposition}.
\end{proof}

The term $B_\mu$ is the direct decision benefit of the informational perturbation along the optimal implemented path, while $K_\mu$ is the monetary adjustment required to preserve incentive compatibility and participation. Feasibility affects both components through the continuation state, the shadow value of future opportunities, and the transfers required to implement the fixed action and feasibility paths. The decomposition is related to continuation-value and envelope approaches in dynamic mechanism design; see \citet{PavanSegalToikka2014,GarrettPavan2015,Ball2023}. Its distinct feature is that the informational state is produced through action, governed through disclosure, and linked to future feasibility.

Under the linear--Gaussian information-production technology, posterior beliefs admit a finite-dimensional representation. For the remainder of the precision-based analysis, suppose continuation payoffs depend on posterior beliefs only through posterior precision. Under this restriction, $\rho_t$ is the relevant informational state and the principal's optimized continuation value can be written as $J(t,\rho,x)$. Normalize the perturbation parameter so that $\left.d\rho_t^\varepsilon/d\varepsilon\right|_{\varepsilon=0}=1$. Under this normalization, the directional derivative with respect to the informational state reduces to the ordinary partial derivative with respect to $\rho$:
\begin{equation}
\label{eq:J_rho_def}
D_\rho J(t,\rho,x)
=
\frac{\partial J(t,\rho,x)}{\partial\rho}
\equiv
J_\rho(t,\rho,x).
\end{equation}

\begin{proposition}[Value Decomposition under the Linear--Gaussian Benchmark]
\label{prop:gaussian_decomposition}
Suppose Assumptions~\ref{ass:regularity_primitives}--\ref{ass:quasilinear} hold and the information-production technology satisfies the linear--Gaussian specification
\[
dS_t=a_t\theta\,dt+\sigma\,dW_t.
\]
Then, along the optimal implementable continuation path and for any admissible differentiable compensated perturbation of posterior precision of the type described above,
\begin{equation}
\label{eq:gaussian_value_decomposition}
J_\rho(t,\rho_t,x_t)
=
B_\rho(t,\rho_t,x_t)
-
K_\rho(t,\rho_t,x_t),
\end{equation}
where
\[
B_\rho
=
\left.
\frac{\partial}{\partial\varepsilon}
\mathbb E_t
\left[
\int_t^{\bar T}
v(a_s,\rho_s^\varepsilon,x_s)\,ds
+
v_{\bar T}(\rho_{\bar T}^\varepsilon,x_{\bar T})
\right]
\right|_{\varepsilon=0},
\]
and
\[
K_\rho
=
\left.
\frac{\partial}{\partial\varepsilon}
\mathbb E_t
\left[
\int_t^{\bar T}
T_s^\varepsilon\,ds
+
T_{\bar T}^\varepsilon
\right]
\right|_{\varepsilon=0}.
\]

\end{proposition}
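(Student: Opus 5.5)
The plan is to obtain Proposition~\ref{prop:gaussian_decomposition} as a specialization of Theorem~\ref{thm:general_decomposition}. Under the linear--Gaussian technology with a Gaussian prior, the posterior stays Gaussian, $\mu_s=N(m_s,P_s)$, with precision dynamics \eqref{eq:precision}. We have assumed that continuation payoffs depend on beliefs only through $\rho_s=P_s^{-1}$. This lets us replace the infinite-dimensional posterior $\mu_s$ by the scalar state $\rho_s$ in the primitive payoffs $v$, $v_{\bar T}$, in the transfer rule $T_s=T(s,\rho_s,x_s)$, and in the value functions $U(s,\rho,x)$ and $J(s,\rho,x)$. Assumption~\ref{ass:markov_state_representation} then holds with the state $(\rho,x)$.

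First, I would make the perturbation concrete. A compensated perturbation $\{\rho_s^\varepsilon\}$ starts from $\rho_t^\varepsilon=\rho_t+\varepsilon$ and propagates through $\dot\rho_s^\varepsilon=\delta_s\chi a_s^2$, holding $(a_s,\delta_s)$ at their optimal values. Hence $\rho_s^\varepsilon=\rho_s+\varepsilon$ for all $s\in[t,\bar T]$, so $h_s^\rho\equiv1$. This is consistent with the normalization $d\rho_t^\varepsilon/d\varepsilon|_{\varepsilon=0}=1$. Since $\dot x_s=G(x_s,a_s)$ depends only on the unchanged action, $h_s^x\equiv0$ and $h_s^a\equiv0$. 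The only adjusting objects are the transfers $T_s^\varepsilon$ and $T_{\bar T}^\varepsilon$. These are chosen so that the agent's continuation function $U^\varepsilon$ still satisfies \eqref{eq:prop_agent_hjb}--\eqref{eq:prop_agent_terminal} and the Gaussian first-order condition \eqref{eq:agent_ic_gaussian} at the unperturbed action, and so that participation is preserved. The existence of such an adjustment is part of the admissibility hypothesis in the statement.

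Second, I would apply Theorem~\ref{thm:general_decomposition}. Under Assumption~\ref{ass:quasilinear}, \eqref{eq:principal_quasilinear_value} evaluated along the perturbed path gives
\[
J(t,\rho_t+\varepsilon,x_t)=\mathbb E_t\Big[\int_t^{\bar T}\bigl(v(a_s,\rho_s^\varepsilon,x_s)-T_s^\varepsilon\bigr)ds+v_{\bar T}(\rho_{\bar T}^\varepsilon,x_{\bar T})-T_{\bar T}^\varepsilon\Big]+o(\varepsilon).
\]
The $o(\varepsilon)$ term comes from the envelope argument: reoptimizing actions and disclosure has no first-order effect around the optimal plan. Differentiating at $\varepsilon=0$ splits linearly into the $v$-part $B_\rho$ and the transfer part $K_\rho$. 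The left side becomes $J_\rho$ by \eqref{eq:J_rho_def}.

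Third, I would justify the interchange of $\partial_\varepsilon$ with $\mathbb E_t$ and the time integral. This uses dominated convergence together with the bounded derivatives and polynomial growth in Assumption~\ref{ass:regularity_primitives}(i). It also uses the finite moments of $\rho_s$, which are immediate here because $\rho_s$ is deterministic given the action path and bounded by $\rho_t+\chi\bar a^2(\bar T-t)$.

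The main obstacle is the envelope step. Theorem~\ref{thm:general_decomposition} presupposes that reoptimization of $(a,\pi)$ has no first-order effect. In the Gaussian model with $\mathcal D=[0,1]$ and affine dependence on $\delta$, however, the optimum may sit at corners. My approach would be to read the statement for the fixed-action, fixed-disclosure compensated perturbation, which is exactly what the statement specifies. Under that reading, $B_\rho-K_\rho$ is literally the derivative of the value of that perturbed feasible plan. I would then invoke the envelope condition already assumed in Theorem~\ref{thm:general_decomposition} to identify this derivative with $J_\rho$, rather than reproving it.
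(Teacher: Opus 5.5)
Your proposal is correct and follows the paper's proof. The paper likewise specializes Theorem~\ref{thm:general_decomposition} to the scalar state $\rho$ and uses the normalization $\left.d\rho_t^\varepsilon/d\varepsilon\right|_{\varepsilon=0}=1$ to turn $D_\mu J$ into $J_\rho$. Your parallel shift $\rho_s^\varepsilon=\rho_s+\varepsilon$ with $h_s^a=h_s^x=0$, and your dominated-convergence step, fill in details the paper leaves implicit.

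Your worry about corner solutions in $(a,\delta)$ is milder than you suggest. Suppose the compensated plan is feasible from $(t,\rho_t+\varepsilon,x_t)$ for $\varepsilon$ of both signs. Then its value is bounded above by $J(t,\rho_t+\varepsilon,x_t)$, with equality at $\varepsilon=0$. Differentiability of $J$ in $\rho$ then forces that value's derivative to equal $J_\rho$, without any interiority of the controls.
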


\begin{proof}
See Appendix~\ref{app:proof_gaussian_decomposition}.
\end{proof}

Proposition~\ref{prop:gaussian_decomposition} identifies $B_\rho$ as the gross decision value of posterior precision and $K_\rho$ as the monetary adjustment required to preserve implementability and participation. Since $\dot\rho_t=\delta_ta_t^2/\sigma^2$, disclosure changes the rate of public learning, while feasibility affects the informational benefit and implementation cost through continuation values and the implementable action path.

When feasibility dynamics are absent, so that $G(x,a)\equiv0$, the state argument $x$ can be suppressed. If information is generated exogenously and disclosure does not enter the agent's incentive condition, the framework reduces to information design or dynamic information provision under a fixed signal technology. If disclosure is fixed but action determines information production, the problem corresponds to delegated experimentation or learning; if neither action nor disclosure affects information, it reduces further to dynamic contracting under a fixed information structure.

\subsection{Impossibility of General Informational Monotonicity}
\label{subsec:informational_monotonicity}

The decomposition in Subsection~\ref{subsec:marginal_value_information} shows that the value of information depends on both its direct decision benefit and the cost of preserving implementability. This raises a more global question: does a more informative information-production technology necessarily increase the principal's maximal implementable continuation value?

In canonical information-design models, free informational disposal ensures monotonicity because the designer can garble or ignore additional information and reproduce any outcome available under a less informative technology. Here, however, the informational state evolves jointly with action, incentives, and feasibility. A more informative technology may change implementation costs or the evolution of the economic system, while institutional restrictions may prevent replication of outcomes attainable under the less informative technology.

Consider two information-production technologies $Q^H$ and $Q^L$, holding all other payoff, action, feasibility, contracting, and commitment primitives fixed. We say that $Q^H$ \textbf{Blackwell dominates} $Q^L$, and write $Q^H\succeq_B Q^L$, if, conditional on every admissible action and state path, the experiment over future raw-signal histories induced by $Q^L$ is a garbling of the experiment induced by $Q^H$. Write $J(t,\mu,x;Q)$ for the principal's maximal implementable continuation value under information-production technology $Q$.

\begin{definition}[Informational Monotonicity Principle]
\label{def:informational_monotonicity}

The \textbf{informational monotonicity principle} holds if, for every pair of information-production technologies $Q^H$ and $Q^L$ satisfying $Q^H\succeq_B Q^L$ and every continuation state $(t,\mu,x)$,
\begin{equation}
\label{eq:informational_monotonicity_principle}
J(t,\mu,x;Q^H)
\geq
J(t,\mu,x;Q^L).
\end{equation}

\end{definition}

\begin{theorem}[Impossibility of General Informational Monotonicity]
\label{thm:informational_monotonicity_impossibility}

For the class of informational environments considered in this paper, the informational monotonicity principle cannot be guaranteed: there exist two information-production technologies $Q^H$ and $Q^L$, holding all other primitives fixed, and a continuation state $(t,\mu,x)$ such that
\begin{equation}
\label{eq:blackwell_dominance_reversal}
Q^H
\succeq_B
Q^L,
\qquad
J(t,\mu,x;Q^H)
<
J(t,\mu,x;Q^L).
\end{equation}

\end{theorem}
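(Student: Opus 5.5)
Since the statement is existential, I will prove it by a single explicit counterexample inside the linear--Gaussian benchmark. The construction exploits a restriction that prevents the more informative technology from being garbled back to the less informative one. Concretely, I take the mandatory-disclosure institution $\mathcal D=\{1\}$, which is admitted in Subsection~\ref{subsec:informational_dynamics}, so that the principal cannot suppress generated evidence.

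Let $Q^H$ be $dS_t=a_t\theta\,dt+\sigma\,dW_t$, and let $Q^L$ be the same technology with zero signal content, $dS_t=\sigma\,dW_t$. Then $Q^L$ is a garbling of $Q^H$ conditional on every action and state path, because one may replace the signal by independent noise. Hence $Q^H\succeq_B Q^L$. All other primitives are held fixed: $G\equiv 0$ (or a nonbinding stock), quasi-linear utility as in Assumption~\ref{ass:quasilinear}, and ex ante IR with $\bar U=0$. Under $Q^L$ precision stays at $\rho_0$; under $Q^H$ with full disclosure, $\dot\rho_t=\chi a_t^2$ by \eqref{eq:precision}.

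I choose the payoffs so that public learning is valuable to the agent in a way that distorts incentives in the principal's disfavoured direction. The principal's flow payoff is $v(a,\rho)=\beta a$ with $\beta>0$, with no direct value of information: $v_{\bar T}\equiv 0$. The agent bears $c(a)=a^2/2$ and receives a terminal component $u_{\bar T}(\rho)=-\kappa\rho$ with $\kappa>0$. One can read this as an agent who dislikes being evaluated precisely, or who loses rents when uncertainty is resolved.

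Under $Q^L$, \eqref{eq:agent_ic_gaussian} has no informational term, so implementing $a$ requires only a transfer that covers $c(a)$. By IR, $J^L=\max_a \bar T(\beta a - a^2/2)>0$.

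Under $Q^H$, any action $a$ raises $\rho$, so $U_\rho=-\kappa<0$. The pointwise condition in Lemma~\ref{lem:agent_local_ic} then carries an extra marginal cost of $2\kappa\chi a$. Two consequences follow. First, transfers must be steeper, because they must be precision-contingent to offset the informational term. Second, IR requires compensating the expected terminal loss $\kappa\chi\int a_s^2ds$. Since $B_\rho=0$, Theorem~\ref{thm:general_decomposition} gives $J_\rho=-K_\rho<0$ along the implemented path. Integrating this over the precision path yields $J^H=\max_a \bar T\bigl(\beta a-(1/2+\kappa\chi)a^2\bigr)<J^L$ whenever $\kappa\chi>0$.

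The main obstacle is to rule out cheaper implementation under $Q^H$ by exploiting the extra information, for example a transfer that rewards the realized precision path to undo the $-\kappa\rho$ term. I would handle this directly. Because actions are revealed and $\rho$ is a deterministic function of the action path, any transfer scheme is equivalent in expected cost, via IR with quasi-linearity, to paying the agent's total cost. That total cost is $\int(c(a_s)+\kappa\chi a_s^2)ds$. The principal's maximal value is therefore total surplus minus this cost, which makes the comparison exact.

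I must also check that $\mathcal D=\{1\}$ is the binding ingredient. With $\mathcal D=[0,1]$, setting $\delta\equiv 0$ would replicate $Q^L$, which explains why the counterexample hinges on the failure of implementation replicability. Finally, I would verify Assumptions~\ref{ass:regularity_primitives}--\ref{ass:markov_state_representation}: the relevant value functions are quadratic in $a$ and linear in $\rho$, so they are smooth.
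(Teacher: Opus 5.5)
There is a genuine gap in your computation of $J^L$. With $G\equiv 0$, everything public under $Q^L$ is independent of the action. The disclosed signal is $dY_t=\sigma\,dW_t^Y$, precision stays at $\rho_0$, and $x_t$ is constant. In the paper, transfers are adapted to the public filtration generated by disclosed signals, transfers, and public states. The action-revealing assumption makes the realized action available for posterior updating and feasibility accounting; it does not make the action contractible in itself. The paper relies on exactly this in the proof of Theorem~\ref{thm:design_nonseparability}: when the public history is action-independent, ``the distribution of any admissible transfer is also independent of action, so the agent uniquely chooses $a_s=0$.''

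Formally, by Proposition~\ref{prop:general_characterization}, the action-relevant Hamiltonian under $Q^L$ is just $-a^2/2$. Hence \eqref{eq:agent_ic_gaussian} reads $c'(a)=0$, only $a=0$ is implementable, and $J^L=-\kappa\rho_0-\bar U$. Under $Q^H$, by contrast, the public precision path reveals the action. For instance, the terminal transfer $w+(\kappa+1/(2\chi))\rho_{\bar T}$ gives $U_\rho\equiv 1/(2\chi)$ and makes every action optimal. This yields $J^H=\bar T\beta^2/\bigl(2(1+2\kappa\chi)\bigr)-\kappa\rho_0-\bar U>J^L$ once $\bar a\ge\beta/(1+2\kappa\chi)$.

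So your example delivers the opposite inequality: the extra information is the principal's only incentive instrument. Your surplus argument is a valid upper bound on the principal's value. Attaining it under $Q^L$, however, requires an action-contingent instrument that does not exist there. Separately, ``integrating $J_\rho=-K_\rho$ over the precision path'' is not a comparison across technologies at all.

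The parenthetical ``or a nonbinding stock'' is therefore not interchangeable with $G\equiv0$; it is what repairs the construction. Take $\dot x_t=-a_t$ with $x_0\ge\bar a\bar T$, so the public stock reveals cumulative action under both technologies. The terminal transfer $w+s(x_0-x_{\bar T})$ then implements a constant action $a$. Under $Q^L$ this needs $s=a$. Under $Q^H$ it needs $s=(1+2\kappa\chi)a$, since mandatory disclosure leaves $U_\rho=-\kappa$. With binding participation both surplus bounds are attained, and for $\bar a\ge\beta$ you get $J^L-J^H=\bar T\beta^2\kappa\chi/(1+2\kappa\chi)>0$.

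The paper sidesteps the issue differently. It keeps both technologies informative ($\chi_H>\chi_L>0$) and bounds disclosure below by $\underline\delta>0$. It also uses $m=\mathbb E[\theta]>0$, so the signal-linear transfer $w+sY_1$ rewards action through $\mathbb E[Y_1]=\sqrt{\delta}\,am$ under either technology. This yields $J(\chi)$ strictly decreasing in $\chi$, which is a reversal between any two nondegenerate technologies rather than only against an uninformative one.
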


\begin{proof}
See Appendix~\ref{app:proof_informational_monotonicity_impossibility}.
\end{proof}

The reversal is not a knife-edge construction: subject to the stated interiority condition, it holds throughout an open set of parameter values in the counterexample constructed in Appendix~\ref{app:proof_informational_monotonicity_impossibility}. Nor does Theorem~\ref{thm:informational_monotonicity_impossibility} contradict the conventional value-of-information principle. In the counterexample, disclosure is bounded away from zero, so greater raw-signal informativeness necessarily increases public learning for every positive action. Because public learning imposes a continuation cost on the agent, preserving participation and implementation requires additional compensation. When this implementation cost exceeds the direct decision benefit, the principal's optimized value falls.

The counterexample identifies the missing condition behind the conventional monotonicity argument: the ability to reproduce under the more informative technology the implementable outcomes available under the less informative technology. This motivates the following notion.

\begin{definition}[Implementation Replicability]
\label{def:implementation_replicability}

Technology $Q^H$ \textbf{implementation-replicates} technology $Q^L$ at continuation state $(t,\mu,x)$ if every implementable continuation plan under $Q^L$ can be replicated under $Q^H$ so as to induce the same joint distribution of actions, public posterior beliefs, feasibility states, transfers, and payoff-relevant outcomes.

\end{definition}

Implementation replicability implies that the principal's feasible set under $Q^L$ is contained in its feasible set under $Q^H$, and therefore
\begin{equation}
\label{eq:replicability_monotonicity}
J(t,\mu,x;Q^H)
\geq
J(t,\mu,x;Q^L).
\end{equation}
The substantive question is thus whether the information-production technology and available disclosure instruments permit such replication.

\begin{proposition}[Gaussian Replicability and Informational Monotonicity]
\label{prop:replicability_informational_monotonicity}

Consider two linear--Gaussian information-production technologies that differ only in their informativeness parameters, with $\chi_H>\chi_L$. Suppose disclosure can be chosen freely from $[0,1]$, and payoffs and implementation constraints depend on the technology only through the induced public posterior process. Then $Q^H$ implementation-replicates $Q^L$, and informational monotonicity holds:
\[
J(t,\mu,x;Q^H)
\geq
J(t,\mu,x;Q^L).
\]

\end{proposition}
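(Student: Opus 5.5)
The plan is to prove implementation replicability directly and then invoke the containment argument stated just before the proposition. Fix any implementable and individually rational continuation plan $(a_s,T_s,\pi_s)_{s\in[t,\bar T]}$ under $Q^L$, with disclosure intensities $\delta_s^L\in[0,1]$. Under $Q^H$ we keep the same action rule and the same transfer rule, written as functions of the public state, and we use the rescaled disclosure rule
\[
\delta_s^H:=\frac{\chi_L}{\chi_H}\,\delta_s^L\in[0,1].
\]
This rule is admissible because $\chi_L<\chi_H$ and disclosure is freely chosen from $[0,1]$.

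The first step is to check that the two plans induce the same public learning process. Write the $Q^i$ raw signal as $dS_t=a_t\theta\,dt+\sigma_i\,dW_t$ with $\chi_i=\sigma_i^{-2}$. The garbled public signal from \eqref{eq:linear_gaussian_public_signal} can be normalized by dividing by $\sigma_i$, giving
\[
d\tilde Y_t=\sqrt{\delta_t\chi_i}\,a_t\theta\,dt+dW_t^Y .
\]
Under the rescaling, $\delta^H_s\chi_H=\delta^L_s\chi_L$, so the normalized public observation processes have the same law conditional on $\theta$ and on the action path. By Bayes' rule \eqref{eq:belief}, the public posterior processes $\mu_s=N(m_s,P_s)$ therefore coincide in law. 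In particular $\rho$ follows \eqref{eq:precision} with the same drift $\delta_s\chi_i a_s^2$. The feasibility path is also unchanged, since it depends only on actions through \eqref{eq:feasibility}. Transfers are adapted to the public filtration, which is generated by an identically distributed process, so the joint distribution of actions, posteriors, feasibility states and transfers is reproduced.

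The second step is to verify implementability, and I expect this to be the main obstacle. Here I use the hypothesis that payoffs and implementation constraints depend on the technology only through the induced public posterior process. The agent's continuation problem \eqref{eq:continuation_value} under $Q^H$ with the rescaled disclosure is then identical to the problem under $Q^L$ for every action path. The reason is that for any deviation $(a_s')$, the agent faces the same controlled state law. Formally, the generators satisfy $\mathcal L^{\pi^H,a}=\mathcal L^{\pi^L,a}$ for every $a$, because the rescaling does not depend on $a$.

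The subtle point is that a deviation must face the same disclosure rule in both environments. This holds because $\delta^H$ is defined as a function of $(t,\mu,x)$ (or the public history), not of the target action. Hence $U$ solves the same HJB equation \eqref{eq:prop_agent_hjb}, and the pointwise condition \eqref{eq:prop_agent_lic} holds under $Q^H$ exactly when it holds under $Q^L$. By Proposition~\ref{prop:general_characterization} the plan is implementable under $Q^H$. Individual rationality holds because $U(t,\mu,x)$ is unchanged.

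Finally, since every feasible plan under $Q^L$ is matched by a feasible plan under $Q^H$ with the same principal payoff, Proposition~\ref{prop:principal_reduction} gives that the supremum defining $J(t,\mu,x;Q^H)$ is taken over a larger set of outcomes. This yields $J(t,\mu,x;Q^H)\ge J(t,\mu,x;Q^L)$ at every continuation state. The same argument applies under limited commitment, because continuation participation constraints depend only on $U$ along the replicated public path.
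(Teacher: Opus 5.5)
Your proposal is correct and follows the paper's proof essentially step for step. You rescale disclosure to $\delta^H=\delta^L\chi_L/\chi_H$ so that $\delta^H\chi_H a^2=\delta^L\chi_L a^2$, conclude that the public posterior process is reproduced (and hence, by the stated hypothesis, so are incentives and payoffs), and obtain monotonicity from feasible-set inclusion. Your explicit check that the generators coincide for every action $a$, so that deviations face the same controlled state law, usefully spells out the step the paper compresses into ``the agent faces the same continuation incentives.''
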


\begin{proof}
See Appendix~\ref{app:proof_gaussian_replication_monotonicity}.
\end{proof}

Taken together, Theorem~\ref{thm:informational_monotonicity_impossibility} and Proposition~\ref{prop:replicability_informational_monotonicity} identify the boundary of informational monotonicity. Blackwell dominance alone is insufficient, while implementation replicability restores monotonicity. Replicability may fail when the more informative technology cannot be attenuated or otherwise adjusted to reproduce the informational and implementation outcomes attainable under the less informative technology. This can occur under automatic or mandatory disclosure, a positive lower bound on disclosure, restrictions on garbling or attenuation, indivisible information production, private exposure to raw signals, or other institutional constraints that cause greater informativeness to alter implementation or feasibility.

These results concern the principal's optimized value across information-production technologies. Proposition~\ref{prop:signal_informativeness} below addresses the distinct local question of how signal informativeness changes disclosure timing along an interior cutoff path.

\section{Optimal Disclosure and Informational Usage}
\label{sec:optimal_disclosure}

We now characterize optimal disclosure along implementable continuation paths. Under general disclosure technologies, optimal disclosure can be interior or extreme, and bang--bang disclosure can arise even under nonlinear reduced objectives. We then specialize to an affine reduced-Hamiltonian representation, which yields a general bang--bang characterization. We next establish conditions for an endogenous transition from opacity to transparency and the failure of the positive-value disclosure principle. Comparative statics describe how signal informativeness, action costs, and risk aversion affect the disclosure cutoff. Section~\ref{subsec:closed_form_experimentation} then provides a closed-form delegated-experimentation application that derives the optimal two-phase experimentation and disclosure path, its endogenous cutoff, and the primitive implementability and feasibility conditions when information production consumes scarce testing capacity.

\subsection{Optimal Disclosure under General Disclosure Technologies}
\label{subsec:optimal_disclosure_general}

In the general formulation, disclosure affects continuation values through the controlled informational generator $\mathcal L^{\pi,a}$. When the information-production technology satisfies the linear--Gaussian specification, disclosure is represented by an intensity $\delta_t\in\mathcal D$, where $\mathcal D$ is the compact feasible set introduced in Subsection~\ref{subsec:informational_dynamics}, and the generator is written as $\mathcal L^{\delta,a}$. The unrestricted disclosure case is $\mathcal D=[0,1]$; restricted sets accommodate institutional constraints on feasible disclosure.

For each admissible $(t,\mu,x,a,\delta)$, let $\Gamma^I(t,\mu,x;a,\delta)$ denote the set of admissible current and continuation choices under disclosure intensity $\delta$ that implement action $a$ and satisfy the applicable participation requirement. For $\gamma\in\Gamma^I(t,\mu,x;a,\delta)$, let $T_\gamma$ denote its current transfer component. Define the \textbf{reduced implementable disclosure Hamiltonian} by
\begin{equation}
\label{eq:reduced_implementable_disclosure_hamiltonian}
\mathfrak H^I(t,\mu,x;a,\delta)
:=
\sup_{\gamma\in\Gamma^I(t,\mu,x;a,\delta)}
\left\{
v(T_\gamma,a,\mu,x)
+
\mathcal L^{\delta,a}J(t,\mu,x)
+
J_x(t,\mu,x)G(x,a)
\right\}.
\end{equation}

\begin{theorem}[General Characterization of Optimal Disclosure]
\label{theorem:general_disclosure}

Suppose Assumptions~\ref{ass:regularity_primitives}--\ref{ass:markov_state_representation} hold. Suppose further that, for every admissible $(t,\mu,x,a)$ and every $\delta\in\mathcal D$, the set $\Gamma^I(t,\mu,x;a,\delta)$ is nonempty, the supremum in \eqref{eq:reduced_implementable_disclosure_hamiltonian} is attained, and the mapping $\delta\mapsto\mathfrak H^I(t,\mu,x;a,\delta)$ is continuous on $\mathcal D$ and differentiable on $\operatorname{int}\mathcal D$.

Then, along any optimal implementable continuation path and for almost every $t$,
\begin{equation}
\label{eq:optimal_delta_general_revised}
\delta_t^*
\in
\arg\max_{\delta\in\mathcal D}
\mathfrak H^I(t,\mu_t,x_t;a_t,\delta).
\end{equation}

If $\delta_t^*\in\operatorname{int}\mathcal D$, then
\begin{equation}
\label{eq:delta_FOC_revised}
\left.
\frac{\partial}{\partial\delta}
\mathfrak H^I(t,\mu_t,x_t;a_t,\delta)
\right|_{\delta=\delta_t^*}
=
0.
\end{equation}

\end{theorem}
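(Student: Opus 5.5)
The plan is to treat the statement as a verification theorem for the principal's recursive problem in Proposition~\ref{prop:principal_reduction}, with the disclosure control separated out. By Proposition~\ref{prop:principal_reduction}, the principal's continuation value $J(t,\mu,x)$ is the value of a controlled problem over $(a,T,\pi)$, subject to the recursive implementability system. Under Assumptions~\ref{ass:regularity_primitives}--\ref{ass:markov_state_representation}, the dynamic programming principle holds on the state $(\mu,x)$. When $J$ is regular enough, the principal's HJB therefore reads
\[
-\partial_tJ(t,\mu,x)=\sup_{(a,\delta,\gamma)}\bigl\{v(T_\gamma,a,\mu,x)+\mathcal L^{\delta,a}J(t,\mu,x)+J_x(t,\mu,x)G(x,a)\bigr\},
\]
where the supremum runs over $a\in\mathcal A$, $\delta\in\mathcal D$ and $\gamma\in\Gamma^I(t,\mu,x;a,\delta)$. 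The set $\Gamma^I$ encodes the pointwise incentive condition of Lemma~\ref{lem:agent_local_ic} and the applicable participation requirement. The key step is to write this supremum iteratively: first over $\gamma$ for fixed $(a,\delta)$, which by definition gives $\mathfrak H^I(t,\mu,x;a,\delta)$, and then over $\delta$ and $a$. Nonemptiness of $\Gamma^I$ for every $\delta$ ensures that the inner supremum is well defined for every disclosure intensity. Attainment ensures that the optimal $\gamma$ is realized.

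Next I would show that an optimal path must attain the outer supremum over $\delta$ for almost every $t$, with the action $a_t$ held at its optimal value. Suppose instead that there is a set of dates of positive measure on which $\mathfrak H^I(t,\mu_t,x_t;a_t,\delta_t^*)<\max_{\delta\in\mathcal D}\mathfrak H^I(t,\mu_t,x_t;a_t,\delta)$. The maximum exists because $\mathcal D$ is compact and $\delta\mapsto\mathfrak H^I$ is continuous. On that set, replace $\delta_t^*$ by a maximizer and pick $\gamma$ attaining the inner supremum; a measurable-selection argument makes this choice measurable in $t$. Because the new $\gamma$ lies in $\Gamma^I$, the modified plan still implements $a_t$ and satisfies participation. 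A standard verification argument (Itô/Dynkin applied to $J$ along the modified path, using the HJB inequality) then gives a strictly higher principal value, contradicting optimality. This yields \eqref{eq:optimal_delta_general_revised}. If $\delta_t^*\in\operatorname{int}\mathcal D$, differentiability on the interior together with the usual Fermat argument gives \eqref{eq:delta_FOC_revised}.

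The main obstacle is the perturbation step. Changing $\delta$ on a positive-measure set alters the posterior path and hence the future states at which the continuation plan is evaluated, so the agent's continuation value $U$, and with it implementability at later dates, could change. I would handle this by working entirely through the Markov feedback formulation. The competing plan is defined as a state-feedback rule that, at every state, selects an element of $\Gamma^I$; this preserves implementability statewise, by Proposition~\ref{prop:general_characterization}. Comparison is then made only through the principal's HJB via the verification theorem, which I would invoke from \citet{FlemingSoner2006} and \citet{YongZhou1999} under Assumption~\ref{ass:regularity_primitives}(iii). If $J$ is only a viscosity solution, I would instead state the pointwise maximization as a necessary condition in the sense of the stochastic maximum principle. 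That route avoids differentiating $J$ while leaving the conclusion unchanged.
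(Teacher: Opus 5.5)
Your proposal is correct and follows essentially the same route as the paper. Dynamic programming, with $a_t$ held fixed and the implementation-preserving choice drawn from $\Gamma^I$, forces optimal disclosure to maximize $\mathfrak H^I$; otherwise switching to a better $\delta$ would be an admissible improvement, and Fermat's rule then gives the interior first-order condition. Your iterated-supremum HJB, compactness and measurable-selection steps, and verification argument make explicit what the paper asserts in one line, with two small caveats: the contradiction is cleanest when It\^o's formula is applied along the original optimal path (a positive Hamiltonian gap there pushes its value strictly below $J$), and your viscosity/maximum-principle fallback is unnecessary under Assumption~\ref{ass:regularity_primitives}(iii) and would not deliver the same object, since $\mathfrak H^I$ is defined through $J_x$ and $\mathcal L^{\delta,a}J$.
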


\begin{proof}
See Appendix~\ref{app:proof_general_disclosure}.
\end{proof}

The characterization given by Theorem~\ref{theorem:general_disclosure} allows optimal disclosure to be interior or at an endpoint, depending on the shape of the reduced implementable disclosure Hamiltonian. It also accommodates continuous-time filtering and disclosure settings in which disclosure continuously modifies the effective information flow; see \citet{LiptserShiryaev2001,BainCrisan2009,Sannikov2008}.

Define the generator contribution associated with a change from nondisclosure to full disclosure by
\[
\Delta^aJ(t,\mu,x)
:=
\mathcal L^{1,a}J(t,\mu,x)
-
\mathcal L^{0,a}J(t,\mu,x).
\]

The next two examples illustrate, respectively, bang--bang and interior disclosure under nonlinear reduced disclosure objectives.

\begin{example}[Bang--Bang Disclosure under a Non-Affine Reduced Objective]

Suppose
\begin{equation}
\label{eq:nonaffine_bangbang_example}
\mathfrak H^I(t,\mu,x;a,\delta)
=
\mathfrak H^I(t,\mu,x;a,0)
+
\delta^2\Delta^aJ(t,\mu,x),
\qquad
\delta\in[0,1].
\end{equation}
The disclosure-dependent term is $\delta^2\Delta^aJ$. Hence the unique optimum is $\delta=1$ when $\Delta^aJ>0$ and $\delta=0$ when $\Delta^aJ<0$; every $\delta\in[0,1]$ is optimal when $\Delta^aJ=0$. Thus bang--bang disclosure can arise even when the reduced disclosure objective is non-affine.

\end{example}

\begin{example}[Interior Disclosure under a Nonlinear Disclosure Objective]

Suppose disclosure entails a convex processing, validation, or communication cost $\zeta\delta^2/2$, where $\zeta>0$, so that
\begin{equation}
\label{eq:quadratic_generator_revised}
\mathfrak H^I(t,\mu,x;a,\delta)
=
\mathfrak H^I(t,\mu,x;a,0)
+
\delta\Delta^aJ(t,\mu,x)
-
\frac{\zeta}{2}\delta^2.
\end{equation}
The objective is strictly concave, and the constrained optimum is
\begin{equation}
\label{eq:delta_star_revised}
\delta_t^*
=
\min
\left\{
\max
\left\{
\frac{\Delta^{a_t}J(t,\mu_t,x_t)}{\zeta},
0
\right\},
1
\right\}.
\end{equation}
Disclosure is interior whenever $0<\Delta^{a_t}J(t,\mu_t,x_t)<\zeta$, and $\zeta$ governs its responsiveness to the informational contribution.

\end{example}

We next impose an affine dependence of the informational generator on disclosure intensity. Consider the unrestricted benchmark $\mathcal D=[0,1]$ and suppose
\begin{equation}
\label{eq:affine_disclosure_generator}
\mathcal L^{\delta,a}
=
(1-\delta)\mathcal L^{0,a}
+
\delta\mathcal L^{1,a},
\qquad
\delta\in[0,1].
\end{equation}

Because disclosure may change the continuation choices required to implement a prescribed action, affinity of the disclosure generator does not by itself imply affinity of the reduced implementable problem. Accordingly, suppose also that
\begin{equation}
\label{eq:reduced_affine_disclosure_condition}
\mathfrak H^I(t,\mu,x;a,\delta)
=
(1-\delta)\mathfrak H^I(t,\mu,x;a,0)
+
\delta\mathfrak H^I(t,\mu,x;a,1),
\qquad
\delta\in[0,1].
\end{equation}

Define
\[
\Delta^{I,a}\mathfrak H(t,\mu,x)
:=
\mathfrak H^I(t,\mu,x;a,1)
-
\mathfrak H^I(t,\mu,x;a,0).
\]
Equivalently, the affine reduced-Hamiltonian condition can be written as
\begin{equation}
\label{eq:affine_representation}
\mathfrak H^I(t,\mu,x;a,\delta)
=
\mathfrak H^I(t,\mu,x;a,0)
+
\delta\Delta^{I,a}\mathfrak H(t,\mu,x).
\end{equation}

The affine representation yields the following bang--bang characterization.

\begin{proposition}[Bang--Bang Disclosure under an Affine Reduced Hamiltonian]
\label{thm:bang_bang_general}

Suppose Assumptions~\ref{ass:regularity_primitives}--\ref{ass:markov_state_representation} hold, the feasible disclosure set is $\mathcal D=[0,1]$, and the reduced implementable disclosure Hamiltonian satisfies the affine representation \eqref{eq:affine_representation}. Then, along any optimal implementable continuation path and for almost every $t$,
\begin{equation}
\label{eq:bangbang_rule_revised}
\delta_t^*
=
\begin{cases}
1, & \text{if }\Delta^{I,a_t}\mathfrak H(t,\mu_t,x_t)>0,\\[0.4em]
0, & \text{if }\Delta^{I,a_t}\mathfrak H(t,\mu_t,x_t)<0,\\[0.4em]
\text{any value in }[0,1], & \text{if }\Delta^{I,a_t}\mathfrak H(t,\mu_t,x_t)=0.
\end{cases}
\end{equation}

Hence optimal disclosure is bang--bang.

\end{proposition}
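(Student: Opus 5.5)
The plan is to reduce the statement to Theorem~\ref{theorem:general_disclosure} and then maximize an affine function over $[0,1]$. First I will check that the hypotheses of Theorem~\ref{theorem:general_disclosure} hold, or at least that its pointwise maximization conclusion \eqref{eq:optimal_delta_general_revised} does.

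Under the affine representation \eqref{eq:affine_representation}, the map $\delta\mapsto\mathfrak H^I(t,\mu,x;a,\delta)$ is affine on $\mathcal D=[0,1]$. It is therefore continuous on $[0,1]$ and differentiable on $(0,1)$, with constant derivative $\Delta^{I,a}\mathfrak H(t,\mu,x)$.

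Nonemptiness of $\Gamma^I$ and attainment of the supremum are needed for $\mathfrak H^I(\cdot;a,0)$ and $\mathfrak H^I(\cdot;a,1)$ to be well defined and finite. I will read these as implicit in the assumption that \eqref{eq:affine_representation} holds with real-valued endpoints. If needed, I will state explicitly that they are maintained as in Theorem~\ref{theorem:general_disclosure}.

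With these hypotheses in place, Theorem~\ref{theorem:general_disclosure} gives, for almost every $t$ along any optimal implementable path,
\[
\delta_t^*\in\arg\max_{\delta\in[0,1]}\Bigl\{\mathfrak H^I(t,\mu_t,x_t;a_t,0)+\delta\,\Delta^{I,a_t}\mathfrak H(t,\mu_t,x_t)\Bigr\}.
\]

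The second step is elementary. The first term does not depend on $\delta$, so the problem reduces to maximizing $\delta\cdot\Delta$ over $[0,1]$, where $\Delta:=\Delta^{I,a_t}\mathfrak H(t,\mu_t,x_t)$.
\begin{itemize}
\item If $\Delta>0$, the objective is strictly increasing, so the unique maximizer is $\delta=1$.
\item If $\Delta<0$, it is strictly decreasing, so the unique maximizer is $\delta=0$.
\item If $\Delta=0$, the objective is constant, so every $\delta\in[0,1]$ is a maximizer.
\end{itemize}
These three cases give \eqref{eq:bangbang_rule_revised}. Since the maximizer is unique whenever $\Delta\neq0$, it equals an endpoint of $[0,1]$ there, and this is the bang--bang conclusion.

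The main obstacle is that Theorem~\ref{theorem:general_disclosure} yields the inclusion only almost everywhere, and only for the action $a_t$ on the optimal path, with $J$ the optimized value. So I must evaluate $\Delta^{I,a_t}\mathfrak H$ at the optimal $(a_t,\mu_t,x_t)$ and keep the same null set. A related subtlety is the singular case, where $\Delta=0$ on a set of positive measure. The statement deliberately leaves $\delta_t^*$ unrestricted there, so no singular-arc analysis is needed. I will remark that "bang--bang" is meant off the switching set $\{\Delta=0\}$.
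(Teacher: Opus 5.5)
Your proposal is correct and follows the paper's proof essentially line for line. Like the paper, you invoke Theorem~\ref{theorem:general_disclosure} to get $\delta_t^*\in\arg\max_{\delta\in[0,1]}\mathfrak H^I(t,\mu_t,x_t;a_t,\delta)$, use the affine representation to drop the $\delta$-independent term, and split on the sign of $\Delta^{I,a_t}\mathfrak H$. You are slightly more careful than the paper on two points, which it invokes the theorem without addressing: you check that affinity supplies the continuity and differentiability hypotheses, and you flag that nonemptiness of $\Gamma^I$ and attainment of the supremum must be carried over from Theorem~\ref{theorem:general_disclosure}, since the proposition does not list them.
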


\begin{proof}
See Appendix~\ref{app:proof_bang_bang_general}.
\end{proof}

Proposition~\ref{thm:bang_bang_general} establishes a general bang--bang characterization under the affine reduced-Hamiltonian representation; the result does not rely on an exogenous binary restriction on disclosure. Affinity of the disclosure generator alone is not sufficient because disclosure may alter the continuation choices required to preserve implementation. The switching term incorporates the informational, incentive, and feasibility consequences of disclosure. The threshold structure is related to dynamic disclosure and information-provision models such as \citet{ElySzydlowski2020,HornerSkrzypacz2017}, but here information is produced through incentive-sensitive action and linked to future feasibility.

When the information-production technology satisfies the linear--Gaussian specification, posterior precision evolves according to $\dot\rho_t=\delta_ta_t^2/\sigma^2$. If the compensated implementability adjustments satisfy the envelope identity, the resulting switching term satisfies
\begin{equation}
\label{eq:Delta_gaussian_revised}
\Delta^{I,a_t}\mathfrak H(t,\rho_t,x_t)
=
\Delta^{a_t}J(t,\rho_t,x_t)
=
\frac{a_t^2}{\sigma^2}
J_\rho(t,\rho_t,x_t).
\end{equation}

\begin{corollary}[Bang--Bang Disclosure under Gaussian Learning]
\label{cor:bang_bang_gaussian_revised}

Suppose Assumptions~\ref{ass:regularity_primitives}--\ref{ass:markov_state_representation} hold. Suppose further that:
\begin{enumerate}
\item[(i)] the feasible disclosure set is $\mathcal D=[0,1]$, and the reduced implementable disclosure Hamiltonian satisfies the affine representation \eqref{eq:affine_representation};

\item[(ii)] the information-production technology satisfies the linear--Gaussian specification $dS_t=a_t\theta\,dt+\sigma\,dW_t$, with posterior precision evolving according to $\dot\rho_t=\delta_ta_t^2/\sigma^2$;

\item[(iii)] $a_t>0$ for almost every $t$ along the relevant optimal implementable continuation path;

\item[(iv)] the compensated-envelope identity \eqref{eq:Delta_gaussian_revised} holds along that path.
\end{enumerate}

Then, for almost every $t$,
\begin{equation}
\label{eq:bang_bang_gaussian_revised}
\delta_t^*
=
\begin{cases}
1, & \text{if }J_\rho(t,\rho_t,x_t)>0,\\[0.4em]
0, & \text{if }J_\rho(t,\rho_t,x_t)<0,\\[0.4em]
\text{any value in }[0,1], & \text{if }J_\rho(t,\rho_t,x_t)=0.
\end{cases}
\end{equation}

\end{corollary}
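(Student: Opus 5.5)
The plan is to derive the corollary directly from Proposition~\ref{thm:bang_bang_general}, substituting the Gaussian switching term. Hypotheses (i) and Assumptions~\ref{ass:regularity_primitives}--\ref{ass:markov_state_representation} are exactly what that proposition requires. It therefore gives, for almost every $t$ along any optimal implementable continuation path, the bang--bang rule \eqref{eq:bangbang_rule_revised}. In that rule $\delta_t^*$ is governed by the sign of $\Delta^{I,a_t}\mathfrak H(t,\rho_t,x_t)$. Here the posterior is summarized by precision, so that $\mu_t$ is identified with $\rho_t$ under hypothesis (ii) and the precision-based restriction adopted after Proposition~\ref{prop:gaussian_decomposition}.

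The second step identifies the switching term. Under (ii), the informational generator acting on a precision-dependent function is $\mathcal L^{\delta,a}J=\delta\frac{a^2}{\sigma^2}J_\rho$. This holds because precision evolves deterministically given the action and disclosure path, with $\dot\rho=\delta a^2/\sigma^2$. Hence $\Delta^aJ=\mathcal L^{1,a}J-\mathcal L^{0,a}J=\frac{a^2}{\sigma^2}J_\rho$. This also confirms that the affine generator condition \eqref{eq:affine_disclosure_generator} holds automatically in the Gaussian case. Hypothesis (iv), the compensated-envelope identity \eqref{eq:Delta_gaussian_revised}, then equates $\Delta^{I,a_t}\mathfrak H$ with this expression. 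So, for almost every $t$,
\[
\Delta^{I,a_t}\mathfrak H(t,\rho_t,x_t)=\frac{a_t^2}{\sigma^2}J_\rho(t,\rho_t,x_t).
\]

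The third step is a sign argument. By (iii), $a_t>0$ for almost every $t$, and $\sigma>0$, so $a_t^2/\sigma^2>0$. The sign of the switching term therefore coincides with the sign of $J_\rho(t,\rho_t,x_t)$ off a null set. The union of the exceptional null sets (from Proposition~\ref{thm:bang_bang_general}, from (iii), and from (iv)) is still null. Substituting into \eqref{eq:bangbang_rule_revised} yields the three cases of \eqref{eq:bang_bang_gaussian_revised}.

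The only delicate point is the role of (iii) and the envelope identity. Where $a_t=0$ the switching term vanishes regardless of $J_\rho$, so the equivalence of signs would fail; this is exactly what the almost-everywhere positivity excludes. I also need to check that the envelope identity is used only to replace the reduced Hamiltonian difference, not to reprove affinity. Affinity is supplied separately by (i), so the argument stays a direct specialization.
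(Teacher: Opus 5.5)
Your proposal is correct and follows essentially the same route as the paper. You compute $\Delta^{a_t}J=(a_t^2/\sigma^2)J_\rho$ from the precision dynamics, use the compensated-envelope identity to equate this with the switching term $\Delta^{I,a_t}\mathfrak H$, and invoke Proposition~\ref{thm:bang_bang_general} after noting $a_t^2/\sigma^2>0$ almost everywhere; your explicit handling of the union of null sets is a harmless refinement that the paper leaves implicit.
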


\begin{proof}
See Appendix~\ref{app:proof_bang_bang_gaussian}.
\end{proof}

Under the compensated-envelope identity, Gaussian disclosure is governed by the sign of $J_\rho$. By the decomposition in Subsection~\ref{subsec:marginal_value_information}, this sign reflects the balance between the decision benefit of precision and its endogenous implementation and feasibility costs.

\subsection{Endogenous Two-Phase Disclosure and the Positive-Value Disclosure Principle}
\label{subsec:two_phase_disclosure}

Under unrestricted disclosure, the affine reduced-Hamiltonian representation, the linear--Gaussian information-production technology, and the compensated-envelope identity \eqref{eq:Delta_gaussian_revised}, optimal disclosure is governed by
\[
J_\rho(t,\rho_t,x_t)
=
B_\rho(t,\rho_t,x_t)
-
K_\rho(t,\rho_t,x_t).
\]
A single crossing of this net marginal value generates an endogenous transition from opacity to transparency. Positive direct informational value alone need not imply disclosure because $B_\rho$ excludes the implementation costs generated by incentives, continuation promises, and feasibility.

\begin{theorem}[Endogenous Two-Phase Disclosure]
\label{thm:two_phase_disclosure}

Suppose Assumptions~\ref{ass:regularity_primitives}--\ref{ass:quasilinear} hold. Suppose further that the following conditions are satisfied:
\begin{enumerate}
\item[(i)] the feasible disclosure set is $\mathcal D=[0,1]$, and the reduced implementable disclosure Hamiltonian satisfies the affine representation \eqref{eq:affine_representation};

\item[(ii)] the information-production technology satisfies the linear--Gaussian specification, and the compensated-envelope identity \eqref{eq:Delta_gaussian_revised} holds along the relevant optimal implementable continuation path;

\item[(iii)] the information-producing action satisfies $a_t>0$ for almost every $t$ along the relevant optimal implementable continuation path;

\item[(iv)] the composite net marginal value of posterior precision,
\[
J_\rho(t,\rho_t,x_t)
=
B_\rho(t,\rho_t,x_t)
-
K_\rho(t,\rho_t,x_t),
\]
is continuous on $[0,\bar T]$ and continuously differentiable on $(0,\bar T)$;

\item[(v)] the net marginal value satisfies\footnote{This endogenous single-crossing condition is stated at the equilibrium level to preserve generality. Primitive sufficient conditions can be derived under additional structure, such as linear payoff functions; Section~7 provides a related primitive-based illustration in the closed-form application.}
\[
J_\rho(0,\rho_0,x_0)<0,
\qquad
J_\rho(\bar T,\rho_{\bar T},x_{\bar T})>0,
\]
and
\[
\frac{d}{dt}
J_\rho(t,\rho_t,x_t)
>
0
\qquad
\text{for all }t\in(0,\bar T).
\]
\end{enumerate}

Then there exists a unique cutoff date $t^*\in(0,\bar T)$ satisfying
\[
J_\rho(t^*,\rho_{t^*},x_{t^*})
=
0.
\]
Moreover, an optimal disclosure process can be selected to satisfy
\[
\delta_t^*
=
\begin{cases}
0, & t<t^*,\\[0.4em]
\text{any value in }[0,1], & t=t^*,\\[0.4em]
1, & t>t^*.
\end{cases}
\]

\end{theorem}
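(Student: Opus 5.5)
The argument has two parts. An intermediate-value and monotonicity argument locates the cutoff. Corollary~\ref{cor:bang_bang_gaussian_revised} then converts the sign pattern of $J_\rho$ into the disclosure rule.

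To set up, define the scalar function $\phi(t):=J_\rho(t,\rho_t,x_t)$ along the relevant optimal implementable continuation path. By condition (iv), $\phi$ is continuous on $[0,\bar T]$ and continuously differentiable on $(0,\bar T)$. Condition (v) gives $\phi(0)<0<\phi(\bar T)$.

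\emph{Existence and uniqueness of the cutoff.} The intermediate value theorem yields some $t^*\in(0,\bar T)$ with $\phi(t^*)=0$; the endpoints are excluded because $\phi$ is nonzero there. Since $\phi'(t)>0$ on $(0,\bar T)$, the mean value theorem shows that $\phi$ is strictly increasing on $[0,\bar T]$. Indeed, for $s<s'$ we have $\phi(s')-\phi(s)=\phi'(\xi)(s'-s)>0$, where continuity at the endpoints lets us extend the comparison to $s=0$ or $s'=\bar T$. Hence the zero is unique. Moreover, $\phi(t)<0$ for $t<t^*$ and $\phi(t)>0$ for $t>t^*$.

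\emph{Disclosure rule.} Conditions (i)--(iii) and the envelope identity in (ii) are exactly hypotheses (i)--(iv) of Corollary~\ref{cor:bang_bang_gaussian_revised}; the quasi-linearity in Assumption~\ref{ass:quasilinear} is what makes the identity $J_\rho=B_\rho-K_\rho$ from Proposition~\ref{prop:gaussian_decomposition} meaningful. The corollary therefore gives, for almost every $t$, $\delta_t^*=1$ where $\phi(t)>0$ and $\delta_t^*=0$ where $\phi(t)<0$, with $\delta_t^*$ arbitrary where $\phi(t)=0$. Combining this with the sign pattern of $\phi$ gives $\delta_t^*=0$ on $[0,t^*)$, $\delta_t^*=1$ on $(t^*,\bar T]$, and an arbitrary value at $t^*$.

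\emph{Main obstacle.} The delicate point is the phrase ``can be selected.'' The corollary holds only for almost every $t$, and modifying a control on a null set does not change the integral dynamics $\dot\rho_t=\delta_t a_t^2/\sigma^2$ or the continuation values. So I will argue that redefining $\delta^*$ on a null set (including the single point $t^*$) produces a version that satisfies the displayed rule everywhere and is still optimal. Because the induced states $(\rho_t,x_t)$ are unchanged, $\phi$ itself is unchanged and no circularity arises. A further subtlety is that $\phi$ is evaluated along the optimal path that the disclosure rule itself induces. I will treat this as a fixed-point consistency issue and resolve it by noting that conditions (iv)--(v) are imposed on that same equilibrium path, so the argument only needs to verify that the path satisfies the corollary's characterization.
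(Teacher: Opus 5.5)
Your proposal is correct and follows essentially the same route as the paper. The intermediate value theorem plus strict monotonicity of $J_\rho$ along the path pins down a unique interior cutoff with the stated sign pattern, and Corollary~\ref{cor:bang_bang_gaussian_revised} turns that sign pattern into the disclosure rule; your null-set modification is a slightly more careful version of the paper's closing remark that the first and third cases hold only for almost every date.
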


\begin{proof}
See Appendix~\ref{app:proof_two_phase}.
\end{proof}

At the cutoff, $B_\rho=K_\rho$. Implementation costs exceed informational benefits before the cutoff and informational benefits exceed implementation costs afterward. Feasibility affects this comparison through the continuation state, the shadow value of future opportunities, and the implementable action path.

We next ask whether positive direct informational value alone is sufficient to induce disclosure.

\begin{definition}[Positive-Value Disclosure Principle]
\label{def:positive_value_disclosure}

Under the unrestricted disclosure benchmark $\mathcal D=[0,1]$, the \textbf{positive-value disclosure principle} holds if, along every optimal implementable continuation path and for almost every date at which $a_t>0$,
\[
B_\rho(t,\rho_t,x_t)>0
\quad\Longrightarrow\quad
\delta_t^*>0.
\]

\end{definition}

Under the affine reduced-Hamiltonian representation and the linear--Gaussian information-production technology, $\delta_t^*>0$ is equivalent to full disclosure at every nonindifference date. At an indifference date, every disclosure intensity is optimal. The failure result below occurs on an interval over which the net marginal value is strictly negative.

\begin{corollary}[Failure of the Positive-Value Disclosure Principle]
\label{cor:positive_value_disclosure_failure}

Suppose the conditions of Theorem~\ref{thm:two_phase_disclosure} hold and
\begin{equation}
\label{eq:positive_direct_information_value}
B_\rho(t,\rho_t,x_t)>0
\qquad
\text{for all }t\in[0,\bar T].
\end{equation}
Then the positive-value disclosure principle fails. In particular,
\[
0
<
B_\rho(t,\rho_t,x_t)
<
K_\rho(t,\rho_t,x_t)
\qquad
\text{for every }t<t^*,
\]
while
\[
\delta_t^*=0
\qquad
\text{for almost every }t<t^*.
\]
Thus information has strictly positive direct value throughout the initial opacity phase but is nevertheless optimally withheld.

\end{corollary}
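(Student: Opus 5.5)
The plan is to read off both conclusions directly from Theorem~\ref{thm:two_phase_disclosure} together with the decomposition $J_\rho=B_\rho-K_\rho$. I will not re-derive anything about disclosure optimality.

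First, I invoke Theorem~\ref{thm:two_phase_disclosure}. It gives a unique cutoff $t^*\in(0,\bar T)$ with $J_\rho(t^*,\rho_{t^*},x_{t^*})=0$, and an optimal disclosure process with $\delta_t^*=0$ for $t<t^*$. I then need $J_\rho<0$ strictly on $[0,t^*)$. Condition (v) says $t\mapsto J_\rho(t,\rho_t,x_t)$ is strictly increasing on $(0,\bar T)$, and condition (iv) says it is continuous on $[0,\bar T]$. Hence it is strictly increasing on $[0,\bar T]$, by the mean value theorem applied on subintervals. Because it vanishes at $t^*$, it is strictly negative for every $t<t^*$.

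Next, I rewrite $J_\rho<0$ using Proposition~\ref{prop:gaussian_decomposition} (the identity in condition (iv) of the theorem). This gives $B_\rho(t,\rho_t,x_t)<K_\rho(t,\rho_t,x_t)$ for all $t<t^*$. Combined with the hypothesis \eqref{eq:positive_direct_information_value}, we obtain $0<B_\rho<K_\rho$ on $[0,t^*)$.

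For the opacity statement, I need it along every optimal path, not just a selected one. The step is Corollary~\ref{cor:bang_bang_gaussian_revised}, whose hypotheses (i)--(iv) are exactly conditions (i)--(iii) of Theorem~\ref{thm:two_phase_disclosure}. Since $J_\rho<0$ there, that corollary forces $\delta_t^*=0$ for almost every $t<t^*$ along any optimal implementable continuation path satisfying those conditions. By condition (iii), $a_t>0$ almost everywhere. So on the positive-measure set $[0,t^*)$ (positive measure because $t^*>0$) we have dates with $a_t>0$, $B_\rho>0$, and $\delta_t^*=0$. This contradicts the implication required in Definition~\ref{def:positive_value_disclosure}, so the principle fails.

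The only delicate point is the quantifier in Definition~\ref{def:positive_value_disclosure}. A single optimal path violating the implication on a positive-measure set suffices, and the bang--bang corollary supplies this for every optimal path. I therefore expect no real obstacle beyond stating carefully that the violation occurs on a set of positive measure rather than at isolated indifference dates.
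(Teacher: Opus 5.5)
Your proposal is correct and follows essentially the paper's argument: it reads off $J_\rho<0$ and $\delta_t^*=0$ for almost every $t<t^*$ from Theorem~\ref{thm:two_phase_disclosure} (whose proof itself rests on Corollary~\ref{cor:bang_bang_gaussian_revised}), then combines $J_\rho=B_\rho-K_\rho$ with $B_\rho>0$. Your explicit check of the quantifier in Definition~\ref{def:positive_value_disclosure} is useful extra care that the paper leaves implicit: you exhibit a positive-measure set of dates in $[0,t^*)$ on which $a_t>0$, $B_\rho>0$, and $\delta_t^*=0$.
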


\begin{proof}
See Appendix~\ref{app:proof_positive_value_disclosure_failure}.
\end{proof}

Theorem~\ref{thm:two_phase_disclosure} characterizes the timing of the transition, while Corollary~\ref{cor:positive_value_disclosure_failure} shows that positive direct informational value need not induce disclosure when $K_\rho>B_\rho$. Conversely, disclosure is favored when $B_\rho>K_\rho$, in particular when $K_\rho=0$.

This result is related to \citet{HornerSamuelson2026}, who show that withholding information can improve incentives despite an allocative cost. The mechanism here is different: information is endogenously produced through action and jointly governed by disclosure, incentives, and feasibility, so information with positive direct decision value may nevertheless be optimally withheld because of its implementation cost.

\subsection{Comparative Statics of Disclosure Timing}
\label{subsec:disclosure_comparative_statics}

Under the affine representation of the reduced implementable disclosure Hamiltonian, the linear--Gaussian information-production technology, and the compensated-envelope identity \eqref{eq:Delta_gaussian_revised}, disclosure timing is determined by the cutoff at which the net marginal value of posterior precision crosses zero. The results below derive the general cutoff formula and apply it to signal informativeness, action costs, and risk aversion.

\subsubsection{Cutoff-Based Comparative Statics under Gaussian Learning}

Let $\lambda$ denote an economic parameter and define the composite marginal value along the endogenous optimal state path by $M(t,\lambda):=J_\rho\bigl(t,\rho_t(\lambda),x_t(\lambda);\lambda\bigr)$. The cutoff satisfies
\begin{equation}
\label{eq:cutoff_condition_general}
M(t^*,\lambda)=0.
\end{equation}

\begin{proposition}[Comparative Statics of the Disclosure Cutoff]
\label{prop:comparative_statics_revised}

Suppose Assumptions~\ref{ass:regularity_primitives}--\ref{ass:quasilinear} hold. Suppose further that:
\begin{enumerate}
\item[(i)] the feasible disclosure set is $\mathcal D=[0,1]$, the reduced implementable disclosure Hamiltonian satisfies the affine representation \eqref{eq:affine_representation}, the information-production technology satisfies the linear--Gaussian specification, and the compensated-envelope identity \eqref{eq:Delta_gaussian_revised} holds along the relevant optimal implementable continuation path;

\item[(ii)] the disclosure cutoff date $t^*$ is interior;

\item[(iii)] for an economic parameter $\lambda$, the composite marginal value $M(t,\lambda)$ is continuously differentiable in $(t,\lambda)$ in a neighborhood of $(t^*,\lambda)$ and satisfies $M_t(t^*,\lambda)>0$.
\end{enumerate}

Then the cutoff date is locally differentiable in $\lambda$, and
\begin{equation}
\label{eq:dtstar_general_revised}
\frac{dt^*}{d\lambda}
=
-
\frac{M_\lambda(t^*,\lambda)}
{M_t(t^*,\lambda)}
<0
\quad\Longleftrightarrow\quad
M_\lambda(t^*,\lambda)>0.
\end{equation}

\end{proposition}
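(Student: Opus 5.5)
The plan is a direct application of the implicit function theorem to the cutoff equation \eqref{eq:cutoff_condition_general}, followed by a sign argument.

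First, under condition (i), Corollary~\ref{cor:bang_bang_gaussian_revised} and Theorem~\ref{thm:two_phase_disclosure} show that disclosure is governed by the sign of $J_\rho$ along the optimal path. The cutoff date is therefore a zero of $M(\cdot,\lambda)$. Condition (ii) places this zero in the interior of $(0,\bar T)$, so it is a genuine solution of $M(t^*,\lambda)=0$ and not a boundary artifact.

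Second, I will invoke the implicit function theorem at $(t^*,\lambda)$. By (iii), $M$ is $C^1$ near this point and $M_t(t^*,\lambda)>0\neq0$. Hence there are a neighborhood of $\lambda$ and a unique $C^1$ function $\lambda'\mapsto t^*(\lambda')$ with $M(t^*(\lambda'),\lambda')=0$ that stays inside a neighborhood of $t^*$. Differentiating the identity $M(t^*(\lambda),\lambda)\equiv0$ gives $M_t\,dt^*/d\lambda+M_\lambda=0$, which is the displayed formula $dt^*/d\lambda=-M_\lambda/M_t$.

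Third, since $M_t(t^*,\lambda)>0$, the sign of $dt^*/d\lambda$ is the opposite of the sign of $M_\lambda(t^*,\lambda)$. Thus $dt^*/d\lambda<0$ holds if and only if $M_\lambda(t^*,\lambda)>0$, which is the stated equivalence.

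The only delicate point is making sure that the locally defined implicit solution is in fact the disclosure cutoff for nearby parameter values, and not merely some zero of $M$. I will handle this by noting that $M_t>0$ near $t^*$ makes $M(\cdot,\lambda')$ strictly increasing in a neighborhood of $t^*$, so it crosses zero there exactly once, from negative to positive. By Corollary~\ref{cor:bang_bang_gaussian_revised} this crossing is the switch from $\delta=0$ to $\delta=1$. The statement is local, so the proof does not need uniqueness of the cutoff over all of $[0,\bar T]$.

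Finally, I will stress that $M$ is evaluated along the endogenous state path $(\rho_t(\lambda),x_t(\lambda))$. Consequently $M_\lambda$ is the total derivative with respect to $\lambda$: it includes the indirect effects through the shifted state path, not only the direct partial derivative of $J_\rho$. The $C^1$ hypothesis in (iii) is what licenses this differentiation, so no separate argument is needed for it.
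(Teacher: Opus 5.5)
Your proposal is correct and follows the paper's own argument. Apply the implicit function theorem to $M(t^*,\lambda)=0$ using the $C^1$ hypothesis and $M_t(t^*,\lambda)>0$, differentiate the identity to obtain $dt^*/d\lambda=-M_\lambda/M_t$, and read off the sign from the positive denominator.

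Your extra step identifying the local zero with the switch from $\delta=0$ to $\delta=1$ is harmless but unnecessary, since the paper simply defines the cutoff by \eqref{eq:cutoff_condition_general}. Note also that invoking Corollary~\ref{cor:bang_bang_gaussian_revised} at that step would require $a_t>0$, which is not among the stated hypotheses.
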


\begin{proof}
See Appendix~\ref{app:proof_comparative_statics}.
\end{proof}

Proposition~\ref{prop:comparative_statics_revised} shows that a parameter advances disclosure precisely when it raises the composite marginal value of precision at the cutoff. Using $J_\rho=B_\rho-K_\rho$, we have $M_\lambda=dB_\rho/d\lambda-dK_\rho/d\lambda$ along the endogenous optimal state path. Thus disclosure timing depends on the parameter's total effects on informational benefits and implementation costs, including the induced changes in posterior beliefs, feasibility, and the implementable action path.

\subsubsection{Learning Technologies and Action Costs}

We first apply the general cutoff formula to signal informativeness and then to action costs.

\begin{proposition}[Signal Informativeness and Disclosure Timing]
\label{prop:signal_informativeness}

Suppose the conditions of Proposition~\ref{prop:comparative_statics_revised} hold with $\lambda=\chi:=\sigma^{-2}$. Then the cutoff date is locally differentiable in $\chi$, and
\begin{equation}
\label{eq:chi_cutoff_sign_revised}
\frac{dt^*}{d\chi}
=
-
\frac{M_\chi(t^*,\chi)}
{M_t(t^*,\chi)}
<0
\quad\Longleftrightarrow\quad
M_\chi(t^*,\chi)>0.
\end{equation}
Consequently,
\begin{equation}
\label{eq:chi_benefit_cost_condition_revised}
\frac{dt^*}{d\chi}<0
\quad\Longleftrightarrow\quad
\frac{d}{d\chi}
B_\rho\bigl(t^*,\rho_{t^*}(\chi),x_{t^*}(\chi);\chi\bigr)
>
\frac{d}{d\chi}
K_\rho\bigl(t^*,\rho_{t^*}(\chi),x_{t^*}(\chi);\chi\bigr).
\end{equation}

\end{proposition}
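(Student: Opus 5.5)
The result is an immediate specialization of Proposition~\ref{prop:comparative_statics_revised}. I will take the parameter to be $\lambda=\chi=\sigma^{-2}$ and use the decomposition $J_\rho=B_\rho-K_\rho$ from Proposition~\ref{prop:gaussian_decomposition}.

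First, I confirm that the hypotheses of Proposition~\ref{prop:comparative_statics_revised} hold with $\lambda=\chi$. This is exactly what the statement assumes: the affine reduced Hamiltonian, Gaussian learning, the compensated-envelope identity, an interior cutoff, and $M(t,\chi)=J_\rho(t,\rho_t(\chi),x_t(\chi);\chi)$ being $C^1$ near $(t^*,\chi)$ with $M_t(t^*,\chi)>0$.

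Second, I apply the implicit function theorem to $M(t^*,\chi)=0$. Since $M_t\neq0$, there is a locally unique $C^1$ function $t^*(\chi)$. Differentiating the identity $M(t^*(\chi),\chi)\equiv0$ gives
\[
\frac{dt^*}{d\chi}=-\frac{M_\chi(t^*,\chi)}{M_t(t^*,\chi)}.
\]
Because $M_t>0$, the sign of $dt^*/d\chi$ is the opposite of the sign of $M_\chi$. This yields \eqref{eq:chi_cutoff_sign_revised}.

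Third, I translate the condition into benefits and costs. Along the endogenous optimal path, Proposition~\ref{prop:gaussian_decomposition} gives $M(t,\chi)=B_\rho(t,\rho_t(\chi),x_t(\chi);\chi)-K_\rho(t,\rho_t(\chi),x_t(\chi);\chi)$ as an identity in $(t,\chi)$ near the cutoff. Differentiating this identity with respect to $\chi$ at $t=t^*$ gives
\[
M_\chi=\frac{dB_\rho}{d\chi}-\frac{dK_\rho}{d\chi},
\]
where these are total derivatives. They include the induced changes in $\rho_{t^*}(\chi)$ and $x_{t^*}(\chi)$ and in the implementable action path. Therefore $M_\chi>0$ is equivalent to the inequality in \eqref{eq:chi_benefit_cost_condition_revised}.

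The only delicate point is justifying that $B_\rho$ and $K_\rho$ are separately differentiable in $\chi$ along the optimal path, so that their difference can be differentiated term by term. I would handle this by including the differentiability of both components in the regularity hypothesis (iii), or by deriving it from Assumption~\ref{ass:regularity_primitives}. If only $M$ is known to be differentiable, the first claim \eqref{eq:chi_cutoff_sign_revised} still holds, and the second is read with $dB_\rho/d\chi-dK_\rho/d\chi$ defined as $M_\chi$.
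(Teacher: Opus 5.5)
Your proposal is correct and matches the paper's proof: specialize the implicit-function-theorem cutoff formula of Proposition~\ref{prop:comparative_statics_revised} to $\lambda=\chi$, then use $J_\rho=B_\rho-K_\rho$ along the optimal path and take total derivatives in $\chi$ with $t^*$ held fixed. Your remark that $B_\rho$ and $K_\rho$ must each be differentiable in $\chi$ points to a regularity condition the paper assumes without stating, and your fallback reading (defining $dB_\rho/d\chi-dK_\rho/d\chi$ as $M_\chi$) is the appropriate one.
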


\begin{proof}
See Appendix~\ref{app:proof_signal_informativeness}.
\end{proof}

Greater signal informativeness does not mechanically imply earlier disclosure. It advances disclosure only when its total effect on informational benefits exceeds its total effect on implementation costs at the cutoff. This local timing result is distinct from the informational-monotonicity result in Subsection~\ref{subsec:informational_monotonicity}, which compares optimized values across information-production technologies.

\begin{proposition}[Cost Parameters and Disclosure Timing]
\label{prop:effort_cost_curvature_revised}

Let $\kappa$ parameterize the action-cost function $c(a;\kappa)$, with $c_{a\kappa}>0$. Suppose the conditions of Proposition~\ref{prop:comparative_statics_revised} hold with $\lambda=\kappa$. Suppose further that, at the cutoff, the total effect of $\kappa$ on the direct informational-benefit component is weakly negative and its total effect on the implementation-cost component is weakly positive, with at least one inequality strict; that is, $dB_\rho/d\kappa\leq0$ and $dK_\rho/d\kappa\geq0$ along the endogenous optimal state path at $t^*$.

Then $M_\kappa(t^*,\kappa)<0$, and therefore
\begin{equation}
\label{eq:dtstar_kappa_revised}
\frac{dt^*}{d\kappa}
=
-
\frac{M_\kappa(t^*,\kappa)}
{M_t(t^*,\kappa)}
>
0.
\end{equation}
Hence a higher action-cost parameter delays disclosure under the stated component-ordering conditions.

\end{proposition}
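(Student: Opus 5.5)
The plan is to reduce the statement to Proposition~\ref{prop:comparative_statics_revised} with $\lambda=\kappa$. The only new work is the sign of $M_\kappa$. The derivative formula and the differentiability of the cutoff then follow from that proposition.

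\textbf{Step 1: the sign of $M_\kappa$.} By the decomposition of Proposition~\ref{prop:gaussian_decomposition}, along the endogenous optimal state path
\[
M(t,\kappa)=J_\rho\bigl(t,\rho_t(\kappa),x_t(\kappa);\kappa\bigr)=B_\rho\bigl(t,\rho_t(\kappa),x_t(\kappa);\kappa\bigr)-K_\rho\bigl(t,\rho_t(\kappa),x_t(\kappa);\kappa\bigr).
\]
Condition (iii) of Proposition~\ref{prop:comparative_statics_revised} makes $M$ continuously differentiable near $(t^*,\kappa)$. So I would differentiate this identity totally in $\kappa$ at $t=t^*$, holding $t$ fixed. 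The result is $M_\kappa(t^*,\kappa)=dB_\rho/d\kappa-dK_\rho/d\kappa$, where both total derivatives include the induced changes in $\rho_{t^*}$, $x_{t^*}$ and the implementable action path. This is exactly the sense in which the hypotheses $dB_\rho/d\kappa\le 0$ and $dK_\rho/d\kappa\ge 0$ are stated.

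Since at least one of these inequalities is strict, $M_\kappa(t^*,\kappa)<0$. One technical point needs care: differentiability of $M$ does not by itself give separate differentiability of $B_\rho$ and $K_\rho$. I would read the ordering hypotheses as presupposing that these total derivatives exist. If needed, I would deduce the existence of the second from the first through the identity $K_\rho=B_\rho-M$.

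\textbf{Step 2: the cutoff.} With $M(t^*,\kappa)=0$ and $M_t(t^*,\kappa)>0$, the implicit function theorem gives a locally unique $C^1$ cutoff map $\kappa\mapsto t^*(\kappa)$. Interiority of $t^*$ ensures the neighborhood stays in $(0,\bar T)$. This cutoff satisfies $dt^*/d\kappa=-M_\kappa/M_t$. I would simply cite Proposition~\ref{prop:comparative_statics_revised} for this step rather than redo it.

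\textbf{Step 3: conclusion.} A negative numerator $M_\kappa$ and a positive denominator $M_t$ give $dt^*/d\kappa>0$. So disclosure is delayed under the stated component-ordering conditions.

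The assumption $c_{a\kappa}>0$ plays no formal role. It only motivates why higher $\kappa$ plausibly raises the transfers needed to implement the action and so raises $K_\rho$. I would remark on this without using it.

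The main obstacle is the interpretive bookkeeping in Step 1, namely making sure the total derivatives along the endogenous path line up with $M_\kappa$. Otherwise the argument is immediate.
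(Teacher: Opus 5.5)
Your proposal is correct and follows the paper's own proof. Like the paper, you write $M_\kappa(t^*,\kappa)=dB_\rho/d\kappa-dK_\rho/d\kappa$ using the decomposition $J_\rho=B_\rho-K_\rho$ along the endogenous optimal path, sign it negative from the component-ordering hypotheses, and invoke Proposition~\ref{prop:comparative_statics_revised} to obtain $dt^*/d\kappa=-M_\kappa/M_t>0$. The paper's text likewise treats $c_{a\kappa}>0$ as motivation rather than as an input to the argument, as you do.
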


\begin{proof}
See Appendix~\ref{app:proof_effort_cost_curvature}.
\end{proof}

Proposition~\ref{prop:effort_cost_curvature_revised} is a sufficient-condition result. The condition $c_{a\kappa}>0$ means that $c(a;\kappa)$ has increasing differences in $(a,\kappa)$, or equivalently that $-c(a;\kappa)$ has decreasing differences. This primitive condition alone does not determine disclosure timing because actions, transfers, posterior dynamics, and feasibility adjust endogenously. The additional component-ordering condition determines the resulting effect on the net marginal value of precision.

For example, if $c(a;\kappa)=\kappa a^2/2$, then $c_{a\kappa}=a>0$ for positive actions and $c_{aa\kappa}=1>0$. Under the stated component ordering, a higher $\kappa$ lowers the net marginal value of precision at the cutoff and delays disclosure.

\subsubsection{Risk Aversion and Disclosure Timing}

Risk aversion affects disclosure through both the direct informational benefit and the adjustments required to preserve implementability. Suppose the agent's instantaneous utility is $u(T_t,\mu_t,x_t)-c(a_t)$, where $u_T>0$ and $u_{TT}\leq0$. Under quasi-linear utility, $u(T,\mu,x)=T+\tilde u(\mu,x)$ and $u_{TT}=0$, whereas risk aversion with respect to transfers corresponds to $u_{TT}<0$. Under strict concavity, posterior fluctuations generate continuation-utility risk and interact with intertemporal insurance; see \citet{SpearSrivastava1987,ThomasWorrall1990,PhelanTownsend1991,Sannikov2008}.

Because quasi-linearity is not imposed here, we use the general marginal-value decomposition from Subsection~\ref{subsec:marginal_value_information}. Let $(\rho_t^{RA},x_t^{RA})$ and $(\rho_t^{RN},x_t^{RN})$ denote the respective optimal implementable continuation paths under risk aversion and risk neutrality, and let $J^{RA}$ and $J^{RN}$ denote the corresponding principal continuation values. Let $\mathcal B_\rho^{RA}$ and $\mathcal B_\rho^{RN}$ denote the corresponding direct-benefit components, and let $\mathcal C_\rho^{RA}$ and $\mathcal C_\rho^{RN}$ denote the corresponding implementability-adjustment components.

\begin{proposition}[Risk Aversion and Disclosure Timing]
\label{prop:risk_aversion}

Suppose Assumptions~\ref{ass:regularity_primitives}--\ref{ass:markov_state_representation} hold. Suppose further that:
\begin{enumerate}
\item[(i)] in both regimes, the feasible disclosure set is $\mathcal D=[0,1]$, the reduced implementable disclosure Hamiltonian satisfies the affine representation \eqref{eq:affine_representation}, the common information-production technology satisfies the linear--Gaussian specification, and the compensated-envelope identity \eqref{eq:Delta_gaussian_revised} holds along the relevant optimal implementable continuation path;

\item[(ii)] the agent's utility under risk aversion is strictly concave in transfers, $u_{TT}(T,\mu,x)<0$;

\item[(iii)] along the respective optimal implementable continuation paths,
\[
\mathcal B_\rho^{RA}(t,\rho_t^{RA},x_t^{RA})
\leq
\mathcal B_\rho^{RN}(t,\rho_t^{RN},x_t^{RN}),
\qquad
\mathcal C_\rho^{RA}(t,\rho_t^{RA},x_t^{RA})
\geq
\mathcal C_\rho^{RN}(t,\rho_t^{RN},x_t^{RN})
\]
for all $t\in[0,\bar T]$.
\end{enumerate}

Then
\[
J_\rho^{RA}(t,\rho_t^{RA},x_t^{RA})
\leq
J_\rho^{RN}(t,\rho_t^{RN},x_t^{RN})
\qquad
\text{for all }t\in[0,\bar T].
\]

If both composite marginal-value paths are continuous, strictly increasing, and admit unique interior disclosure cutoffs satisfying
\[
J_\rho^{RA}(t_{RA}^*,\rho_{t_{RA}^*}^{RA},x_{t_{RA}^*}^{RA})=0,
\qquad
J_\rho^{RN}(t_{RN}^*,\rho_{t_{RN}^*}^{RN},x_{t_{RN}^*}^{RN})=0,
\]
then $t_{RA}^*\geq t_{RN}^*$. Hence risk aversion weakly delays disclosure under the stated component-ordering conditions.

\end{proposition}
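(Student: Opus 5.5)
The argument has two parts: a pointwise inequality, and then a comparison of zero crossings.

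\emph{Pointwise inequality.} Under risk neutrality we take quasi-linear utility and apply the general accounting decomposition \eqref{eq:general_information_decomposition} from Subsection~\ref{subsec:marginal_value_information} to each regime separately. With the normalization $d\rho_t^\varepsilon/d\varepsilon|_{\varepsilon=0}=1$, it gives
\[
J_\rho^{j}(t,\rho_t^{j},x_t^{j})=\mathcal B_\rho^{j}(t,\rho_t^{j},x_t^{j})-\mathcal C_\rho^{j}(t,\rho_t^{j},x_t^{j}),\qquad j\in\{RA,RN\}.
\]
In the risk-neutral regime this coincides with $B_\rho-K_\rho$ from Proposition~\ref{prop:gaussian_decomposition}. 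Hypothesis (iii) says $\mathcal B^{RA}\le\mathcal B^{RN}$ and $-\mathcal C^{RA}\le-\mathcal C^{RN}$ at every $t$. Adding these two inequalities yields $J_\rho^{RA}\le J_\rho^{RN}$ for all $t\in[0,\bar T]$.

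The role of hypothesis (ii) deserves a remark. Strict concavity is not used algebraically, since (iii) already encodes its consequences. It only ensures that the two regimes are genuinely distinct, and that the general (non-quasi-linear) decomposition is the relevant one for the RA regime. Hypothesis (i), through Corollary~\ref{cor:bang_bang_gaussian_revised}, ensures that in each regime disclosure is governed by the sign of $J_\rho^j$. This is what lets us interpret the zero crossings as disclosure cutoffs.

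\emph{Cutoff comparison.} Suppose, for contradiction, that $t_{RA}^*<t_{RN}^*$. Since $J_\rho^{RA}$ is strictly increasing along its path and vanishes at $t_{RA}^*$, we have $J_\rho^{RA}(t_{RN}^*)>0$. Since $J_\rho^{RN}(t_{RN}^*)=0$, this gives $J_\rho^{RA}(t_{RN}^*)>J_\rho^{RN}(t_{RN}^*)$, contradicting the pointwise inequality. Hence $t_{RA}^*\ge t_{RN}^*$. Combining this with the bang--bang rule of Corollary~\ref{cor:bang_bang_gaussian_revised} (opacity before the cutoff, transparency after) gives the weak delay of disclosure under risk aversion.

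\emph{Main obstacle.} The delicate step is justifying that the general decomposition applies in the RA regime, where quasi-linearity fails. There $\mathcal C_\rho$ includes $v_T h^T$ terms and possibly action and feasibility adjustments, so the envelope evaluation must be checked. I would handle this by appealing to the derivation of \eqref{eq:general_information_decomposition}, which requires only an optimal implementable plan and differentiable implementability-preserving perturbations. Both are guaranteed by Assumption~\ref{ass:regularity_primitives} and the maintained envelope condition, and no quasi-linearity is needed for that identity.

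A second subtlety is that the two paths live at different states, $(\rho^{RA},x^{RA})$ versus $(\rho^{RN},x^{RN})$. This causes no difficulty, because all comparisons are made as functions of $t$ along the respective optimal paths, exactly as hypothesis (iii) is stated.
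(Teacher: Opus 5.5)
Your proposal is correct and follows essentially the same route as the paper: apply the general decomposition $J_\rho=\mathcal B_\rho-\mathcal C_\rho$ in each regime along its own optimal path, and add the two orderings in (iii) to get the pointwise inequality. You then compare zero crossings using strict monotonicity of the risk-averse path; your contradiction argument is just the contrapositive of the paper's observation that $J_\rho^{RA}\le 0$ at $t^*_{RN}$, so the risk-averse zero cannot come earlier.
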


\begin{proof}
See Appendix~\ref{app:proof_risk_aversion}.
\end{proof}

Under the stated ordering conditions, risk aversion lowers the net marginal value of precision, weakly delays disclosure, and prolongs the initial withholding phase.

\section{A Closed-Form Illustration: Delegated Experimentation with Scarce Testing Capacity}
\label{subsec:closed_form_experimentation}

This section serves both as a closed-form application of the general framework and as a transparent illustration of its economic mechanism. Using the delegated-experimentation environment introduced in Section~3, a principal delegates a pilot project to an agent who chooses an experimentation intensity. Experimentation produces evidence about an unknown technology but is costly and consumes scarce testing capacity, while the principal governs how the evidence is validated, disclosed, and used in posterior beliefs and compensation. The closed-form solution makes transparent how information production, incentives, disclosure, and feasibility jointly determine the optimal path of experimentation and informational use.

\begin{example}[Delegated Experimentation and Endogenous Disclosure]
\label{ex:closed_form_gaussian_revised}

Let $\theta$ denote the informational-environment state, interpreted as an unknown characteristic of the technology, such as its reliability, productivity, or suitability for large-scale adoption. The state $\theta$ does not enter current payoffs directly, but governs the distribution of experimental signals.

At each date $t\in[0,\bar T]$, the agent chooses an experimentation intensity $a_t\in[0,\bar a]$. The raw experimental evidence is represented by the signal process
\[
dS_t
=
a_t\theta\,dt
+
\sigma\,dW_t,
\qquad
\chi:=\sigma^{-2}.
\]

This raw evidence does not automatically enter the principal's formal information system. Let $\delta_t\in[0,1]$ denote the intensity with which the evidence is validated, disclosed, and incorporated into public posterior beliefs. Under the linear--Gaussian information-production technology,
$\dot\rho_t=\delta_t\chi a_t^2$.
Thus posterior precision is jointly determined by the action that produces evidence and the disclosure rule that makes it publicly usable.

Experimentation consumes scarce testing capacity:
\[
\dot x_t
=
-a_t,
\qquad
x_0\in[0,\bar x],
\]
and the induced path must satisfy $x_t\geq0$.

The agent is risk neutral and has flow utility
\[
T_t+\eta x_t-\frac{\kappa}{2}a_t^2
\]
and terminal payoff
\[
T_{\bar T}+\eta_Tx_{\bar T},
\]
where $\kappa>0$ and $\eta,\eta_T\geq0$.

The agent is compensated for producing verifiable information. Transfers are affine in public posterior precision:
\[
T_t=w+q\rho_t,
\qquad
T_{\bar T}=q_T\rho_{\bar T},
\]
where $q,q_T>0$. The constant payment $w$ is chosen to satisfy the agent's participation constraint.

The principal's flow utility is $b\rho_t-T_t$, and the principal's terminal payoff is
\[
\beta\rho_{\bar T}
+
\varphi x_{\bar T}
-
T_{\bar T},
\]
where $b,\beta,\varphi>0$.

Using the continuation-value representation in Section~4, for any fixed continuation action--disclosure path the agent's continuation value is affine in the current informational and feasibility states:
\[
U(t,\rho,x)
=
\big[q_T+q(\bar T-t)\big]\rho
+
\big[\eta_T+\eta(\bar T-t)\big]x
+
\widetilde U(t),
\]
where $\widetilde U(t)$ collects terms independent of the current values of $\rho$ and $x$. Hence
\[
U_\rho(t)
=
q_T+q(\bar T-t),
\qquad
U_x(t)
=
\eta_T+\eta(\bar T-t).
\]

The action-relevant continuation-value Hamiltonian is
\[
\mathcal H_t^A(a)
=
-\frac{\kappa}{2}a^2
+
\delta_t\chi a^2U_\rho(t)
-
aU_x(t).
\]

When $\delta_t=0$, $\mathcal H_t^A(a)=-\kappa a^2/2-aU_x(t)$, so $a_t=0$ is optimal. Under full disclosure, maximal experimentation is implementable whenever
\begin{equation}
\label{eq:high_action_implementability}
\chi\bigl[q_T+q(\bar T-t)\bigr]
\geq
\frac{\kappa}{2}
+
\frac{\eta_T+\eta(\bar T-t)}{\bar a}.
\end{equation}
This condition implies that $\mathcal H_t^A$ is convex and that $\bar a$ weakly dominates zero.

Consider an inactive initial phase followed by full disclosure and maximal experimentation:
\[
a_t^*
=
\begin{cases}
0, & t<t^*,\\[0.3em]
\bar a, & t>t^*.
\end{cases}
\]
The induced state paths are
\[
\rho_t=\rho_0,
\qquad
x_t=x_0,
\qquad
t<t^*,
\]
and
\[
\rho_t
=
\rho_0+\chi\bar a^2(t-t^*),
\qquad
x_t
=
x_0-\bar a(t-t^*),
\qquad
t>t^*.
\]

Let $\bar U$ denote the agent's reservation payoff. Binding participation requires
\[
\bar U
=
\int_0^{\bar T}
\left(
w+q\rho_t+\eta x_t-\frac{\kappa}{2}(a_t^*)^2
\right)dt
+
q_T\rho_{\bar T}
+
\eta_Tx_{\bar T},
\]
so
\begin{equation}
\label{eq:fixed_transfer_binding_ir}
w
=
\frac{1}{\bar T}
\left[
\bar U
-
q\int_0^{\bar T}\rho_t\,dt
-
\eta\int_0^{\bar T}x_t\,dt
-
q_T\rho_{\bar T}
-
\eta_Tx_{\bar T}
+
\frac{\kappa}{2}\bar a^2(\bar T-t^*)
\right].
\end{equation}

Substituting into the principal's objective gives
\begin{align}
\Pi
={}&
b\int_0^{\bar T}\rho_t\,dt
+
\beta\rho_{\bar T}
+
\varphi x_{\bar T}
-
\bar U
\notag\\
&+
\eta\int_0^{\bar T}x_t\,dt
+
\eta_Tx_{\bar T}
-
\frac{\kappa}{2}\bar a^2(\bar T-t^*).
\label{eq:principal_reduced_payoff}
\end{align}
Because transfers are unrestricted, the payments generated by $q$ and $q_T$ are offset through $w$. They remain essential for implementation through \eqref{eq:high_action_implementability}, but do not enter the reduced objective conditional on the implemented path.

Using the continuation-value formulation in Section~5, in the interior region where the feasibility constraint is slack, the principal's HJB becomes
\begin{equation}
\label{eq:closed_form_principal_hjb}
-\partial_tJ(t,\rho,x)
=
b\rho+\eta x
+
\sup_{\substack{a\in[0,\bar a],\,\delta\in[0,1]\\
a\ \text{implementable under }\delta}}
\left\{
\delta\chi a^2J_\rho(t,\rho,x)
-
aJ_x(t,\rho,x)
-
\frac{\kappa}{2}a^2
\right\},
\end{equation}
with terminal condition
\[
J(\bar T,\rho,x)
=
\beta\rho+(\varphi+\eta_T)x.
\]
Because the reduced problem is affine in $\rho$ and $x$, write
\[
J(t,\rho,x)
=
A(t)+B(t)\rho+C(t)x.
\]
Matching coefficients in \eqref{eq:closed_form_principal_hjb} and using the terminal condition gives
\[
J_\rho(t,\rho,x)
=
\beta+b(\bar T-t),
\qquad
J_x(t,\rho,x)
=
\varphi+\eta_T+\eta(\bar T-t).
\]

Comparing inactivity with full-disclosure maximal experimentation, the corresponding Hamiltonian gain from activating experimentation at date $t$ is
\begin{align}
\Gamma(t)
={}&
\chi\bar a^2J_\rho(t,\rho,x)
-
\bar aJ_x(t,\rho,x)
-
\frac{\kappa}{2}\bar a^2
\notag\\
={}&
\beta\chi\bar a^2
-
(\varphi+\eta_T)\bar a
-
\frac{\kappa}{2}\bar a^2
-
\bigl(
\eta\bar a-b\chi\bar a^2
\bigr)(\bar T-t).
\label{eq:marginal_activation_value}
\end{align}
Thus, whenever maximal experimentation is implementable, $\Gamma(t)$ gives the net gain from activating full experimentation and disclosure rather than remaining inactive.

Let $\tau:=\bar T-t^*$ denote the active duration. Feasibility requires
\[
0
\leq
\tau
\leq
\bar\tau,
\qquad
\bar\tau
:=
\min
\left\{
\bar T,\frac{x_0}{\bar a}
\right\}.
\]
Along the two-phase path,
\[
\rho_{\bar T}
=
\rho_0+\chi\bar a^2\tau,
\qquad
x_{\bar T}
=
x_0-\bar a\tau,
\]
and
\[
\int_0^{\bar T}\rho_t\,dt
=
\rho_0\bar T
+
\frac{\chi\bar a^2}{2}\tau^2,
\qquad
\int_0^{\bar T}x_t\,dt
=
x_0\bar T
-
\frac{\bar a}{2}\tau^2.
\]
Therefore,
\begin{equation}
\label{eq:principal_tau_objective}
\Pi(\tau)
=
C
+
\frac{1}{2}
\bigl(
b\chi\bar a^2-\eta\bar a
\bigr)\tau^2
+
\left[
\beta\chi\bar a^2
-
(\varphi+\eta_T)\bar a
-
\frac{\kappa}{2}\bar a^2
\right]\tau,
\end{equation}
where $C$ is independent of $\tau$. Moreover,
\[
\Pi'(\tau)
=
\Gamma(\bar T-\tau),
\]
so the reduced-duration problem and the Hamiltonian switching condition give the same cutoff.

Suppose
\begin{equation}
\label{eq:cutoff_concavity_condition}
\eta\bar a
>
b\chi\bar a^2,
\qquad\text{equivalently,}\qquad
\eta
>
b\chi\bar a.
\end{equation}
Then $\Pi(\tau)$ is strictly concave, and the interior active duration is
\[
\tau^*
=
\frac{
\beta\chi\bar a^2
-
(\varphi+\eta_T)\bar a
-
\frac{\kappa}{2}\bar a^2
}{
\eta\bar a
-
b\chi\bar a^2
}.
\]
Hence
\begin{equation}
\label{eq:closed_form_cutoff_revised}
t^*
=
\bar T
-
\frac{
\beta\chi\bar a^2
-
(\varphi+\eta_T)\bar a
-
\frac{\kappa}{2}\bar a^2
}{
\eta\bar a
-
b\chi\bar a^2
}.
\end{equation}
Equivalently,
\[
t^*
=
\bar T
-
\frac{
\beta\chi\bar a
-
(\varphi+\eta_T)
-
\frac{\kappa}{2}\bar a
}{
\eta
-
b\chi\bar a
}.
\]

The solution is interior and feasible when
\begin{equation}
\label{eq:interior_cutoff_conditions}
0
<
\beta\chi\bar a^2
-
(\varphi+\eta_T)\bar a
-
\frac{\kappa}{2}\bar a^2
<
\bigl(
\eta\bar a
-
b\chi\bar a^2
\bigr)\bar\tau.
\end{equation}
Otherwise, the optimum occurs at $\tau=0$ or $\tau=\bar\tau$.

Dividing the switching value by the precision flow $\chi\bar a^2$ gives the net value per unit of precision generated through experimentation:
\begin{equation}
\label{eq:closed_form_information_index}
\frac{\Gamma(t)}{\chi\bar a^2}
=
\beta
+
b(\bar T-t)
-
\frac{\kappa}{2\chi}
-
\frac{
\varphi+\eta_T+\eta(\bar T-t)
}{
\chi\bar a
}.
\end{equation}
The first two terms, $J_\rho(t)=\beta+b(\bar T-t)$, give the direct value of additional precision, while the remaining terms capture the experimentation and feasibility costs required to generate it.

Under \eqref{eq:cutoff_concavity_condition},
\[
\frac{d}{dt}
\left[
\frac{\Gamma(t)}{\chi\bar a^2}
\right]
=
\frac{\eta}{\chi\bar a}
-
b
>
0.
\]
Thus $\Gamma(t)$ is strictly increasing and, under \eqref{eq:interior_cutoff_conditions}, crosses zero uniquely.

The cutoff policy is also globally optimal within the full implementable set. For any disclosure intensity, $\mathcal H_t^A(a)$ is either decreasing on $[0,\bar a]$ or convex in $a$, so the agent's implemented action is always $0$ or $\bar a$. Conditional on activating experimentation at $\bar a$, full disclosure raises posterior precision without increasing the real experimentation or feasibility costs and also strengthens implementation. Full disclosure is therefore optimal at active dates. Since $\Gamma(t)$ is increasing, those dates form a terminal interval.

Provided that \eqref{eq:high_action_implementability} holds throughout $[t^*,\bar T]$, an optimal disclosure process can be selected as
\[
\delta_t^*
=
\begin{cases}
0, & t<t^*,\\[0.3em]
\text{any value in }[0,1], & t=t^*,\\[0.3em]
1, & t>t^*.
\end{cases}
\]
The corresponding action is zero before the cutoff and $\bar a$ afterward.

Because $a_t^*=0$ before the cutoff, disclosure does not affect posterior precision during the inactive phase. Thus $\delta_t^*=0$ is an optimal selection, but it need not be the unique pointwise disclosure choice before $t^*$.

The initial inactive phase arises despite $b,\beta>0$ because verified information requires costly experimentation and depletes valuable testing capacity. As the terminal adoption decision approaches, the remaining cost of capacity depletion declines and the net value of activation rises.

If action does not affect information production, disclosure no longer changes experimentation incentives. If evidence is automatically incorporated into public beliefs, disclosure ceases to be an independent instrument. If feasibility has no value, so that $\eta=\eta_T=\varphi=0$, condition~\eqref{eq:cutoff_concavity_condition} cannot hold when $b>0$, and the interior cutoff disappears.

\end{example}

The application makes the solution procedure explicit: continuation values determine implementable experimentation, binding participation yields the reduced payoff, feasibility restricts the active duration, and the single crossing of the marginal activation value determines the cutoff.

\section{Extensions and Robustness of Endogenous Informational Evolution}

Several of the sharper results in Sections~5 and~6 impose additional structure when needed, including quasi-linearity, affine disclosure, or a linear--Gaussian information-production technology. This section returns to the general recursive framework and considers two extensions: private payoff-relevant information and limited commitment. The first adds truthful-reporting constraints and private-information-indexed continuation values; the second replaces ex ante participation with continuation participation constraints. In both cases, the central recursive structure is preserved: actions produce information and affect feasibility, disclosure governs how generated information enters public beliefs, and continuation values determine implementability. Unless otherwise stated, the extensions maintain Assumptions~\ref{ass:regularity_primitives}--\ref{ass:markov_state_representation}, with additional conditions imposed only when required.

\subsection{Private Payoff-Relevant Types and Informational-Environment Uncertainty}
\label{subsec:payoff_relevant_information_states}

The benchmark model isolates endogenous information production by assuming that the informational-environment state $\theta$ governs the information-production technology but does not enter primitive payoffs directly. We now introduce a payoff-relevant private component, so that reporting and action incentive constraints interact within the same recursive informational system.

Let the primitive state be $z=(\omega,\theta)\in\Omega\times\Theta$, where $\omega$ is the agent's payoff-relevant private type and $\theta$ is the latent informational-environment state governing information production. The common prior $F_0\in\Delta(\Omega\times\Theta)$ is known to both the principal and the agent. The principal does not observe the realization of $z$, while the agent privately observes a signal $\xi\in\Xi$ with conditional distribution $I(\cdot\mid\omega)$ and likelihood $i(\xi\mid\omega)$. Independence is obtained as the special case $F_0=F_0^\Omega\times F_0^\Theta$. Direct observation of the type is obtained when $\Xi=\Omega$ and $I(B\mid\omega)=\mathbf 1_{\{\omega\in B\}}$ for every measurable $B\subseteq\Omega$.

The timing extends the benchmark in the standard way. The agent first observes the private signal $\xi$. The principal then commits to the report-contingent transfer and disclosure rules, and the agent decides whether to participate. Conditional on participation, the agent reports and subsequently chooses the action. The report and the realized action-generated information then determine posterior updating and feasibility evolution under the committed rules.

As in the benchmark model, actions produce raw information. Conditional on the primitive state, the information-production technology is $Q(\cdot\mid a_t,x_t,\theta)$, or more generally $Q(\cdot\mid a_t,x_t,\omega,\theta)$. The first specification separates payoff-relevant uncertainty from informational-environment uncertainty; the second allows the payoff-relevant type also to affect information production.

The mechanism may request a report $\hat\xi$ of the agent's private signal. The disclosed signal is $Y_t=\pi_t(S_t)$, and the public history $h_t^\pi$ is generated by past reports, disclosed signals $(Y_s)_{s\leq t}$, transfers, and public state variables. At a reporting stage, let $\mu_{t^-}$ denote the public posterior immediately before the report under consideration is incorporated. The $t^-$ notation only distinguishes beliefs immediately before and after a private report; absent a reporting stage, $\mu_{t^-}$ coincides with the benchmark posterior $\mu_t$.

For each $(\omega,\theta)$, let $L_t(h_{t^-}^\pi\mid\omega,\theta)$ denote the likelihood of this pre-report public history induced by the information-production technology, past reporting and disclosure rules, transfers, actions, and public state dynamics. Then
\[
\mu_{t^-}(B)
=
\Pr(z\in B\mid h_{t^-}^\pi)
=
\frac{
\int_B L_t(h_{t^-}^\pi\mid\omega,\theta)\,dF_0(\omega,\theta)
}{
\int_{\Omega\times\Theta}
L_t(h_{t^-}^\pi\mid\omega,\theta)\,dF_0(\omega,\theta)
},
\qquad
B\subseteq\Omega\times\Theta,
\]
whenever the denominator is strictly positive.

Conditional on private signal $\xi$ and the pre-report public posterior $\mu_{t^-}$, the agent's interim posterior is
\[
\nu^\xi(B\mid\mu_{t^-})
=
\frac{
\int_B i(\xi\mid\omega)\,d\mu_{t^-}(\omega,\theta)
}{
\int_{\Omega\times\Theta}
i(\xi\mid\omega)\,d\mu_{t^-}(\omega,\theta)
},
\qquad
B\subseteq\Omega\times\Theta.
\]

The report under consideration is then incorporated into the public history, and we denote the resulting continuation public posterior by $\mu_t$. Thus $\mu_t\in\Delta(\Omega\times\Theta)$ is induced by reports, action-generated signals, disclosure, and Bayesian updating rather than directly selected by the principal.

The agent's primitive flow utility is $u(T_t,\omega,\mu_t,x_t)-c(a_t)$, and let $U_{\bar T}(\omega,\mu,x)$ denote the agent's primitive terminal payoff at date $\bar T$. From the perspective of an agent with private signal $\xi$ and pre-report public posterior $\mu$, define
\[
\bar u(T,\xi,\mu,x)
:=
\int_{\Omega\times\Theta}
u(T,\omega,\mu,x)\,
d\nu^\xi(\omega,\theta\mid\mu),
\qquad
\bar U_{\bar T}(\xi,\mu,x)
:=
\int_{\Omega\times\Theta}
U_{\bar T}(\omega,\mu,x)\,
d\nu^\xi(\omega,\theta\mid\mu).
\]

A direct mechanism specifies the report-contingent transfer and disclosure rules $T(t,\hat\xi,\mu,x)$ and $\pi(t,\hat\xi,\mu,x)$, where $\mu$ denotes the public posterior immediately before the report under consideration is incorporated. Separately, let $a(t,\hat\xi,\mu,x)$ denote the action profile the principal seeks to implement following report $\hat\xi$. These rules and the target action profile may depend on the report and the public state but not directly on the unobserved type $\omega$. The report, together with subsequent action-generated and disclosed signals, determines the continuation public posterior through Bayesian updating.

For a true private signal $\xi$ and a report $\hat\xi$, let $U(t,\xi;\hat\xi,\mu,x)$ denote the agent's maximal continuation payoff from reporting $\hat\xi$ at the pre-report public state $(\mu,x)$ and subsequently choosing an optimal admissible action strategy under the transfer, disclosure, and continuation rules assigned to that report. Along the truthful path, write $U(t,\xi,\mu,x):=U(t,\xi;\xi,\mu,x)$. Thus the truthful-reporting constraint compares truthful behavior with every false report followed by an optimal subsequent action deviation and therefore rules out joint report-and-action deviations.

Let $\mathcal L_z^{\pi,a;\xi}$ denote the infinitesimal generator of the continuation public posterior over $z=(\omega,\theta)$ from the perspective of an agent with private signal $\xi$, under disclosure rule $\pi$ and action $a$, with the current report incorporated through the report-contingent continuation rule. The following proposition extends the recursive implementability characterization to the composite-state environment.

\begin{proposition}[Private Payoff-Relevant Information and Recursive Informational Evolution]
\label{prop:composite_states_recursive}

Suppose Assumptions~\ref{ass:regularity_primitives}--\ref{ass:markov_state_representation} hold in the composite-state environment. Suppose further that the public posterior $\mu_t\in\Delta(\Omega\times\Theta)$ induced by the public history is a controlled Markov public state.

Then a report-contingent action profile $a(t,\xi,\mu,x)$ is implementable under a direct report-contingent mechanism if and only if there exists a family of private-information-indexed continuation values $U(t,\xi,\mu,x)$, $\xi\in\Xi$, satisfying the following conditions:
\begin{enumerate}
\item[(i)] For each $\xi\in\Xi$, the truthful continuation value satisfies
\[
\begin{aligned}
-\partial_t U(t,\xi,\mu,x)
={}&
\bar u(T(t,\xi,\mu,x),\xi,\mu,x)
-
c(a(t,\xi,\mu,x))
\\
&+
\mathcal L_z^{\pi,a(t,\xi,\mu,x);\xi}
U(t,\xi,\mu,x)
+
U_x(t,\xi,\mu,x)G(x,a(t,\xi,\mu,x)),
\end{aligned}
\]
with terminal condition
\[
U(\bar T,\xi,\mu,x)
=
\bar U_{\bar T}(\xi,\mu,x).
\]

\item[(ii)] For each $\xi\in\Xi$, the target action satisfies
\[
a(t,\xi,\mu,x)
\in
\arg\max_{\tilde a\in\mathcal A}
\left\{
-c(\tilde a)
+
\mathcal L_z^{\pi,\tilde a;\xi}
U(t,\xi,\mu,x)
+
U_x(t,\xi,\mu,x)G(x,\tilde a)
\right\}.
\]

\item[(iii)] Truthful-reporting incentive compatibility requires
\[
U(t,\xi,\mu,x)
\geq
U(t,\xi;\hat\xi,\mu,x)
\qquad
\text{for all }\xi,\hat\xi\in\Xi,
\]
where the deviation payoff on the right-hand side allows optimal subsequent action deviations after report $\hat\xi$.
\end{enumerate}

Moreover, the principal's continuation problem remains recursive in the public state $(\mu,x)$. With the report-contingent action profile entering the composite-state posterior generator, the reduced implementable disclosure characterization of Subsection~\ref{subsec:optimal_disclosure_general} extends directly to this environment.

\end{proposition}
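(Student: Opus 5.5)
The proof will reduce the composite-state problem, type by type, to the benchmark recursive characterization of Proposition~\ref{prop:general_characterization}. The truthful-reporting requirement is then added as a separate comparison across report-indexed continuation problems. First fix a direct report-contingent mechanism $(T,\pi)$ and a target profile $a(t,\xi,\mu,x)$. For a given true signal $\xi$ and a given report $\hat\xi$, the agent faces a single-agent controlled problem:
\begin{enumerate}
\item[] the public state is $(\mu,x)$;
\item[] the controls are actions;
\item[] beliefs over $z=(\omega,\theta)$ are taken from the perspective $\nu^\xi$.
\end{enumerate}
Because $\omega$ is unobserved and enters only through expected payoffs, the private signal enters only through the fixed index $\xi$. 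The agent's expected flow and terminal payoffs are then the integrated objects $\bar u$ and $\bar U_{\bar T}$, by iterated expectations over $\nu^\xi$. The relevant posterior dynamics are those generated by $\mathcal L_z^{\pi,a;\xi}$. Under the assumed controlled-Markov property of $\mu_t$ and Assumptions~\ref{ass:regularity_primitives}--\ref{ass:markov_state_representation}, each such indexed problem satisfies the dynamic programming principle with state $(\mu,x)$. Here $\xi$ is a parameter, since it is time-invariant private information that no one updates.

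\textbf{Step 1 (necessity).} Suppose the profile is implementable, meaning truthful reporting followed by the target actions is optimal for every $\xi$. Define $U(t,\xi,\mu,x)$ as the truthful-path continuation value. Along the truthful report, the post-report problem is exactly the benchmark problem with payoffs $\bar u,\bar U_{\bar T}$ and generator $\mathcal L_z^{\pi,\cdot;\xi}$. Applying Proposition~\ref{prop:general_characterization} (equivalently Lemma~\ref{lem:agent_local_ic}) index by index gives (i) and (ii). Condition (iii) is then a restatement of optimality of truthful reporting against every joint deviation: reporting $\hat\xi$ and then acting optimally yields at most $U(t,\xi;\hat\xi,\mu,x)$, and implementability requires this to be no more than the truthful value.

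\textbf{Step 2 (sufficiency).} Suppose a family $U(t,\xi,\cdot)$ satisfies (i)--(iii). Fix $\xi$. By (i)--(ii) and the sufficiency direction of Proposition~\ref{prop:general_characterization}, a verification argument shows the following. Apply It\^o/Dynkin to $U$ along any post-report action path; the generator identity together with the pointwise maximization in (ii) bounds deviation payoffs by $U$. This gives two facts: $U(t,\xi,\mu,x)$ is the agent's optimal continuation value after a truthful report, and the target action is optimal. By (iii), no report $\hat\xi\neq\xi$ combined with an optimal subsequent action strategy does better. A joint deviation is a report followed by some action path, and its payoff is bounded by the supremum defining $U(t,\xi;\hat\xi,\mu,x)$. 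Hence truthful reporting with the target actions is globally optimal.

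\textbf{Step 3 (principal recursion).} Substituting the constraints (i)--(iii) into the principal's objective, as in Proposition~\ref{prop:principal_reduction}, gives the principal's recursive problem. Her value is an expectation over reports weighted by the public posterior's marginal on $\xi$. All rules depend only on $(\hat\xi,\mu,x)$, and $(\mu,x)$ is a controlled Markov public state. Therefore the dynamic programming principle applies to the principal in the state $(\mu,x)$. The reduced Hamiltonian of Subsection~\ref{subsec:optimal_disclosure_general} is then redefined with $\Gamma^I$ also requiring the truthful-reporting inequalities. Given this redefinition, Theorem~\ref{theorem:general_disclosure}'s pointwise argmax argument goes through verbatim.

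\textbf{Main obstacle.} The main obstacle is Step 2's handling of (iii) after a misreport. Following $\hat\xi$, the public posterior is updated as if the type were $\hat\xi$, while the agent evaluates outcomes using $\nu^\xi$. The deviation value is therefore an off-path problem, not governed by the truthful HJB. I would define $U(t,\xi;\hat\xi,\cdot)$ directly as the value of that indexed control problem. Its well-posedness would be justified by the same regularity assumptions applied to generator $\mathcal L_z^{\pi(\hat\xi),\cdot;\xi}$, so that (iii) compares two well-defined values rather than requiring a local envelope representation.
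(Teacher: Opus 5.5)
Your proposal is correct and follows essentially the same route as the paper: index the agent's problem by $\xi$, obtain (i)--(ii) from the dynamic programming principle along the truthful branch, read (iii) as the comparison with the supremum over post-misreport action strategies, prove the converse by combining truthful-branch verification with (iii), and use the public-state Markov property for the principal's recursion. The differences are only in emphasis. You make the verification step and the off-path value problem explicit, and you fold the reporting inequalities into $\Gamma^I$. The paper instead stresses that $\xi$ enters the agent's belief only once, through $\nu^\xi$ formed against the pre-report public posterior, so the report moves only the public continuation branch; this is the implicit requirement behind your iterated-expectations identification of $\bar u$, since otherwise $\xi$ would be double-counted.
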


\begin{proof}
See Appendix~\ref{app:proof_composite_states_recursive}.
\end{proof}

Proposition~\ref{prop:composite_states_recursive} enlarges the public informational state to a posterior over $(\omega,\theta)$ and indexes the agent's continuation problem by private information. The additional restriction is truthful reporting against joint report-and-action deviations. Actions continue to determine information production and feasibility, disclosure governs public posterior evolution, and continuation values support implementation.

\subsection{Limited Commitment}
\label{subsec:limited_commitment}

We next consider limited commitment. In the benchmark model, the principal can commit to future disclosure and transfer rules. Here, the agent may terminate the relationship whenever continuation utility falls below an outside option. Continuation values therefore serve not only as incentive devices but also as participation devices sustaining the informational relationship over time; see, for example, \citet{ThomasWorrall1988,Kocherlakota1996,AlbuquerqueHopenhayn2004}.

The agent's continuation value must satisfy the dynamic participation constraint
\begin{equation}
\label{eq:PC_LC}
U(t,\mu_t,x_t)
\geq
\bar U(t,\mu_t,x_t)
\qquad
\text{for all }t\in[0,\bar T],
\end{equation}
where $\bar U(t,\mu,x)$ is the agent's outside option and may depend on both informational and feasibility states. Thus participation must be satisfied pointwise along the continuation path, rather than only at the initial date.

The recursive implementability characterization in Sections~4--5.1 remains valid, but the feasible continuation plans are further restricted by \eqref{eq:PC_LC}. Under full commitment, continuation values can be adjusted subject to incentive compatibility and the initial participation requirement. Under limited commitment, every adjustment must also preserve participation along the continuation path. Posterior fluctuations may therefore require additional transfers or continuation promises whenever the dynamic participation constraint binds.

The effect on disclosure follows from the marginal-value decomposition in Subsection~\ref{subsec:marginal_value_information} and the cutoff characterization in Theorem~\ref{thm:two_phase_disclosure}. Limited commitment weakly delays disclosure when it lowers the direct marginal benefit of precision, raises its implementation cost, or both, thereby lowering the net marginal value of precision along the relevant optimal path.

\begin{proposition}[Limited Commitment and Disclosure Timing]
\label{prop:limited_commitment}

Consider limited-commitment and full-commitment regimes with the same primitive payoffs, information-production technology, feasibility dynamics, and initial state. For each regime $r\in\{LC,FC\}$, let $(\rho_t^r,x_t^r)$ denote the relevant optimal implementable continuation path and define
\[
M^r(t)
:=
J_\rho^r(t,\rho_t^r,x_t^r)
=
B_\rho^r(t)
-
K_\rho^r(t).
\]

Suppose Assumptions~\ref{ass:regularity_primitives}--\ref{ass:quasilinear} hold. Suppose further that:
\begin{enumerate}
\item[(i)] in both regimes, the feasible disclosure set is $\mathcal D=[0,1]$, the reduced implementable disclosure Hamiltonian satisfies the affine representation \eqref{eq:affine_representation}, the common information-production technology satisfies the linear--Gaussian specification, and the compensated-envelope identity \eqref{eq:Delta_gaussian_revised} holds along the relevant optimal implementable continuation path;

\item[(ii)] along the relevant optimal continuation paths,
\begin{equation}
\label{eq:limited_commitment_component_ordering}
B_\rho^{LC}(t)
\leq
B_\rho^{FC}(t),
\qquad
K_\rho^{LC}(t)
\geq
K_\rho^{FC}(t)
\qquad
\text{for all }t\in[0,\bar T].
\end{equation}
\end{enumerate}

Then
\begin{equation}
\label{eq:limited_commitment_marginal_value_ordering}
M^{LC}(t)
\leq
M^{FC}(t)
\qquad
\text{for all }t\in[0,\bar T].
\end{equation}

If both marginal-value paths are continuous, strictly increasing, and admit unique interior disclosure cutoffs satisfying $M^r(t_r^*)=0$, then $t_{LC}^*\geq t_{FC}^*$. Hence limited commitment weakly delays disclosure under the stated component-ordering conditions.

\end{proposition}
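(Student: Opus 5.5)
The plan is to reduce the statement to a pointwise comparison followed by a monotone-crossing argument, mirroring the structure of Proposition~\ref{prop:risk_aversion}. The argument has three steps.

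\textbf{Step 1: identify the net marginal value in each regime.} I first work within each regime $r\in\{LC,FC\}$ separately. Assumptions~\ref{ass:regularity_primitives}--\ref{ass:quasilinear} hold in both regimes, and under limited commitment the recursive characterization of Proposition~\ref{prop:general_characterization} and the principal's reduction in Proposition~\ref{prop:principal_reduction} remain valid; only the participation requirement is replaced by \eqref{eq:PC_LC}. Hence Proposition~\ref{prop:gaussian_decomposition} applies along each optimal path and gives
\[
J_\rho^r(t,\rho_t^r,x_t^r)=B_\rho^r(t)-K_\rho^r(t).
\]
This justifies the definition of $M^r$ used in the statement.

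\textbf{Step 2: pointwise ordering.} Subtracting the two inequalities in \eqref{eq:limited_commitment_component_ordering} gives, for every $t\in[0,\bar T]$,
\[
M^{LC}(t)-M^{FC}(t)=\bigl(B_\rho^{LC}(t)-B_\rho^{FC}(t)\bigr)-\bigl(K_\rho^{LC}(t)-K_\rho^{FC}(t)\bigr)\leq 0 .
\]
This is exactly \eqref{eq:limited_commitment_marginal_value_ordering}.

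\textbf{Step 3: ordering of the cutoffs.} Let $t_{FC}^*$ and $t_{LC}^*$ be the unique interior cutoffs. Since $M^{FC}$ is strictly increasing and vanishes at $t_{FC}^*$, we have $M^{FC}(t)<0$ for every $t<t_{FC}^*$. Combining this with Step 2 gives $M^{LC}(t)\leq M^{FC}(t)<0$ on $[0,t_{FC}^*)$, so $M^{LC}$ has no zero there. Because $M^{LC}$ is continuous and has a unique zero $t_{LC}^*$, that zero must satisfy $t_{LC}^*\geq t_{FC}^*$.

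\textbf{Step 4: interpret as disclosure timing.} Under condition~(i) of the statement, Corollary~\ref{cor:bang_bang_gaussian_revised} applies in each regime: disclosure is zero where $M^r<0$ and full where $M^r>0$. By the selection of Theorem~\ref{thm:two_phase_disclosure}, the opacity phase in regime $r$ is $[0,t_r^*)$. Since $t_{LC}^*\geq t_{FC}^*$, the opacity phase under limited commitment weakly contains the opacity phase under full commitment, so limited commitment weakly delays disclosure.

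The only delicate point is Step 1, namely confirming that the compensated decomposition still holds when the dynamic participation constraint binds. I would not re-derive this. Condition~(i) already imposes the compensated-envelope identity \eqref{eq:Delta_gaussian_revised} along each regime's path, and the ordering \eqref{eq:limited_commitment_component_ordering} is stated directly for the components $B_\rho^r$ and $K_\rho^r$. Everything after Step 1 is elementary.
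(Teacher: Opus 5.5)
Your proposal is correct and follows essentially the same route as the paper: the decomposition $M^r=B_\rho^r-K_\rho^r$, subtraction of the component orderings pointwise, and a monotone-crossing step. The crossing step differs only trivially: you use strict monotonicity of $M^{FC}$ to show $M^{LC}<0$ on $[0,t_{FC}^*)$, while the paper notes $M^{LC}(t_{FC}^*)\leq 0$ and uses the unique crossing of $M^{LC}$. Your Step~4 goes beyond the paper's proof by invoking Corollary~\ref{cor:bang_bang_gaussian_revised}, whose hypothesis $a_t>0$ a.e.\ is not assumed here; the selection still holds, because at dates with $a_t=0$ every disclosure intensity is optimal.
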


\begin{proof}
See Appendix~\ref{app:proof_limited_commitment}.
\end{proof}

Proposition~\ref{prop:limited_commitment} is a sufficient-condition result. The component ordering in condition~(ii) determines the comparison of disclosure timing. If one additionally requires that the dynamic participation constraint \eqref{eq:PC_LC} bind on a set of positive measure along the limited-commitment path, then limited commitment is economically active rather than merely a nonbinding additional constraint. When posterior perturbations tighten continuation participation constraints, the implementation cost of precision rises because informational fluctuations must be accommodated while preserving both incentive compatibility and continued participation. If the direct informational benefit does not offset this additional cost, the net marginal value of precision is lower, its zero occurs later, and the initial opacity phase is weakly prolonged.

\section{Conclusion}

This paper studies economic systems as endogenous informational adjustment systems. The central idea is that an economy does not merely allocate resources within a given informational environment. Economic activity may produce information; institutional rules determine how that information is disclosed and used; learning changes posterior beliefs; beliefs affect incentives and continuation decisions; and the resulting actions alter both future information and future feasibility. Resource allocation and informational evolution are therefore jointly determined. Building on Hurwicz's informational perspective, the paper moves from allocation and implementation within a specified informational environment to the endogenous evolution of the informational environment itself. The corresponding methodological implication is a system-level one: partial endogenization is sufficient when omitted interactions are economically inessential, but when information production, disclosure, learning, incentives, actions, and feasibility feed back on one another, those interactions must be analyzed jointly.

The main results show that this change in the level of analysis has substantive consequences. First, informational efficiency under exogenous feasibility is generally incompatible with endogenous dynamic feasibility when information-producing activity consumes scarce future opportunities. Second, incentive and disclosure design cannot generally be separated because disclosure changes the continuation return to information-producing action. Third, a more informative information-production technology need not increase maximal implementable value when institutional restrictions prevent replication of outcomes available under a less informative technology. Each negative result also identifies a positive boundary: the paper characterizes when the exogenous-feasibility informational frontier remains attainable, when sequential incentive--disclosure design is valid, and when implementation replicability restores informational monotonicity.

The disclosure analysis provides a complementary positive implication. Optimal disclosure is determined not by the direct value of information alone, but by its contribution to the optimized implementable continuation value. Under general disclosure technologies, disclosure may therefore be interior or extreme; under the additional conditions stated in the paper, the optimal path can take a bang--bang form and, more specifically, exhibit an endogenous transition from opacity to transparency. Information can consequently be worth acquiring and useful for decision-making while still being optimally withheld when the implementation cost of making it public is sufficiently large. The delegated-experimentation application makes this mechanism explicit: experimentation generates information but also consumes scarce capacity, so the optimal policy may preserve capacity and remain inactive early before switching to intensive experimentation and disclosure as the terminal decision approaches.

The extensions show that the recursive informational structure survives when private payoff-relevant information coexists with informational-environment uncertainty and when commitment is limited. These extensions also reinforce the broader point of the paper. The relevant distinction is not between models with ``exogenous information'' and models with ``endogenous information'' in the abstract. Different literatures endogenize different informational margins for legitimate analytical reasons. The system-level question is instead whether the margins held fixed interact with the endogenous ones in ways that alter feasibility, implementation, or welfare. When they do, results obtained under partial endogenization may cease to hold once the feedback structure is restored.

This perspective opens several questions that seem particularly important for a more general theory of endogenous informational economic systems. One concerns the \emph{boundary of partial analysis}: when can one endogenize information production, disclosure, incentives, or learning separately without changing the conclusions of the full system, and what conditions characterize valid decomposition? The separation and replicability results in this paper provide first examples of such boundary conditions, but a general theory of when system-level interactions can be ignored remains to be developed.

A second open question concerns \emph{implementation when the informational environment is itself endogenous}. Standard implementation theory asks whether desired outcomes can arise under a specified informational and strategic environment. Once information production and informational evolution become endogenous, the object to be implemented is richer: actions, beliefs, disclosure, feasibility, and continuation incentives must be jointly generated by the mechanism. This raises natural questions about the appropriate concepts of informational implementability, full implementation, robustness, and informational efficiency when the informational state is itself part of the equilibrium outcome.

A third set of questions arises once information is produced by multiple strategic agents. With dispersed private information, decentralized experimentation, communication, and informational externalities, one agent's action may affect not only her own information and incentives but also the informational opportunities available to others. Competition over scarce informational opportunities, free riding in information production, strategic concealment, and the design of decentralized communication then become parts of the same system. Such environments would connect endogenous informational evolution more directly to organizations, teams, networks, markets, and multi-agent mechanism design.

A fourth direction concerns the interaction between informational evolution and institutional evolution. The present analysis takes the basic mechanism-design environment as given while allowing information, incentives, and feasibility to evolve endogenously within it. In many economic systems, however, the rules governing disclosure, communication, participation, commitment, and enforcement may themselves change in response to accumulated information and past performance. Allowing mechanisms and institutional constraints to evolve jointly with beliefs and actions would move the analysis from endogenous informational evolution toward a broader theory of endogenous institutional and informational adjustment.

Finally, there is a large class of technical and applied extensions. These include genuinely hidden information-producing actions and noisy performance signals, multidimensional actions and feasibility states, stochastic resource dynamics, richer information-production technologies, heterogeneous agents, and alternative commitment structures. Applications to innovation, organizational learning, regulation, finance, digital platforms, and other environments in which economic activity simultaneously creates information and changes future opportunities may help identify which feedbacks are quantitatively important.

These questions suggest a broader research agenda. Modern economic theory has made major progress by endogenizing individual components of economic systems while holding others fixed. The next step in environments where those components interact is not simply to combine more ingredients in a larger model. It is to identify the minimal feedback structure that changes economic conclusions, to characterize when partial analysis remains valid, and to determine how information, incentives, behavior, feasibility, and institutions jointly generate the evolution of an economic system.

\appendix

\section{Proofs}
\label{app:proofs}

\subsection{Proof of Lemma~\ref{lem:agent_local_ic}}
\label{app:proof_agent_local_ic}

Fix a mechanism $\mathcal M=(\pi_t,T_t)_{t\in[0,\bar T]}$, let $(\hat a_t)_{t\in[0,\bar T]}$ denote the target action process, and suppose the induced continuation plan is implementable. Under Assumptions~\ref{ass:regularity_primitives} and \ref{ass:markov_state_representation}, the agent's continuation problem admits a Markov continuation-value representation satisfying the dynamic programming principle. Hence
\begin{equation}
\label{eq:proof_hjb}
-\partial_t U(t,\mu,x)
=
\sup_{a\in\mathcal A}
\left\{
u(T(t,\mu,x),\mu,x)
-
c(a)
+
\mathcal L^{\pi,a}U(t,\mu,x)
+
U_x(t,\mu,x)G(x,a)
\right\}.
\end{equation}

Along an implementable continuation path, the target action must attain the pointwise supremum in \eqref{eq:proof_hjb}. Since $u(T(t,\mu,x),\mu,x)$ does not depend directly on the current action, the relevant pointwise problem is to maximize
\[
-c(a)
+
\mathcal L^{\pi,a}U(t,\mu_t,x_t)
+
U_x(t,\mu_t,x_t)G(x_t,a)
\]
over $a\in\mathcal A$. Consequently, $(\hat a_t)$ must satisfy \eqref{eq:local_ic_general} for almost every $t$.

If the maximizer is interior and the action-relevant continuation-value Hamiltonian is differentiable with respect to $a$, the standard necessary first-order condition gives \eqref{eq:agent_foc_general}.

Finally, suppose the derivative of the action-relevant continuation-value Hamiltonian satisfies the stated single-crossing property around a candidate action $\hat a$, being nonnegative for $a<\hat a$ and nonpositive for $a>\hat a$. The Hamiltonian is then weakly increasing up to $\hat a$ and weakly decreasing thereafter, so $\hat a$ is a global maximizer and satisfies \eqref{eq:local_ic_general}. The corresponding one-sided inequalities give the same conclusion when the maximizer lies on the boundary of $\mathcal A$.

\qed

\subsection{Proof of Proposition~\ref{prop:general_characterization}}
\label{app:proof_general_characterization}

Suppose first that the action rule is implementable under the mechanism rules. Under Assumptions~\ref{ass:regularity_primitives} and \ref{ass:markov_state_representation}, the agent's continuation problem satisfies the dynamic programming principle and hence the HJB equation \eqref{eq:agent_hjb_general} with terminal condition \eqref{eq:agent_terminal_general}. By Lemma~\ref{lem:agent_local_ic}, the target action satisfies the pointwise incentive condition \eqref{eq:prop_agent_lic}. Substituting the target action into the HJB equation yields \eqref{eq:prop_agent_hjb}, while the informational and feasibility states follow the induced state dynamics.

Conversely, suppose there exists a continuation-value function satisfying \eqref{eq:prop_agent_hjb}, \eqref{eq:prop_agent_terminal}, and \eqref{eq:prop_agent_lic}, together with the induced state dynamics. By \eqref{eq:prop_agent_lic}, the target action attains the pointwise supremum in the agent's HJB equation. Hence, for any admissible action deviation, the HJB inequality together with the change-of-variables formula and terminal condition implies that the resulting expected payoff cannot exceed $U(t,\mu,x)$. Under the target action, the HJB inequality holds with equality, so the target attains $U(t,\mu,x)$. Therefore no admissible deviation is profitable, and the action rule is implementable under the mechanism rules.

\qed

\subsection{Proof of Proposition~\ref{prop:principal_reduction}}
\label{app:proof_principal_reduction}

By Proposition~\ref{prop:general_characterization}, an action rule is implementable under mechanism rules if and only if there exists a continuation-value function $U$ satisfying the recursive continuation-value equation, the pointwise incentive condition, and the induced state dynamics. The agent's global action incentive constraint can therefore be replaced by these recursive conditions without changing the set of implementable continuation plans.

Substituting this characterization into the principal's objective gives the recursive maximization problem in Proposition~\ref{prop:principal_reduction}. Conversely, every feasible solution to that problem satisfies the applicable participation requirement, the continuation-value equation, the pointwise incentive condition, and the induced state dynamics. Proposition~\ref{prop:general_characterization} therefore implies that it corresponds to an implementable continuation plan in the original contracting problem.

The original problem and the recursive formulation consequently have the same feasible implementable continuation plans and assign the same payoff to each continuation plan. Their optimized values coincide, and this common value is $J(t,\mu,x)$.

\qed

\subsection{Proof of Theorem~\ref{thm:informational_scarcity_impossibility}}
\label{app:proof_informational_scarcity_impossibility}

The theorem rules out a universal guarantee over the class of informational environments considered in the paper. It is therefore sufficient to identify one admissible environment in which informational efficiency under exogenous feasibility is incompatible with endogenous dynamic feasibility.

Consider the linear--Gaussian environment
\[
dS_s=a_s\theta\,ds+\sigma\,dW_s,
\qquad
\dot\rho_s=\delta_s\chi a_s^2,
\qquad
\dot x_s=-a_s,
\]
where $a_s\in[0,\bar a]$, $\delta_s\in[0,1]$, and $\chi:=\sigma^{-2}$. Let the terminal informational criterion be $\Phi(\mu_{\bar T})=\rho_{\bar T}$, and fix any continuation state $(t,\rho,x)$ satisfying $0<x<\bar a(\bar T-t)$.

In the exogenous-feasibility benchmark, full disclosure and maximal action throughout the remaining horizon attain
\[
\mathcal I^{\mathrm{exo}}(t,\rho_t,x_t)
=
\rho_t+\chi\bar a^2(\bar T-t).
\]
Thus $\rho_{\bar T}^{\,\mathrm{exo}}-\rho_t=\chi\bar a^2(\bar T-t)$.

Under any path satisfying endogenous dynamic feasibility, $a_s^2\leq\bar a a_s$. Hence
\[
\begin{aligned}
\rho_{\bar T}-\rho_t
&=
\chi\int_t^{\bar T}\delta_s a_s^2\,ds
\leq
\chi\bar a\int_t^{\bar T}a_s\,ds
\\
&=
\chi\bar a\bigl(x_t-x_{\bar T}\bigr)
\leq
\chi\bar a x_t.
\end{aligned}
\]
Because $x_t<\bar a(\bar T-t)$, it follows that
\[
\rho_{\bar T}-\rho_t
<
\chi\bar a^2(\bar T-t)
=
\rho_{\bar T}^{\,\mathrm{exo}}-\rho_t.
\]
Therefore,
\[
\mathcal I^{\mathrm{endo}}(t,\rho_t,x_t)
<
\mathcal I^{\mathrm{exo}}(t,\rho_t,x_t).
\]

No mechanism satisfying endogenous dynamic feasibility can attain the exogenous-feasibility informational frontier in this environment. Hence no dynamic mechanism can guarantee both properties throughout the class.

\qed

\subsection{Proof of Lemma~\ref{lem:general_informational_feasibility}}
\label{app:proof_general_informational_feasibility}

By definition of $\Lambda$, $h(a)\leq\Lambda g(a)$ for every $a>0$; the same inequality holds at $a=0$ because $h(0)=g(0)=0$. Since $\delta_s\in[0,1]$,
\[
z_{\bar T}-z_t
=
\int_t^{\bar T}\delta_s h(a_s)\,ds
\leq
\Lambda\int_t^{\bar T}g(a_s)\,ds
=
\Lambda\bigl(x_t-x_{\bar T}\bigr),
\]
where the equality follows from $\dot x_s=-g(a_s)$. Similarly, $h(a)\leq\bar h$ implies $z_{\bar T}-z_t\leq\bar h(\bar T-t)$. Combining these inequalities and using $x_{\bar T}\geq0$ gives
\[
z_{\bar T}-z_t
\leq
\min
\left\{
\Lambda\bigl(x_t-x_{\bar T}\bigr),
\bar h(\bar T-t)
\right\}
\leq
\min
\left\{
\Lambda x_t,
\bar h(\bar T-t)
\right\}.
\]

\qed

\subsection{Proof of Corollary~\ref{cor:gaussian_informational_frontier}}
\label{app:proof_gaussian_informational_feasibility}

For every technologically feasible path,
\[
\rho_{\bar T}-\rho
=
\chi\int_t^{\bar T}\delta_s a_s^2\,ds
\leq
\chi\bar a\int_t^{\bar T}a_s\,ds
=
\chi\bar a\bigl(x-x_{\bar T}\bigr)
\leq
\chi\bar a x.
\]
Also, $\rho_{\bar T}-\rho\leq\chi\bar a^2(\bar T-t)$. Consequently,
\[
\rho_{\bar T}-\rho
\leq
\chi\bar a
\min
\left\{
x,
\bar a(\bar T-t)
\right\}.
\]

Conversely, let $\widehat\rho\geq\rho$ satisfy \eqref{eq:gaussian_target_feasibility} and define $\tau:=(\widehat\rho-\rho)/(\chi\bar a^2)$. The target condition implies $\tau\leq\bar T-t$ and $\bar a\tau\leq x$.

Set $\delta_s=1$ and $a_s=\bar a$ on any interval of length $\tau$, and set $a_s=0$ otherwise. Total feasibility depletion is $\bar a\tau\leq x$, so $x_s\geq0$ throughout. Moreover, $\rho_{\bar T}=\rho+\chi\bar a^2\tau=\widehat\rho$.

Thus every precision level in \eqref{eq:gaussian_attainable_precision_set}, and no level outside it, is technologically attainable. If the constructed path is also implementable and individually rational, the target is implementable.

\qed

\subsection{Proof of Theorem~\ref{thm:design_nonseparability}}
\label{app:proof_design_nonseparability}

It is sufficient to identify one admissible environment in which the two properties in Theorem~\ref{thm:design_nonseparability} cannot both hold.

Consider a unit-horizon linear--Gaussian environment with constant feasibility:
\[
dS_s=a_s\theta\,ds+\sigma\,dW_s,
\qquad
\dot\rho_s=\frac{\delta_s a_s^2}{\sigma^2},
\qquad
a_s\in[0,\bar a],
\qquad
\delta_s\in\{0,1\}.
\]
The agent's payoff is $\mathbb E[T]+\eta\rho_1-\int_0^1(\kappa/2)a_s^2\,ds$, where $\eta,\kappa>0$ and $T$ denotes any admissible public-history-contingent transfer, while the principal's payoff is $M\rho_1-\mathbb E[T]$, where $M>0$.

Under nondisclosure, the public signal and posterior process are independent of action. Hence the distribution of any admissible transfer is also independent of action, so the agent uniquely chooses $a_s=0$ almost everywhere. Under full disclosure, a constant transfer is an admissible special case, and the action-dependent payoff is $\int_0^1(\eta/\sigma^2-\kappa/2)a_s^2\,ds$. If $\eta/\sigma^2>\kappa/2$, the agent uniquely chooses $a_s=\bar a$ almost everywhere. Hence an action and incentive plan that implements the action under one disclosure rule cannot be retained under the other.

This failure is also consequential for optimality. For a constant implemented action $a$ and disclosure intensity $\delta$, a constant transfer satisfying binding participation is
$T=\bar U-\eta(\rho_0+\delta a^2/\sigma^2)+(\kappa/2)a^2$, so the principal's payoff is
$(M+\eta)(\rho_0+\delta a^2/\sigma^2)-(\kappa/2)a^2-\bar U$.
Therefore full disclosure and maximal action improve the principal's payoff relative to nondisclosure and zero action by
\[
\left[
\frac{M+\eta}{\sigma^2}
-
\frac{\kappa}{2}
\right]\bar a^2>0,
\]
where the inequality follows from $M>0$ and $\eta/\sigma^2>\kappa/2$.

Thus, in this admissible environment, an action and incentive-continuation plan specified independently of disclosure cannot remain implementable as disclosure varies while also attaining the value of the joint incentive-and-disclosure problem. Hence no sequential dynamic design procedure can guarantee both properties throughout the class.

\qed

\subsection{Proof of Proposition~\ref{prop:separation_criterion}}
\label{app:proof_separation_criterion}

For part~(i), a prescribed action $a^I$ satisfies the pointwise incentive condition under disclosure rule $\pi$ with continuation plan $U$ if and only if $a^I \in \mathcal A^I(t,\mu,x;U,\pi)$. Hence the local separation property between $\pi$ and $\pi'$ holds for $a^I$ if and only if
\[
a^I
\in
\mathcal A^I(t,\mu,x;U,\pi)
\cap
\mathcal A^I(t,\mu,x;U,\pi'),
\]
which establishes \eqref{eq:separation_invariance}.

For part~(ii), suppose
\[
\mathcal A^I(t,\mu,x;U,\pi)
=
\mathcal A^I(t,\mu,x;U,\pi')
\]
for every relevant pair of disclosure rules. Every action satisfying the pointwise incentive condition under one admissible rule then satisfies it under every other rule with the same continuation plan. Hence the pointwise incentive condition is invariant to disclosure.

For part~(iii), suppose that, at every reachable continuation state, the retained incentive-continuation plan remains feasible and implementable as disclosure varies. Suppose also that the principal's feasible set and continuation objective are separable between that plan and the disclosure rule. The subsequent disclosure choice then neither removes the retained plan from the feasible set nor changes its incentive constraints. The joint problem can therefore be written as an iterated maximization over the same feasible set, so the incentive-continuation plan may be selected first and disclosure optimized afterward without loss.

For part~(iv), if
\[
\mathcal A^I(t,\mu,x;U,\pi)
\cap
\mathcal A^I(t,\mu,x;U,\pi')
=
\varnothing,
\]
then no action satisfies the pointwise incentive condition under both disclosure rules with the same continuation plan. The local separation property therefore fails for every prescribed action at that state.

\qed

\subsection{Proof of Corollary~\ref{cor:gaussian_nonseparability}}
\label{app:proof_gaussian_nonseparability}

Under disclosure intensity $\delta$, the positive interior action $a$ satisfies the pointwise incentive condition and therefore the necessary first-order condition
\[
c'(a)
-
U_x(t,\rho,x)G_a(x,a)
=
\frac{2\delta a}{\sigma^2}
U_\rho(t,\rho,x).
\]

If the same action also satisfies the pointwise incentive condition under disclosure intensity $\delta'$, then, because it is interior, it necessarily satisfies
\[
c'(a)
-
U_x(t,\rho,x)G_a(x,a)
=
\frac{2\delta' a}{\sigma^2}
U_\rho(t,\rho,x).
\]
Subtracting the two conditions yields
\[
(\delta-\delta')aU_\rho(t,\rho,x)=0,
\]
which proves necessity.

Conversely, suppose
\[
(\delta-\delta')aU_\rho(t,\rho,x)=0.
\]
Then the first-order condition under $\delta'$ coincides with the first-order condition already satisfied under $\delta$. By the additional shape condition imposed under $\delta'$, this first-order condition is sufficient for global pointwise optimality. Hence the same action $a$ satisfies the pointwise incentive condition under disclosure intensity $\delta'$. This proves sufficiency and establishes \eqref{eq:gaussian_separation_condition}.

\qed

\subsection{Proof of Theorem~\ref{thm:general_decomposition}}
\label{app:proof_general_decomposition}

Fix an optimal implementable continuation plan and consider an admissible compensated informational perturbation around it. Under the envelope condition stated in Subsection~\ref{subsec:marginal_value_information}, the first-order effect of further reoptimization around the optimal unperturbed continuation plan vanishes, so the derivative of the optimized continuation value is obtained from the direct informational effect and the implementability-preserving adjustments induced by the perturbation.

Under Assumption~\ref{ass:quasilinear}, the recursive evaluation equation is
\[
-\partial_t J(t,\mu,x)
=
v(a_t,\mu,x)-T_t
+
\mathcal L^{\pi,a_t}J(t,\mu,x)
+
J_x(t,\mu,x)G(x,a_t),
\]
with terminal condition
\[
J(\bar T,\mu,x)
=
v_{\bar T}(\mu,x)-T_{\bar T}(\mu,x).
\]

Since $\mathcal L^{\pi,a}$ governs the belief component and $\dot x_t=G(x_t,a_t)$, the change-of-variables formula gives
\[
\begin{aligned}
dJ(t,\mu_t,x_t)
=
\Bigl[
&\partial_t J(t,\mu_t,x_t)
+
\mathcal L^{\pi,a_t}J(t,\mu_t,x_t)
\\
&+
J_x(t,\mu_t,x_t)G(x_t,a_t)
\Bigr]dt
+
dM_t,
\end{aligned}
\]
where $(M_t)$ is a local martingale. The maintained integrability conditions ensure that the relevant stopped martingale terms are uniformly integrable, so their conditional expectations vanish. Substitution of the recursive evaluation equation yields
\[
dJ(t,\mu_t,x_t)
=
-\bigl(v(a_t,\mu_t,x_t)-T_t\bigr)dt
+
dM_t.
\]
Integrating from $t$ to $\bar T$, applying the terminal condition, and taking conditional expectations gives
\[
J(t,\mu_t,x_t)
=
\mathbb E_t
\left[
\int_t^{\bar T}
\bigl(v(a_s,\mu_s,x_s)-T_s\bigr)\,ds
+
v_{\bar T}(\mu_{\bar T},x_{\bar T})
-
T_{\bar T}
\right].
\]

Now consider the admissible differentiable compensated posterior perturbation specified in Subsection~\ref{subsec:marginal_value_information}. The implemented action, disclosure, and induced feasibility paths are held fixed, while transfers and continuation promises adjust to preserve incentive compatibility and participation. The maintained regularity permits differentiation under the expectation, giving
\[
\begin{aligned}
D_\mu J(t,\mu_t,x_t)
={}&
\left.
\frac{d}{d\varepsilon}
\mathbb E_t
\left[
\int_t^{\bar T}
v(a_s,\mu_s^\varepsilon,x_s)\,ds
+
v_{\bar T}(\mu_{\bar T}^\varepsilon,x_{\bar T})
\right]
\right|_{\varepsilon=0}
\\
&-
\left.
\frac{d}{d\varepsilon}
\mathbb E_t
\left[
\int_t^{\bar T}
T_s^\varepsilon\,ds
+
T_{\bar T}^\varepsilon
\right]
\right|_{\varepsilon=0}.
\end{aligned}
\]
The first derivative is $B_\mu(t,\mu_t,x_t)$ and the second is $K_\mu(t,\mu_t,x_t)$. Hence
\[
D_\mu J(t,\mu_t,x_t)
=
B_\mu(t,\mu_t,x_t)
-
K_\mu(t,\mu_t,x_t),
\]
which establishes \eqref{eq:general_value_decomposition}. The sign equivalence in \eqref{eq:general_value_boundary} follows immediately.

\qed

\subsection{Proof of Proposition~\ref{prop:gaussian_decomposition}}
\label{app:proof_gaussian_decomposition}

Under the linear--Gaussian specification, posterior beliefs remain Gaussian. When continuation payoffs depend on beliefs through posterior precision, the informational state is summarized by $\rho_t$, and the principal's optimized continuation value can be written as $J(t,\rho,x)$.

An admissible compensated posterior perturbation can therefore be represented by a differentiable family $(\rho_s^\varepsilon)_{s\in[t,\bar T]}$ satisfying $\rho_s^0=\rho_s$ and normalized so that $\left.d\rho_t^\varepsilon/d\varepsilon\right|_{\varepsilon=0}=1$. Applying Theorem~\ref{thm:general_decomposition} to this reduced belief-state representation and using the normalization preceding \eqref{eq:J_rho_def} gives
\[
J_\rho(t,\rho_t,x_t)
=
B_\rho(t,\rho_t,x_t)
-
K_\rho(t,\rho_t,x_t),
\]
where $B_\rho$ and $K_\rho$ are the direct-benefit and implementation-cost derivatives generated by the compensated precision perturbation, as defined in Proposition~\ref{prop:gaussian_decomposition}.

\qed

\subsection{Proof of Theorem~\ref{thm:informational_monotonicity_impossibility}}
\label{app:proof_informational_monotonicity_impossibility}

As in the preceding impossibility results, it is sufficient to identify one admissible environment in which informational monotonicity fails. Consider a unit-horizon linear--Gaussian environment with constant action:
\[
dS_s
=
a\theta\,ds+\sigma\,dW_s,
\qquad
a\in[0,\bar a],
\qquad
s\in[0,1],
\qquad
\chi:=\sigma^{-2}.
\]

Disclosure is restricted to $\delta\in[\underline\delta,1]$, where $\underline\delta>0$. The effective public signal satisfies $dY_s
= \sqrt{\delta}\,a\theta\,ds+\sigma\,dW_s^Y,$
and the posterior-precision increment is $\rho_1-\rho_0=\delta\chi a^2$. The case $\underline\delta=1$ corresponds to automatic disclosure.

Set the feasibility state constant so that the example isolates information production, disclosure, and implementation. Let $m:=\mathbb E[\theta]>0$. The risk-neutral agent's expected payoff is
\[
\mathbb E[T]
-
\frac{\kappa}{2}a^2
-
q(\rho_1-\rho_0),
\]
where $\kappa,q>0$. The principal's expected payoff is
\[
ra+b(\rho_1-\rho_0)-\mathbb E[T],
\]
where $r,b>0$ and $q>b$. Let $\bar U$ denote the agent's reservation payoff.

For any target $(a,\delta)$, consider the public-signal-contingent transfer
\[
T=w+sY_1,
\qquad
s
=
\frac{(\kappa+2q\delta\chi)a}{\sqrt{\delta}\,m}.
\]
If the agent chooses an alternative action $\widetilde a$, expected utility is
\[
w
+
s\sqrt{\delta}\,m\widetilde a
-
\frac{\kappa}{2}\widetilde a^2
-
q\delta\chi\widetilde a^2.
\]
Its derivative is $s\sqrt{\delta}\,m-(\kappa+2q\delta\chi)\widetilde a$, and its second derivative is $-(\kappa+2q\delta\chi)<0$. By the choice of $s$, the unique maximizer is $\widetilde a=a$.

Choosing $w$ so that participation binds gives
\[
\mathbb E[T]
=
\bar U
+
\frac{\kappa}{2}a^2
+
q\delta\chi a^2.
\]
The principal's maximal implementable value under informativeness $\chi$ is therefore
\begin{equation}
\label{eq:optimized_value_informativeness}
J(\chi)
=
\max_{\substack{a\in[0,\bar a]\\ \delta\in[\underline\delta,1]}}
\left\{
ra
-
\frac{\kappa}{2}a^2
-
(q-b)\delta\chi a^2
\right\}
-
\bar U.
\end{equation}

Because $q>b$, the objective is strictly decreasing in $\delta$ whenever $a>0$, so $\delta^*(\chi)=\underline\delta$. Hence
\[
J(\chi)
=
\max_{a\in[0,\bar a]}
\left\{
ra
-
\frac{\kappa}{2}a^2
-
(q-b)\underline\delta\chi a^2
\right\}
-
\bar U.
\]
The objective is strictly concave in $a$. Choose $\bar a$ sufficiently large that the maximizing action is interior for both technologies considered below. The first-order condition therefore gives
\[
a^*(\chi)
=
\frac{r}
{\kappa+2(q-b)\underline\delta\chi}.
\]
Substitution yields
\begin{equation}
\label{eq:optimized_value_informativeness_reversal}
J(\chi)
=
\frac{r^2}
{2\bigl[\kappa+2(q-b)\underline\delta\chi\bigr]}
-
\bar U.
\end{equation}
Because $q>b$ and $\underline\delta>0$, this value is strictly decreasing in $\chi$.

Choose $\chi_H>\chi_L$. Conditional on every fixed action path, the Gaussian raw-signal experiment under $\chi_H$ has a higher signal-to-noise ratio and Blackwell dominates the experiment under $\chi_L$, so $Q^H\succeq_B Q^L$. Nevertheless, \eqref{eq:optimized_value_informativeness_reversal} implies $J(\chi_H)
< J(\chi_L).$
Thus the more informative technology strictly reduces the principal's maximal implementable continuation value.

\qed

\subsection{Proof of Proposition~\ref{prop:replicability_informational_monotonicity}}
\label{app:proof_gaussian_replication_monotonicity}

Consider any implementable mechanism under $Q^L$, with disclosure process $(\delta_s^L)$ and induced action process $(a_s)$. Under $Q^H$, define
\[
\delta_s^H
:=
\delta_s^L\frac{\chi_L}{\chi_H}.
\]
Since $\chi_H>\chi_L$ and $\delta_s^L\in[0,1]$, we have $0\leq\delta_s^H\leq\delta_s^L\leq1$. Moreover, if $\delta_s^L$ is adapted to the public history, then $\delta_s^H$ is adapted as well.

By construction,
\[
\delta_s^H\chi_Ha_s^2
=
\delta_s^L\chi_La_s^2.
\]
Thus the effective public signal-to-noise ratio, and hence the posterior drift and innovation law under Gaussian filtering, are identical under the two technologies for the same action path. The two mechanisms therefore induce the same public posterior process in distribution.

Retain the same action, transfer, and continuation rules under $Q^H$, replacing only $\delta^L$ by $\delta^H$. Because payoffs and implementation constraints depend on the technology only through the induced public posterior process, the agent faces the same continuation incentives and obtains the same payoff distribution. The mechanism remains implementable and induces the same joint distribution of actions, posterior beliefs, feasibility states, transfers, and payoff-relevant outcomes.

Every implementable mechanism under $Q^L$ is therefore reproducible under $Q^H$. Feasible-set inclusion implies
\[
J(t,\mu,x;Q^H)
\geq
J(t,\mu,x;Q^L).
\]

\qed

\subsection{Proof of Theorem~\ref{theorem:general_disclosure}}
\label{app:proof_general_disclosure}

Fix almost every date $t$ along an optimal implementable continuation path and condition on the current state $(\mu_t,x_t)$ and prescribed action $a_t$. For each $\delta\in\mathcal D$, the set $\Gamma^I(t,\mu_t,x_t;a_t,\delta)$ is nonempty, and the supremum defining $\mathfrak H^I(t,\mu_t,x_t;a_t,\delta)$ is attained.

By the dynamic programming principle, the disclosure intensity and the associated implementability-preserving transfer and continuation choices must maximize the reduced implementable disclosure Hamiltonian. Otherwise, there would exist some $\widehat\delta\in\mathcal D$ and an associated continuation choice in $\Gamma^I(t,\mu_t,x_t;a_t,\widehat\delta)$ such that
\[
\mathfrak H^I(t,\mu_t,x_t;a_t,\widehat\delta)
>
\mathfrak H^I(t,\mu_t,x_t;a_t,\delta_t^*).
\]
Using $\widehat\delta$ and its associated continuation choice would yield an admissible deviation that preserves implementation of the prescribed action while increasing the principal's continuation value, contradicting optimality. Therefore,
\[
\delta_t^*
\in
\arg\max_{\delta\in\mathcal D}
\mathfrak H^I(t,\mu_t,x_t;a_t,\delta).
\]

If $\delta_t^*\in\operatorname{int}\mathcal D$, differentiability of the reduced implementable disclosure Hamiltonian gives
\[
\left.
\frac{\partial}{\partial\delta}
\mathfrak H^I(t,\mu_t,x_t;a_t,\delta)
\right|_{\delta=\delta_t^*}
=
0.
\]
This establishes \eqref{eq:optimal_delta_general_revised} and \eqref{eq:delta_FOC_revised}.

\qed

\subsection{Proof of Proposition~\ref{thm:bang_bang_general}}
\label{app:proof_bang_bang_general}

By Theorem~\ref{theorem:general_disclosure}, optimal disclosure maximizes the reduced implementable disclosure Hamiltonian over $\delta\in[0,1]$. Under the affine representation \eqref{eq:affine_representation},
\[
\mathfrak H^I(t,\mu_t,x_t;a_t,\delta)
=
\mathfrak H^I(t,\mu_t,x_t;a_t,0)
+
\delta\Delta^{I,a_t}\mathfrak H(t,\mu_t,x_t).
\]
The first term is independent of $\delta$. If $\Delta^{I,a_t}\mathfrak H(t,\mu_t,x_t)>0$, the reduced Hamiltonian is strictly increasing in $\delta$, so the unique maximizer is $\delta_t^*=1$. If $\Delta^{I,a_t}\mathfrak H(t,\mu_t,x_t)<0$, it is strictly decreasing, so the unique maximizer is $\delta_t^*=0$. If $\Delta^{I,a_t}\mathfrak H(t,\mu_t,x_t)=0$, it is constant in $\delta$, and every $\delta\in[0,1]$ is optimal. This establishes \eqref{eq:bangbang_rule_revised}.

\qed

\subsection{Proof of Corollary~\ref{cor:bang_bang_gaussian_revised}}
\label{app:proof_bang_bang_gaussian}

Under the linear--Gaussian information-production technology, posterior precision evolves according to $\dot\rho_t =\delta_t\frac{a_t^2}{\sigma^2}$.
Hence the generator contribution from moving from nondisclosure to full disclosure is
\[
\Delta^{a_t}J(t,\rho_t,x_t)
=
\frac{a_t^2}{\sigma^2}
J_\rho(t,\rho_t,x_t).
\]
By the compensated-envelope identity \eqref{eq:Delta_gaussian_revised},
\[
\Delta^{I,a_t}\mathfrak H(t,\rho_t,x_t)
=
\frac{a_t^2}{\sigma^2}
J_\rho(t,\rho_t,x_t).
\]
Since $a_t>0$ and $\sigma>0$, the coefficient $a_t^2/\sigma^2$ is strictly positive. Therefore, the switching term $\Delta^{I,a_t}\mathfrak H$ and $J_\rho$ have the same sign. Proposition~\ref{thm:bang_bang_general} then yields \eqref{eq:bang_bang_gaussian_revised}.

\qed

\subsection{Proof of Theorem~\ref{thm:two_phase_disclosure}}
\label{app:proof_two_phase}

By continuity and the endpoint conditions in part~(v), the intermediate value theorem implies that there exists at least one $t^*\in(0,\bar T)$ such that $J_\rho(t^*,\rho_{t^*},x_{t^*})=0$. Since $dJ_\rho(t,\rho_t,x_t)/dt>0$ for all $t\in(0,\bar T)$, the composite net marginal-value path is strictly increasing. Hence the zero is unique, with $J_\rho(t,\rho_t,x_t)<0$ for $t<t^*$ and $J_\rho(t,\rho_t,x_t)>0$ for $t>t^*$.

By Corollary~\ref{cor:bang_bang_gaussian_revised}, an optimal disclosure process can therefore be selected to satisfy
\begin{equation}
\label{eq:two_phase_disclosure_rule}
\delta_t^*
=
\begin{cases}
0, & t<t^*,\\[0.4em]
\text{any value in }[0,1], & t=t^*,\\[0.4em]
1, & t>t^*.
\end{cases}
\end{equation}
The first and third cases hold for almost every date in the corresponding intervals.

\qed

\subsection{Proof of Corollary~\ref{cor:positive_value_disclosure_failure}}
\label{app:proof_positive_value_disclosure_failure}

Theorem~\ref{thm:two_phase_disclosure} implies that $J_\rho(t,\rho_t,x_t)<0$ for every $t<t^*$ and that $\delta_t^*=0$ for almost every $t<t^*$. Combining the decomposition $J_\rho=B_\rho-K_\rho$ with \eqref{eq:positive_direct_information_value} yields $0<B_\rho(t,\rho_t,x_t)<K_\rho(t,\rho_t,x_t)$ for every $t<t^*$. Thus the direct informational value is strictly positive throughout the initial opacity phase, while optimal disclosure is zero almost everywhere on that interval. The positive-value disclosure principle therefore fails.

\qed

\subsection{Proof of Proposition~\ref{prop:comparative_statics_revised}}
\label{app:proof_comparative_statics}

By definition, $M(t,\lambda)=J_\rho\bigl(t,\rho_t(\lambda),x_t(\lambda);\lambda\bigr)$, and the cutoff satisfies $M(t^*,\lambda)=0$. Condition~(iii) states that $M$ is continuously differentiable near $(t^*,\lambda)$ and that $M_t(t^*,\lambda)>0$.

The implicit function theorem therefore implies that $t^*$ is locally differentiable in $\lambda$ and
\[
\frac{dt^*}{d\lambda}
=
-
\frac{M_\lambda(t^*,\lambda)}
{M_t(t^*,\lambda)}.
\]
Since the denominator is positive, $dt^*/d\lambda<0$ if and only if $M_\lambda(t^*,\lambda)>0$. This establishes \eqref{eq:dtstar_general_revised}.

\qed

\subsection{Proof of Proposition~\ref{prop:signal_informativeness}}
\label{app:proof_signal_informativeness}

For signal informativeness $\chi$, define $M(t,\chi)=J_\rho\bigl(t,\rho_t(\chi),x_t(\chi);\chi\bigr)$. Since the cutoff satisfies $M(t^*,\chi)=0$ and condition~(iii) gives $M_t(t^*,\chi)>0$, Proposition~\ref{prop:comparative_statics_revised} yields
\[
\frac{dt^*}{d\chi}
=
-
\frac{M_\chi(t^*,\chi)}
{M_t(t^*,\chi)}.
\]
Hence $dt^*/d\chi<0$ if and only if $M_\chi(t^*,\chi)>0$, proving \eqref{eq:chi_cutoff_sign_revised}.

Along the endogenous optimal state path, $J_\rho=B_\rho-K_\rho$. Taking the total derivative with respect to $\chi$ gives
\[
\begin{aligned}
M_\chi(t,\chi)
={}&
\frac{d}{d\chi}
B_\rho\bigl(t,\rho_t(\chi),x_t(\chi);\chi\bigr)
\\
&-
\frac{d}{d\chi}
K_\rho\bigl(t,\rho_t(\chi),x_t(\chi);\chi\bigr).
\end{aligned}
\]
Therefore, $dt^*/d\chi<0$ if and only if
\[
\frac{d}{d\chi}
B_\rho\bigl(t^*,\rho_{t^*}(\chi),x_{t^*}(\chi);\chi\bigr)
>
\frac{d}{d\chi}
K_\rho\bigl(t^*,\rho_{t^*}(\chi),x_{t^*}(\chi);\chi\bigr),
\]
which establishes \eqref{eq:chi_benefit_cost_condition_revised}.

\qed

\subsection{Proof of Proposition~\ref{prop:effort_cost_curvature_revised}}
\label{app:proof_effort_cost_curvature}

For the action-cost parameter $\kappa$, define $M(t,\kappa)=J_\rho\bigl(t,\rho_t(\kappa),x_t(\kappa);\kappa\bigr)$. Since $M(t^*,\kappa)=0$ and condition~(iii) gives $M_t(t^*,\kappa)>0$, Proposition~\ref{prop:comparative_statics_revised} yields
\[
\frac{dt^*}{d\kappa}
=
-
\frac{M_\kappa(t^*,\kappa)}
{M_t(t^*,\kappa)}.
\]

Using $J_\rho=B_\rho-K_\rho$ along the endogenous optimal path,
\[
\begin{aligned}
M_\kappa(t^*,\kappa)
={}&
\frac{d}{d\kappa}
B_\rho\bigl(t^*,\rho_{t^*}(\kappa),x_{t^*}(\kappa);\kappa\bigr)
\\
&-
\frac{d}{d\kappa}
K_\rho\bigl(t^*,\rho_{t^*}(\kappa),x_{t^*}(\kappa);\kappa\bigr).
\end{aligned}
\]
The stated component-ordering condition makes the first term weakly negative and the second weakly positive, with at least one inequality strict. Hence $M_\kappa(t^*,\kappa)<0$. Since $M_t(t^*,\kappa)>0$, it follows that $dt^*/d\kappa>0$. This establishes \eqref{eq:dtstar_kappa_revised}.

\qed

\subsection{Proof of Proposition~\ref{prop:risk_aversion}}
\label{app:proof_risk_aversion}

Along the respective optimal implementable continuation paths, the general marginal-value decomposition gives
\[
J_\rho^{RA}
=
\mathcal B_\rho^{RA}
-
\mathcal C_\rho^{RA},
\qquad
J_\rho^{RN}
=
\mathcal B_\rho^{RN}
-
\mathcal C_\rho^{RN},
\]
with each term evaluated at its corresponding state $(t,\rho_t^{RA},x_t^{RA})$ or $(t,\rho_t^{RN},x_t^{RN})$. Condition~(iii) gives
\[
\mathcal B_\rho^{RA}
\leq
\mathcal B_\rho^{RN},
\qquad
\mathcal C_\rho^{RA}
\geq
\mathcal C_\rho^{RN}.
\]
Therefore,
\[
J_\rho^{RA}(t,\rho_t^{RA},x_t^{RA})
\leq
J_\rho^{RN}(t,\rho_t^{RN},x_t^{RN})
\]
for every $t\in[0,\bar T]$.

At the risk-neutral cutoff,
\[
J_\rho^{RN}(t_{RN}^*,\rho_{t_{RN}^*}^{RN},x_{t_{RN}^*}^{RN})=0.
\]
The preceding ordering therefore implies that the risk-averse marginal value at the same date is weakly negative. Since the risk-averse composite marginal-value path is strictly increasing and crosses zero uniquely at $t_{RA}^*$, its zero cannot occur before $t_{RN}^*$. Hence $t_{RA}^*\geq t_{RN}^*$.

\qed

\subsection{Proof of Proposition~\ref{prop:composite_states_recursive}}
\label{app:proof_composite_states_recursive}

At a reporting stage, Bayes' rule applied to the common prior $F_0$ and the pre-report public history yields the public posterior $\mu_{t^-}\in\Delta(\Omega\times\Theta)$. Conditional on the privately observed signal $\xi$, the agent combines $\xi$ with $\mu_{t^-}$ to obtain the interim posterior $\nu^\xi(\cdot\mid\mu_{t^-})$ and the corresponding interim expected utilities. The current report is then incorporated into the public history according to the report-contingent mechanism and equilibrium reporting strategy, and subsequent action-generated signals enter the continuation public posterior through the disclosure rule and Bayesian updating. Thus the private signal is used only once in forming the agent's interim belief, while the report determines the continuation branch of the public process.

By the controlled Markov public-state assumption, conditional on the relevant public state $(\mu,x)$, the agent's private signal $\xi$, and current controls, future payoff-relevant distributions depend on past histories only through $(\xi,\mu,x)$. Hence the agent's continuation problem admits a private-information-indexed recursive representation. Applying the dynamic programming principle along the truthful report-contingent continuation branch yields the continuation equation and terminal condition in part~(i).

For action incentive compatibility, fix $\xi$ and the truthful continuation branch and consider an alternative action $\tilde a$. The deviation changes the current action cost, the feasibility drift, and the law of the continuation public posterior generated by future action-dependent signals and disclosure. Pointwise optimality therefore gives the action condition in part~(ii).

Because $\xi$ is privately observed, implementability also requires truthful reporting. By definition, $U(t,\xi;\hat\xi,\mu,x)$ is the maximal continuation payoff of an agent with true signal $\xi$ who reports $\hat\xi$ at the current public state and subsequently chooses an optimal admissible action strategy under the transfer, disclosure, and continuation rules assigned to that report. Hence the inequalities in part~(iii) compare truthful behavior with every false report followed by an optimal subsequent action deviation and therefore rule out joint report-and-action deviations.

Conversely, suppose there exists a family of continuation values satisfying the continuation equations, terminal conditions, action incentive conditions, and truthful-reporting inequalities. The action incentive conditions rule out profitable action deviations along each truthful continuation branch. The truthful-reporting inequalities compare the truthful continuation payoff with the maximal payoff attainable after every false report, including all subsequent admissible action deviations. Therefore no deviation in reports, actions, or both is profitable, and the direct report-contingent mechanism is implementable.

Finally, the principal conditions continuation rules on the public posterior and feasibility state rather than on the unobserved primitive state. The principal's continuation problem therefore remains recursive in the public state $(\mu,x)$. Conditional on the report-contingent action profile induced by the truthful mechanism, disclosure changes the continuation public-posterior dynamics and may also change the continuation choices required to preserve implementation. Accordingly, the reduced implementable disclosure characterization of Subsection~\ref{subsec:optimal_disclosure_general} applies with the composite-state posterior generator and the report-contingent action profile. Thus the recursive disclosure logic is preserved in the composite-state environment.

\qed

\subsection{Proof of Proposition~\ref{prop:limited_commitment}}
\label{app:proof_limited_commitment}

For each commitment regime $r\in\{LC,FC\}$, the marginal-value decomposition gives $M^r(t)=B_\rho^r(t)-K_\rho^r(t)$. By \eqref{eq:limited_commitment_component_ordering},
\[
B_\rho^{LC}(t)\leq B_\rho^{FC}(t),
\qquad
K_\rho^{LC}(t)\geq K_\rho^{FC}(t),
\]
so $M^{LC}(t)\leq M^{FC}(t)$ for all $t\in[0,\bar T]$, establishing \eqref{eq:limited_commitment_marginal_value_ordering}.

When both marginal-value paths are continuous, strictly increasing, and cross zero uniquely, $M^{FC}(t_{FC}^*)=0$ implies $M^{LC}(t_{FC}^*)\leq0$. Since $M^{LC}$ crosses zero uniquely at $t_{LC}^*$, it follows that $t_{LC}^*\geq t_{FC}^*$.

\qed

\bibliographystyle{plainnat}

\pagebreak
\bibliography{Endogenous-information-references}

\end{document}